\documentclass[11pt]{article}
\usepackage{amsmath,amssymb,amsthm}
\usepackage{thm-restate}
\usepackage{hyperref}
\usepackage[capitalise]{cleveref}
\crefalias{subsection}{section}
\crefalias{subsubsection}{section}

\usepackage[framemethod=default]{mdframed}
\usepackage{xspace}

\usepackage{wrapfig}
\usepackage{dsfont}
\usepackage{mathtools}
\usepackage{nicefrac}
\usepackage{xcolor}
\usepackage{geometry}
\usepackage{tikz}

\usetikzlibrary{backgrounds}

\usepackage{listings}
\usepackage{float}
\floatname{algoflt}{Protocol}
\floatstyle{ruled}
\newfloat{algoflt}{tbp}{loa}
\makeatletter
\lstnewenvironment{algorithm}[2][H]{\lstset{mathescape=true,tabsize=2,numbers=none,basicstyle=\ttfamily\footnotesize,columns=fullflexible,emph={if,else},
    emphstyle=\bfseries,
    }\setlength{\topsep}{0pt}\setlength{\parindent}{0pt}\@skiphyperreftrue\algoflt[#1]\caption{#2}}{\endalgoflt\@skiphyperreffalse}
\makeatother

\usepackage{subcaption}
\usepackage{graphicx}
\hypersetup{hidelinks}

\newtheorem{theorem}{Theorem}
\newtheorem{lemma}[theorem]{Lemma}
\newtheorem{proposition}[theorem]{Proposition}
\newtheorem{definition}[theorem]{Definition}
\newtheorem{claim}[theorem]{Claim}

\newcommand{\pr}[1]{\ensuremath{\mathbf{Pr}\left[#1\right]}}
\newcommand{\E}[1]{\ensuremath{\mathbf{E}\left[#1\right]}}
\newcommand{\ddw}{\ensuremath{\texttt{VOG}}\xspace}
\newcommand{\dist}[2]{\ensuremath{\operatorname{dist}_G}\left(#1,#2\right)}
\newcommand{\EdgeVOG}{\texttt{EdgeVOG}\xspace}
\newcommand{\ato}[1]{\uparrow #1}
\newcommand{\Est}[1]{\underset{\scriptscriptstyle \Vec{S}(t) = \Vec{s}}{\mathbf{E}}\left[#1\right]}

\usepackage{enumitem}
\newcommand{\growingmid}{\mathrel{}\middle|\mathrel{}}
\newcommand{\tO}[1]{\ensuremath{\widetilde{O}\left(#1\right)}\xspace}

\title{\textsc{\LARGE The Voter Model with Opinion Graph}}

\usepackage{authblk}
\author[1]{Petra \textsc{Berenbrink}}
\author[1]{Thorsten \textsc{Götte}}
\author[2]{Martin \textsc{Hoefer}}
\author[2]{Tim \textsc{Koglin}}
 \affil[1]{\small
 University of Hamburg, Germany \authorcr \href{mailto:petra.berenbrink@uni-hamburg.de}{\emph{petra.berenbrink@uni-hamburg.de}} \hspace{1cm} \href{mailto:thorsten.goette@uni-hamburg.de}{\emph{thorsten.goette@uni-hamburg.de} } \authorcr \phantom{aaa}}
 \affil[2]{\small
 RWTH Aachen University, Germany \authorcr
 \href{mailto:mhoefer@cs.rwth-aachen.de}{\emph{ mhoefer@cs.rwth-aachen.de}} \hspace{1cm}
  \href{mailto:koglin@algo.rwth-aachen.de}{\emph{koglin@algo.rwth-aachen.de}}
}

\date{\today}

\usepackage{libertine}
\usepackage[scaled=0.95]{inconsolata}

\usepackage[
    backend     = bibtex,
    doi         = true,
    giveninits  = true,
    isbn        = false,
    sortcites   = true,
    style       = numeric,
    url         = false,
    maxnames = 99,
    minnames = 99,
]{biblatex}

\AtEveryBibitem{
    \clearfield{month}
    \clearlist{location}
    \clearlist{publisher}
    \clearname{editor}
}

\begin{document}

\maketitle

\thispagestyle{empty}

\begin{abstract}
We study the \emph{voter model with opinion graph}, a generalization of the standard (pull) voter model that adds a notion of bounded confidence.
In this process, we consider a finite set $A$ of $n$ agents.
Each agent $u \in A$ has an opinion $x_u \in O$, where we model opinions as vertices in an undirected opinion graph $G = (O,E)$ with $|O|=q$. In each step, an agent $u$ interacts with another agent $v \in A$.
In contrast to classical voting, $u$ adopts the opinion $x_v$ if and only if $x_v$ and $x_u$ are neighbors in $G$.
The process is stable once all remaining opinions are an independent set in $G$.
We analyze two variants of the process: the ordinary EdgeVOG process, which samples a pair of interacting agents uniformly at random from all possible pairs, and the productive EdgeVOG process, which samples only pairs with neighboring opinions in $G$.

We show the following:
\begin{enumerate}
\item For every opinion graph and every initial configuration, ordinary EdgeVOG stabilizes within $O(n^2\log n)$ steps in expectation and with high probability. The expectation bound is asymptotically tight. For the productive EdgeVOG process, the expected stabilization time is exactly the initial number of compatible agent pairs, while $O(n^2\log n)$ steps suffice with high probability.
\item  We show that an opinion $i$ with initial support $s_i$ survives with probability at least
$p_i:=s_i/(s_i+\sum_{j\in N(i)}s_j)$.
Further, let $\mu = \sum_{i \in O} p_i$, then we show the probability that less than $(1-\delta)\mu$ opinions survive is at most $\exp(-\delta^2\mu/(2(\Delta^2+1)))$ for every $0\leq\delta\leq1$.
\item If $G$ has maximum degree $\Delta$ and every opinion initially has support at most $\hat{s}$, then no opinion ever acquires more than $\tilde{O}(\min\{n,\Delta^6\hat{s}\})$ agents with high probability. This yields near-linear stabilization bounds for productive EdgeVOG on bounded-degree opinion graphs with well-spread initial configurations.
\end{enumerate}

\end{abstract}

\vfill

\paragraph{Acknowledgements}  We used ChatGPT 5.6 Sol and 6 Astra to assist with improving clarity, typesetting in \LaTeX, finding related work, as well as checking and simplifying the proofs. The authors verified the correctness and originality of all content, including references.

\clearpage
\thispagestyle{empty}
\tableofcontents

\clearpage
\setcounter{page}{1}

\section{
Introduction}

\emph{Opinion dynamics} are a model to study how the views held by a population evolve through social interaction. They have been investigated extensively in physics, sociology, and artificial intelligence, both analytically and experimentally~\cite{Lorenz07,noorazar2020recent,flache2017models,sobkowicz2009modelling}. Beyond their role in modeling social behavior, opinion dynamics also provide a rich setting for the design and analysis of distributed information spreading algorithms, a central topic in theoretical computer science and applied mathematics.
Models originating in physics and sociology typically represent opinions as continuous quantities and often assume that agents update their views by averaging the opinions of their neighbors. Prominent examples include the French--DeGroot~\cite{Degroot74,French1956}, Friedkin--Johnsen~\cite{FriedkinJ90}, Hegselmann--Krause~\cite{HegselmannK02}, and Deffuant--Weisbuch~\cite{DeffuantNAW00} models. By contrast, models studied in distributed computing and applied mathematics usually assume discrete opinions and randomized pairwise interactions. Examples include the standard voter model~\cite{BerenbrinkGKM16,CooperEOR13,HassinP01,NakataIY00}, $k$-majority dynamics~\cite{BecchettiCNPST14,BerenbrinkCEKMN17,GhaffariL18}, and several generalized or hybrid variants~\cite{CooperR16,doi:10.1137/16M1105979,DBLP:conf/icalp/CooperDFR16,DBLP:conf/waw/CooperKR23,DBLP:conf/waw/CooperKRV24}.
A central feature of many continuous models, but one that is largely absent from discrete opinion dynamics, is the notion of so-called \emph{bounded confidence}. Under this principle, agents are influenced only by opinions that are sufficiently close to their own and disregard distant opinions. This assumption reflects many real social interactions, in which not all viewpoints are regarded as equally persuasive. Voters, for example, may discount arguments associated with opposing ideological camps, while citizens may reconsider a policy position only when it is presented within an acceptable range of their prior beliefs~\cite{enelow1984spatial}.

In this paper, we study a discrete pull-voter process with bounded confidence. There are $n$ agents and a set $O$ of $q$ possible opinions. The opinions form the vertex set of an undirected \emph{opinion graph} $G=(O,E)$, whose edges encode which opinions are compatible. At each step, we pick a pair of agents $(u,v)$ uniformly at random. If the current opinions of $u$ and $v$ are adjacent in $G$, then $u$ adopts $v$'s opinion; otherwise, the configuration remains unchanged. We refer to this process as the voter model with opinion graph. We consider two variants of the process. In the \emph{ordinary} \EdgeVOG process, the ordered pair is sampled uniformly from all pairs of distinct agents. This process may contain many idle steps, since agents with equal or incompatible opinions do not cause an update. We therefore also consider the \emph{productive} \EdgeVOG process, in which the interaction pair is sampled uniformly from the pairs whose current opinions are adjacent in $G$. We call these the productive steps.

Our goal is to understand how the structure of the opinion graph affects the behavior of the process. In particular, we study how long the process takes to stabilize, how many opinions survive, and how large the support of an individual opinion can become. When the opinion graph is complete, our model reduces to the classical voter model. In this case, the expected stabilization time is $O(n^2)$, exactly one opinion survives and therefore eventually has support $n$, and a fixed opinion $i$ survives with probability $s_i/n$, where $s_i$ is its initial support.
These properties can change substantially when the opinion graph is sparse and the agents are well spread over the opinions\footnote{Note that one can always place $\Theta(n)$ agents on two neighboring opinions to enforce a large stabilization time and a large maximal support. Thus, the impact of the opinion  graph $G$ is arguably most significant if the agents are uniformly distributed over the opinions.}. Each opinion can then exchange agents with only a few other opinions. Under a uniform initial distribution, there are consequently fewer compatible pairs of agents, and an opinion becomes stable as soon as its few neighboring opinions disappear. One would therefore expect sparse opinion graphs to require fewer productive steps, preserve more opinions, and prevent individual opinions from acquiring very large support. Our results make these intuitions precise and quantify all three effects in terms of the opinion graph and the initial distribution.

\begin{enumerate}[noitemsep,topsep=0pt,leftmargin=\parindent,align=left,labelwidth=\parindent,labelsep=0pt]
\item \emph{Stabilization time.} For the productive \EdgeVOG process, we show that the expected stabilization time is \textbf{exactly} the initial number of pairs of agents whose opinions are adjacent. In particular, for a fixed distribution of agents to opinions, deleting edges from the opinion graph can only decrease the expected number of productive steps.
Surprisingly, however, there are instances with only $n-1$ such pairs that take $\Omega(n^2)$ steps with probability $\Omega(\tfrac{1}{n})$. Thus, the stabilization time does not need to be concentrated around its expectation. To obtain high-probability bounds despite this heavy tail, we show that the stabilization time is instead controlled by the largest support attained by any opinion. This connection allows us to derive refined stabilization bounds from our bounds on the maximum support and the graph's degree. For the ordinary process, which includes idle steps, we show that stabilization takes $O(n^2\log n)$ steps in expectation and with high probability on every opinion graph. The expectation bound is asymptotically tight, even for opinion graphs of maximum degree one. Thus, while sparsity reduces the number of productive steps, it need not reduce the number of ordinary steps, since the waiting time between productive interactions may dominate.
The stabilization analysis is driven by the right choice of potential: the number of pairs of agents holding adjacent opinions. We show that this quantity decreases by one in expectation whenever agents with neighboring opinions interact. Using this insight, the stabilization analysis of both processes is relatively direct.

\item \emph{Surviving opinions.} Next, we show that an opinion $i$ with initial support $s_i$ survives with probability at least
$\nicefrac{s_i}{s_i+\sum_{j\in N_i}s_j}$.
This tightens and generalizes the bound of $s_i/n$ from the classical voter model. More importantly, however, we prove that many opinions survive simultaneously when the graph is sparse. Let $\mu:=\sum_{i\in O}\nicefrac{s_i}{s_i+\sum_{j\in N_i}s_j}$ denote the sum of the individual survival guarantees. Then the probability that fewer than any fixed constant fraction of $\mu$ opinions survive is exponentially small in $\mu/\Delta^2$. For graph classes such as trees and grids, this bound can be improved even further.
The survival probability of a fixed opinion follows from observing that the ratio between an opinion's support and the total support of its closed neighborhood increases in expectation and then applying standard martingale results. Proving that many opinions survive simultaneously is significantly more difficult because the survival events are not independent. The key observation is that the extinction events of opinions at distance at least three are negatively correlated (in a specific way). This allows us to prove a variant of the Chernoff bound tailored to this dependence structure.

\item\emph{Maximum support.} Finally, we show that sparse opinion graphs prevent individual opinions from becoming dominant. If the opinion graph has maximum degree $\Delta$ and every opinion initially has support at most $\hat s$, then, with high probability, no opinion ever acquires more than $\widetilde O(\Delta^6\hat s)$ agents. In particular, for constant $\Delta$ and a uniform initial distribution, every opinion has support at most $\widetilde O(n/q)$ throughout the process. This contrasts with the classical voter model, in which one opinion eventually acquires all $n$ agents. Together with our connection between maximum support and stabilization time, this also yields a stabilization time of $\widetilde O(n^2/q)$ with high probability.
This is the most technically involved part of the paper. For the proof, we view the agents as tokens moving along the opinion graph. If an opinion acquires large support, then many tokens must have reached it along paths through other opinions. Tracing these paths backwards produces a witness structure in which many opinions must remain undecided while participating in many productive interactions. We show that large such structures are unlikely. Before giving the general argument, we illustrate its main ideas on cycles with uniform initial support, where we bound the maximum support by $O(\log n)$.
\end{enumerate}

The underlying interaction rule has appeared in earlier work. Lanchier and Scarlatos~\cite{LanchierScarlatos2017} focus on an infinite line of agents, where each agent interacts only with its two neighbors. They determine when every agent continues to change opinion indefinitely, and when instead every agent eventually stops changing. Hardin and Lanchier~\cite{HardinLanchier2022} consider arbitrary finite communication graphs, but their main results concern the probability of complete consensus, i.e., that exactly one opinion survives. Öztürk~\cite{DBLP:journals/advcs/Ozturk13} restricts the opinion graph to powers of a path; for this graph class, he derives the exact expected number of steps. In contrast, our goal is to understand systematically how the structure of an arbitrary opinion graph affects the stabilization time, the number of surviving opinions, and the support attained by individual opinions.

\paragraph{Structure of This Paper.}
In the remainder of this section we first present the related work.
Then, in \Cref{sec:preliminaries}, we formally introduce the required notation and preliminaries.
We then formally state the process, main results, and a sketch of our proof techniques in \Cref{sec:results}.
In \Cref{sec:proofs}, we present the full proofs of our theorems.
Finally, we present tools and folklore we use in the analysis (\Cref{apx:tools}) and show how our techniques can be adapted to other variants of the problem (\Cref{apx:voting}).
 \newpage
\subsection{Related Work}
\label{sec:relatedwork}

\paragraph{The Standard Voter Model:}
\begin{wrapfigure}{r}{0.4\textwidth}
\centering
    \begin{tikzpicture}[scale=0.73]

    \tikzset{
        grid node/.style={
            circle,
            draw,
            thick,
            fill=white,
            minimum size=5.5mm,
            inner sep=0pt
        },
        local node/.style={
            grid node,
            fill=blue!18
        },
        center node/.style={
            grid node,
            fill=blue!70
        },
        grid edge/.style={
            thick,
            gray!65
        },
        neighborhood/.style={
            draw=blue!70!black,
            thick,
            rounded corners=3pt,
            fill=blue!8
        }
    }

    \draw[neighborhood]
        (-0.45,2)
        -- (2,-0.45)
        -- (4.45,2)
        -- (2,4.45)
        -- cycle;

    \foreach \x in {0,...,3} {
        \foreach \y in {0,...,4} {
            \pgfmathtruncatemacro{\nextx}{\x+1}
            \draw[grid edge] (\x,\y) -- (\nextx,\y);
        }
    }

    \foreach \x in {0,...,4} {
        \foreach \y in {0,...,3} {
            \pgfmathtruncatemacro{\nexty}{\y+1}
            \draw[grid edge] (\x,\y) -- (\x,\nexty);
        }
    }

    \foreach \x in {0,...,4} {
        \foreach \y in {0,...,4} {
            \pgfmathtruncatemacro{\dist}
                {abs(\x-2)+abs(\y-2)}

            \ifnum\dist=0
                \node[center node] at (\x,\y) {};
            \else
                \ifnum\dist<3
                    \node[local node] at (\x,\y) {};
                \else
                    \node[grid node] at (\x,\y) {};
                \fi
            \fi
        }
    }

\end{tikzpicture}
\caption{Opinion graph for the two-dimensional DW process with $\tau=2$, i.e., a grid graph with connections to all nodes in distance $2$ (light blue). }
\end{wrapfigure}
Our process can be regarded as a generalization of the well-known discrete voter (pull) process.
The voter process was first rigorously studied in~\cite{Welsh} in a continuous time model.
In~\cite{DBLP:conf/cocoon/NakataIY99,HassinP01}, the voter process is studied in a discrete-time and parallel setting, where in each round every agent adopts the opinion of a randomly chosen neighbor.
Initially, there are two opinions $A$ and $B$, and it is shown that all agents eventually agree on one of the two opinions.
In \cite{HassinP01} the convergence time of the voter process with two initial opinions is expressed in terms of the meeting time of two independent discrete-time random walks, yielding a worst-case bound of $O(n^3\log n)$.
In addition, there are works that tie the convergence time directly to graph parameters.
\cite{CooperEOR13} considers an arbitrary number of opinions and derive an upper bound of $\tilde{O}(\tfrac{1}{(1-\lambda_2)}\cdot\tfrac{n}{\nu})$, where the parameter $\nu$ measures the regularity of the degree sequence, ranging from $1$ for regular graphs to $n$ for the star graph.
\cite{BerenbrinkGKM16} proves an upper bound of $O(m/(d_{\min}\cdot \Phi))$, where $m$ is the number of edges, $d_{\min}$ is the minimal degree, and $\Phi$ the conductance of the graph.

\paragraph{The Deffuant-Weisbuch Model:}
Moreover, our process is related to continuous opinion dynamics with bounded confidence.
In particular, our model is partly inspired by the Deffuant-Weisbuch (DW) process~\cite{DeffuantNAW00,WeisbuchDAN03} with bounded confidence.
In this model, we consider a set $A$ of $n$ agents, where each agent $u \in A$ holds an opinion $x_u(t) \in [0,1]$ at time $t \geq 0$.
In each step, an unordered pair of agents $\{u,v\}$ is selected uniformly at random.
If $|x_u(t)-x_v(t)|<\varepsilon$ for a fixed confidence bound $\varepsilon>0$, both agents move a $\mu$-fraction towards each other, for a convergence parameter $\mu\in(0,1/2]$. Otherwise, no update occurs.
Depending on the parameters $\varepsilon$, $\mu$, and the initial configuration, the bounded-confidence mechanism can lead to consensus, clustering, or persistent polarization.
Our model can be seen as a discretization of the DW process where we replace the standard averaging behavior by a random pull operation, as discussed in the introduction.
Informally speaking, our update rule removes the notion of \emph{compromise} and instead lets one of the two agents completely adopt the other's opinion.
While the DW process has been extremely popular over the last decades, the vast majority of works rely on experimental analysis~\cite{Lorenz07,Lorenz10}.
Analytical results on convergence properties include bounds for asymmetric~\cite{ChenZ15} and heterogeneous~\cite{ChenSMB20} variants.
More broadly, many bounded-confidence models for opinion formation have been studied; a recent survey summarizes mathematical properties and provides further references~\cite{BernardoAPV24}.

\paragraph{The Axelrod Model:}
\begin{wrapfigure}{R}{0.4\textwidth}
\centering
\tikzset{
    opinion/.style={
        circle,
        draw,
        thick,
        fill=white,
        minimum size=7.5mm,
        inner sep=1pt,
        font=\small
    },
    opinion edge/.style={thick}
}

\begin{tikzpicture}[x=1.25cm,y=1.25cm]

    \foreach \i in {1,2,3} {
        \foreach \j in {1,2,3} {
            \pgfmathsetmacro{\y}{3-\i}
            \pgfmathsetmacro{\x}{\j-1}
            \node[opinion] (x\i\j) at (1.3*\x,1.3*\y)
                {$(\i,\j)$};
        }
    }

    \foreach \i in {1,2,3} {
        \draw[opinion edge]
            (x\i1) -- (x\i2) -- (x\i3);

        \draw[opinion edge]
            (x\i1) to[bend left=40] (x\i3);
    }

    \foreach \j in {1,2,3} {
        \draw[opinion edge]
            (x1\j) -- (x2\j) -- (x3\j);

        \draw[opinion edge]
            (x1\j) to[bend right=40] (x3\j);
    }

\end{tikzpicture}

\caption{The opinion graph of the Axelrod model with $F=2$ and $q=3$.}
\label{fig:ozturk-axelrod-opinion-graphs}
\end{wrapfigure}
The Axelrod model is a simple way of exploring how people influence one another. It shows how local interactions can make communities more alike while still allowing distinct cultural groups to emerge.
Seeking formalization, let $G=(V,E)$ be an undirected graph with $n=|V|$ and $m=|E|$.
Each vertex $v$ has a so-called culture $X_v=(X_{v,1},\ldots,X_{v,F})\in[q]^F$.
For $\{u,v\}\in E$, let the difference between the cultures of the two endpoints be  $s(u,v)=|\{i:X_{u,i}=X_{v,i}\}|$.
The process now updates the cultures as follows: At each step, choose an edge uniformly.
With probability $s(u,v)/F$, choose a disagreeing feature uniformly
and let a uniformly chosen endpoint copy the other's trait.
A configuration is absorbing iff for all edges, it holds
$s(u,v)\in\{0,F\}$, i.e.,
neighboring agents are either identical or completely
incompatible. Consequently, the process may end in consensus or in
several mutually incompatible cultural regions.
In the case $F=2$, i.e., when each agent has precisely two traits (and each trait can take $q$ different values), the Axelrod model coincides with our model on the $2$-dimensional rook graph on $q^2$ nodes (if we exclude the idle steps).

\paragraph{Lotka Volterra Processes:}
Finally, a related framework is studied in~\cite{CZYZOWICZ20221}.
The authors consider a broad class of protocols with randomly interacting agents whose opinion updates follow so-called discrete Lotka-Volterra (LV-type) dynamics.
More precisely, in each step, a directed pair of agents $(v,w)$ is picked uniformly and independently at random.
However, in this framework, the opinion graph can be \emph{directed and weighted}: For any two opinions $i,j$ there is probability $p_{ij} \in [0,1]$, s.t., with probability $p_{ij}$ an agent with opinion $i$ switches to $j$ if it interacts with an agent with opinion $j$.
Our protocol falls within this class as we choose $p_{ij} = 1$ if there is an edge $\{i,j\}$ in the opinion graph and $0$ otherwise.
Their analysis establishes polynomial convergence-time guarantees, showing that any LV-type protocol converges in $O(n^{c})$ steps for some constant $c \gg 2$ (provided that all $p_{ij}$'s are constant).
In contrast, we focus on a specific protocol and provide a significantly more fine-grained analysis.
In particular, we show our class of processes converges after $O(n^2 \log n)$ steps.

\subsection{Notation \& Preliminaries}
\label{sec:preliminaries}

We consider the voter model with a simple, undirected \emph{opinion graph} $G = (O,E)$.
The nodes of $G$ are the $q$ available opinions $O = \{1,\ldots,q\}$.
We let $\Delta$ be the maximum degree in $G$.

For the analysis, we may discard opinions with initial support zero, since they can never reappear. Thus, we assume w.l.o.g. that every opinion initially has positive support and $q\leq n$. We assume $n\geq2$; the case of one agent is immediate. Passing to the induced opinion graph can only decrease its maximum degree. We use $\log$ for the natural logarithm; other bases are indicated explicitly.

Moreover, there is a set $A$ of $n$ agents and
each agent $u$ has initially one of the $q$ opinions.
The opinion of agent $u$ at the beginning of step $t$ is denoted by $o_u(t) \in O$. We define $\mathcal{S}_i(t)$ as the set of agents with opinion $i$ at the beginning of step $t$ and $S_i(t) = |\mathcal{S}_i(t)|$. We refer to $S_i(t)$ as the \emph{support} of $i$ at the beginning of step $t$.
For a subset of opinions $O' \subseteq O$, let $\mathcal{S}_{O'}(t) := \{v \in A \mid o_v(t) \in O'\}$ be the set of agents with an opinion from $O'$ in step $t$. Again, we define $S_{O'}(t)=|\mathcal{S}_{O'}(t)|$.
We denote by $P(t) = \{\{u,v\} \mid \{ o_u(t), o_v(t) \} \in E\}$  the set of \emph{productive pairs} (or \emph{productive edges}). Let the pair $\lbrace u,v \rbrace$ be chosen in step $t$. Then, we call an opinion $o\in O$ \emph{active} in step $t$ if $o_u(t) = o$ or $o_v(t) = o$. Likewise, an edge $\{o, o' \} \in E$ is active (in step $t$) if $o = o_u(t)$ and $o' = o_v(t)$.
We call $\vec{S}(t)$ the configuration of the system at time $t$. Since the interactions are the same for all agents with the same opinion, we can simply define $\vec{S}(t)$ as $(S_1(t), \ldots, S_q(t))$. We denote the initial configuration by $\vec{s}_0$.
Opinion $i$ is \emph{decided} in step $t$ if either $S_i(t) = 0$ \emph{or} we have $S_j(t) = 0$ for all opinions $j$ with $\{i,j\} \in E$. Otherwise, $i$ is undecided. Note that once an opinion is decided, it will never become undecided again. The \emph{stabilization time} is the first time step in which all opinions are decided.
 \newcommand{\EX}[1]{\underset{\scriptscriptstyle X}{\mathbf{E}}\left[#1\right]}
Moreover, we use some common terms from the analysis of (randomized) algorithms. We will use the $\Tilde{O}(\cdot)$-notation, which suppresses polylogarithmic factors (similar to the $O(\cdot)$-notation, which suppresses constant factors).
We say an event holds \emph{with high probability} (w.h.p.) if it holds with probability at least $1 - \nicefrac{1}{n^c}$ for some $c > 1$.
Finally, we use sometimes use $\EX{\cdot}$ as shorthand for $\E{\cdot \growingmid X}$.

\clearpage

\section{Our Results and Techniques}
\label{sec:results}
In this section, we give an overview of our results and the main techniques used in the respective proofs.
We study two variants of the \ddw process, called the \emph{ordinary} \EdgeVOG process and the \emph{productive} \EdgeVOG process. Both variants are defined with respect to a fixed opinion graph $G=(O,E)$ and a set of agents $A$. In each step $t$, a pair of agents ${u,v}\in\binom{A}{2}$ is selected. If their current opinions are compatible, that is, if ${o_u(t),o_v(t)}\in E$, then one of the two agents adopts the opinion of the other, with each direction chosen with probability $1/2$. More precisely, either $o_u(t+1)=o_v(t)$ or $o_v(t+1)=o_u(t)$, while all other opinions remain unchanged.
If the two opinions are not adjacent in $G$, no update occurs.
The two variants differ only in how the interacting pair is selected. In the ordinary \EdgeVOG process, the pair is chosen uniformly at random from all pairs in $\binom{A}{2}$. In the productive \EdgeVOG process, it is chosen uniformly from the currently productive pairs, namely those pairs whose opinions are adjacent in $G$.

Note that the ordinary process models a well-mixed population: the communication graph is complete, and every pair of agents is equally likely to interact. This standard mean-field assumption is also used in classical versions of the voter model \cite{HassinP01}, the Deffuant–Weisbuch model \cite{WeisbuchDAN03}, the Hegselmann–Krause model \cite{HegselmannK02}, and discrete Lotka-Volterra processes \cite{CZYZOWICZ20221}. The authors of \cite{CZYZOWICZ20221} note that these are the same interaction rules as in a population protocol. It is easy to see that, after removing all idle steps, the ordinary process has exactly the same distribution as the productive process.

\noindent \begin{minipage}[b]{0.49\textwidth}
\begin{algorithm}{\;Ordinary \EdgeVOG on $G = (O,E)$}
In each step $t = 0,1,2, \ldots$:
    Pick a pair $\{u,v\}\in\binom{A}{2}$ u.a.r.
    if $\{o_u(t), o_v(t)\} \in E$:
        Flip a fair coin $c \in \{\textsc{Heads},\textsc{Tails}\}$
        if $c = \textsc{Heads}$ $o_u(t+1) \gets o_v(t)$;
                else $\,\,o_v(t+1) \gets o_u(t)$
                $$
\end{algorithm}
    \end{minipage}
\quad
    \begin{minipage}[b]{0.49\textwidth}
\begin{algorithm}{\; Productive \EdgeVOG on $G = (O,E)$}
In each step $t = 0,1,2, \ldots$:
    Let $P(t) \coloneqq \{\{u,v\} \in \binom{A}{2} \mid \{o_u(t),o_v(t)\} \in E \}$
    if $P(t)=\emptyset$: stop
    Pick a pair $\{u,v\}\in P(t)$ u.a.r.
    Flip a fair coin $c \in \{\textsc{Heads},\textsc{Tails}\}$
    if $c = \textsc{Heads}$ $o_u(t+1) \gets o_v(t)$;
            else  $\,\, o_v(t+1) \gets o_u(t)$
\end{algorithm}
\end{minipage}

Our results are organized as follows. In \Cref{sec:worstcase} we study the convergence times of the ordinary and of the productive \EdgeVOG processes. In \cref{sec:distribution}, we consider the properties of the stable configuration.
We give a lower bound on the probability that a given opinion survives and a tail estimate on the number of surviving opinions.
We then bound the maximum support attained by any opinion during the process in \cref{sec:maxsupport}.
Finally, we summarize our results in \cref{sec:conclusion}.

\subsection{Bounds on the Stabilization Time}
\label{sec:worstcase}

We begin by presenting our potential function $\Psi$, which is the main tool of our analysis.
Then, we present our worst-case bounds for the ordinary and productive process that are proved with the help of this potential.
Finally, we give a parameterized bound for the productive process that depends on the opinion graph degree and the maximal support.

\paragraph{The Potential Function $\Psi$.}
For the analysis of both our processes, we use a (family of) potential function(s): For an edge ${i,j}\in E$, we define its potential at time $t$ as $\Psi_{ij}(\Vec{S}(t))=S_i(t)\cdot S_j(t)$, where $S_i(t)$ denotes the support of opinion $i$. The potential associated with an opinion $i\in O$ is
\begin{align*}
\Psi_i(\Vec{S}(t))
=\sum_{j\in N_i}\Psi_{ij}(\Vec{S}(t)) && \textbf{(Potential of opinion $i$ in step $t$)}.
\end{align*}
An opinion $i$ is called \emph{decided} once either $S_i(t)=0$ or all its neighboring opinions have support zero. Equivalently, $i$ is decided precisely when $\Psi_i(\Vec{S}(t))=0$.
Finally, we define the potential of the entire configuration by
\begin{align*}
\Psi(\Vec{S}(t)) =\sum_{{i,j}\in E}\Psi_{ij}(\Vec{S}(t)) = \frac{1}{2} \sum_{i \in O}\Psi_{i}(\Vec{S}(t))  && \textbf{(Potential of $\Vec{S}(t)$)} .
\end{align*}
Note that $\Psi(\vec S(t))=|P(t)|$, since both quantities count the productive pairs. Hence, we can define the stabilization time as $T:=\inf\{t\geq 0:\Psi(\Vec{S}(t))=0\}.$
If clear from context, we will simply write $\Psi(t)$ instead of $\Psi(\Vec{S}(t))$.
In \cref{sec:potential}, we establish several properties of these potential functions that will be used throughout the paper.
Crucially, we will show that this potential exhibits a negative drift in expectation for both the ordinary and productive processes.
\begin{restatable}{prop}{newpotential}
\label{lemma:new_potential}
Let $\mathcal{P}(t) \in \{0,1\}$ indicate that step $t$ is productive. Then, for every possible configuration $\Vec{s}$ it holds
$$
    \Est{\Psi(\Vec{S}(t+1))} = \Psi(\Vec{s}) -  \Est{\mathcal{P}(t) \xspace} .
$$
\end{restatable}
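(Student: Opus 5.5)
We prove the identity by conditioning on which productive pair is chosen and computing the exact change of $\Psi$ caused by one update. Fix a configuration $\vec{s}$. In either process, a non-productive step leaves the configuration unchanged, so $\Psi$ does not change. Hence it suffices to show the following: conditioned on step $t$ being productive and on the chosen pair having opinions $i,j$ with $\{i,j\}\in E$, the expected change of $\Psi$ over the fair coin flip is exactly $-1$. Summing over cases then gives $\Est{\Psi(\vec S(t+1))-\Psi(\vec s)} = -\pr{\mathcal P(t)=1 \mid \vec S(t)=\vec s} = -\Est{\mathcal P(t)}$. This works for both the ordinary process and the productive process, since the two differ only in the probability of a productive step, which equals $1$ in the productive process while $\Psi(\vec s)>0$.

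The core computation is as follows. Suppose one agent moves from opinion $i$ to opinion $j$, so $S_i$ decreases by one and $S_j$ increases by one. Only edges incident to $i$ or $j$ change. Write $\Psi=\sum_{\{a,b\}\in E}S_aS_b$. For a third opinion $k\neq i,j$, the contribution of the edges $\{i,k\}$ and $\{j,k\}$ changes by $S_k(\mathbb 1[k\in N_j]-\mathbb 1[k\in N_i])$. The edge $\{i,j\}$ itself changes from $s_is_j$ to $(s_i-1)(s_j+1)=s_is_j+s_i-s_j-1$. So the change is
$$\Delta_{i\to j}= s_i-s_j-1+\sum_{k\in N_j\setminus\{i\}}s_k-\sum_{k\in N_i\setminus\{j\}}s_k .$$
The move $j\to i$ gives the same expression with $i$ and $j$ swapped. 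Averaging the two with the fair coin, the antisymmetric terms cancel: the terms $s_i-s_j$ and the neighbor sums cancel pairwise. What remains is $\tfrac12(\Delta_{i\to j}+\Delta_{j\to i})=-1$.

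The one point needing care is that the coin is fair independently of which agent is listed first in the unordered pair. As a result, the two directions $i\to j$ and $j\to i$ are each taken with probability exactly $1/2$, given the chosen productive pair. This holds in both protocols. There is no real obstacle here; the main thing to check is the bookkeeping of the edge $\{i,j\}$ itself, since it is counted in both neighbor sums. I handle this by excluding it from the sums and treating it separately, as above. Once that is done, linearity of expectation over the (conditionally uniform) choice of the pair finishes the proof.
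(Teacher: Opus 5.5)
Your proof is correct and follows essentially the paper's route. You condition on the opinion edge $\{i,j\}$ of the selected productive pair, use the fair coin to show that the expected change of $\Psi$ is exactly $-1$, and conclude with the law of total expectation. The difference is only bookkeeping: you compute the exact increment $\Delta_{i\to j}$ (the same formula the paper uses later in the proof of Lemma~\ref{lemma:azumapotential}) and average the two directions, whereas the paper averages edge by edge in Claim~\ref{claim:casedist}, a per-edge form it reuses later for node potentials.
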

\begin{proof}[Proof sketch]
The statement follows from straightforward calculation and the law of total expectation.
Condition on $\{\Vec{S}(t)=\vec{s}\}$.
If step $t$ is not productive, the potential remains unchanged.
If the step is productive and involves opinions $i$ and $j$, then one unit of support is transferred between $i$ and $j$, with either direction being equally likely.
With a straightforward calculation, it can be shown that $\Psi_{ij}$ decreases by one in expectation as
\begin{align*}
    \frac{1}{2}(s_i-1)(s_j+1)+\frac{1}{2}(s_i+1)(s_j-1)=s_i s_j-1.
\end{align*}
All other contributions remain unchanged in expectation: Consider an edge $\{i,k\}$ with $k \neq j$. As $s_i$ increases or decreases with equal probability and $s_k$ does not change, the potential $\Psi_{ik}$ does not change in expectation. The same holds for $\Psi_{jk}$ with $k \neq i$.
 The potentials of edges disjoint from $\{i,j\}$ do not change at all.
 Thus, every productive step decreases $\Psi$ by exactly one in expectation, independently of the chosen productive edge. The statement then follows by the law of total expectation. The full proof with all calculations can be found in \cref{sec:potential}
\end{proof}
This innocent-looking proposition is the key to many of our results, as it
directly relates the convergence time to the number of productive steps.
As we don't restrict how edges are chosen, the proposition is quite flexible and may also be of independent interest.
For any rule that decides on which pairs are picked, we simply need to bound the time in which $\Psi(0)$ productive pairs are picked to obtain the expected runtime.
In particular, it gives a simpler proof of the corresponding bound for asynchronous pull voting on regular communication graphs in \cite{CooperR16}; see \cref{apx:voting}.
Note that the potential has been used before by Öztürk\cite{DBLP:journals/advcs/Ozturk13}.

\paragraph{Stabilization Time of the Ordinary Process}
Given our potential, we consider the worst-case convergence times of the ordinary \EdgeVOG process.
Rather surprisingly, the bounds for the ordinary process do not depend on the initial configuration nor the opinion graph.
Instead, regardless of the opinion graph and the initial configuration, the convergence time is at most essentially quadratic in the number of agents, namely
\begin{restatable}{paperthm}{newPotential}
\label{thm:newPotential}
Let $G = (O,E)$ be an arbitrary opinion graph and suppose the process
starts in an arbitrary configuration.
Then, the ordinary \EdgeVOG process converges in ${O}(n^2\log n)$ steps in expectation and w.h.p. The bound in expectation is asymptotically tight, even when $\Delta=1$.
\end{restatable}
\begin{proof}[Proof sketch]
For the proof, we consider the expected potential drop from step to step.
Suppose that the process has potential $\psi$ at step $t$. As $\psi$ is equal to the number of productive pairs and in a complete communication graph each pair is picked with probability ${1}/{\tbinom{n}{2}}$, the probability that a step is productive is ${\psi}/{\tbinom{n}{2}}$.
Therefore, by \cref{lemma:new_potential}, in each step the expected drop in potential is ${\psi}/{\tbinom{n}{2}}$ as well.
Thus, by the law of total expectation, we get

\begin{align*}
    \E{\Psi(\Vec{S}(t+1))}
    &=\E{\Psi(\Vec{S}(t))-\frac{\Psi(\Vec{S}(t))}{\binom n2}} =\left(1-\frac{1}{\binom n2}\right)\E{\Psi(\Vec{S}(t))}.
\end{align*}

This fact alone is sufficient to prove \cref{thm:newPotential} as we apply standard drift theorems (see \cite{Lengler2020}) to derive the stated bounds.
The full proof is given in \Cref{sec:general graph bound}.
\end{proof}
Recall that the standard voter model on a complete communication graph has a worst-case convergence time of $\tilde{\Theta}(n^2)$ steps in expectation and with high probability~\cite{CooperR16}.
Thus, our result provides two observations:
First, up to logarithmic factors, \EdgeVOG has asymptotically the same worst-case convergence time as the standard voter despite restricting the set of admissible opinions via $G$.

Second, the bound in expectation is tight, including the logarithmic factor. To see this, let $n$ be even and let the opinion graph consist of $n/2$ disjoint edges, with one agent initially holding each opinion.
Each component contains exactly two agents and stabilizes as soon as this pair is selected for the first time.
When $r$ components remain undecided, there are exactly $r$ productive pairs, so the expected waiting time until the next component stabilizes is $\binom n2/r$.
Consequently,
\[
    \E{T_{\mathrm{ordinary}}}
    =\binom n2\sum_{r=1}^{n/2}\frac1r
    =\Theta(n^2\log n).
\]
In contrast, the productive process stabilizes after exactly $n/2$ steps, since each productive interaction resolves one component.

\paragraph{Worst-Case Stabilization Time of the Productive Process}
We next consider the productive process, which omits all idle steps in which agents with non-adjacent opinions interact. Under a natural coupling, it therefore cannot take more steps than the ordinary process. However, rather surprisingly its stabilization time can still have a long tail: although its expectation is exactly the initial number of productive pairs, there are configurations with only $n-1$ productive pairs for which the process takes $\Omega(n^2)$ steps with probability $\Omega(\tfrac{1}{n})$.
To be precise, it holds:
\begin{theorem}
\label{thm:ProdWorstCase}
Let $G = (O,E)$ be an arbitrary opinion graph and suppose the process starts in a configuration $\Vec{s_0}$ with $|P(0)|$ productive pairs. Then, the productive \EdgeVOG process converges in \textbf{exactly} $|P(0)|$ steps in expectation and $O(n^2 \log n)$ w.h.p.
Furthermore, there are graphs and configurations with $\Psi(0) = n-1$ that require $\Omega(n^2)$ steps with prob. $\Omega(\tfrac{1}{n})$.
\end{theorem}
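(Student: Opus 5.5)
The plan is to treat the three claims separately, using \cref{lemma:new_potential} as the main tool.

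\emph{Exact expectation.} In the productive process every step before stabilization is productive. So $\mathcal{P}(t)=\mathbf{1}[t<T]$, and \cref{lemma:new_potential} says that $M(t):=\Psi(\vec S(t))+\min(t,T)$ is a martingale. I would first check that $\E{T}<\infty$, so that optional stopping applies. This follows from the high-probability bound below, or directly: the productive process is the ordinary process with idle steps removed, hence $T_{\mathrm{prod}}\le T_{\mathrm{ord}}$ under the natural coupling, and \cref{thm:newPotential} gives $\E{T_{\mathrm{ord}}}=O(n^2\log n)$. The increments of $M$ are bounded, since one step changes $\Psi$ by at most $2n$. Optional stopping then gives $\Psi(0)=\E{\Psi(T)+T}=\E{T}$, because $\Psi(T)=0$. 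Hence $\E{T}=|P(0)|$.

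\emph{High-probability bound.} This follows from the same coupling. The productive process is obtained from the ordinary one by deleting idle steps, so its stabilization step count never exceeds that of the ordinary process. The w.h.p.\ bound $O(n^2\log n)$ from \cref{thm:newPotential} therefore transfers directly.

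\emph{Lower bound construction.} I would take $G$ to be a star with center opinion $c$ and leaves $\ell_1,\dots,\ell_{n-1}$. Put one agent on $c$ and one agent on each leaf, so $\Psi(0)=n-1$. A naive approach gives probability only $\frac12$ per leaf, so instead I would track the support $X$ of the center. While leaves with agents remain, a productive step moves $X$ by $\pm1$ with probability $\frac12$ each. The step stays a fair random walk even when the chosen leaf pair is weighted, since the direction is always a fair coin. The process stops when $X=0$, or when $X=n$, i.e.\ all leaves are empty. So $X$ is a simple symmetric random walk from $1$, absorbed at $0$ or $n$ (or earlier if only the center and empty leaves remain, but that means $X=n$). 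By gambler's ruin it reaches $n/2$ before $0$ with probability $2/n=\Omega(1/n)$. From $n/2$, hitting $\{0,n\}$ takes $\Omega(n^2)$ steps with constant probability, since a simple random walk needs $\Omega(n^2)$ steps to travel distance $n/2$ with constant probability (e.g.\ by Doob's maximal inequality on $X^2-t$, or a reflection bound). Combining, $\pr{T\ge cn^2}=\Omega(1/n)$.

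\emph{Main obstacle.} The only delicate point is confirming that $X$ is exactly a symmetric walk until absorption. A leaf emptied to $0$ could shrink the set of productive pairs, but every productive pair involves the center, so each step is a $\pm1$ move on $X$ with probability $\frac12$ each regardless of which leaves remain. Everything else is routine.
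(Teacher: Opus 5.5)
Your proposal is correct and follows essentially the paper's route. The w.h.p.\ bound comes from the coupling with the ordinary process. The lower bound uses the star with one agent per opinion, gambler's ruin from $1$ to $\lfloor n/2\rfloor$, and a maximal inequality; the paper uses Kolmogorov's, i.e.\ Doob's inequality for the submartingale $(X-m)^2$. The only difference is in the exact expectation: you run optional stopping on $\Psi(\vec S(t))+\min(t,T)$ directly, with $\E{T}<\infty$ from the coupling and increments of size $O(n)$, whereas the paper cites the additive drift theorem with drift exactly $1$, which packages the same argument.
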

\begin{proof}[Proof Sketch]
We first determine the expectation. Since every step is productive, \cref{lemma:new_potential} shows that the potential decreases by exactly one in expectation in every step. The potential is non-negative and equals zero exactly when the process has stabilized. Thus, the additive drift theorem gives $\E{T}=\Psi(\Vec{s_0})=|P(0)|$.

For the tightness result, consider a star graph with $n$
opinions and place one agent on each opinion. Let $i$ be the central
opinion. Every productive interaction involves $i$ and one of the leaves,
and either agent adopts the opinion of the other with probability $1/2$.
Therefore, $S_i(t)$ performs a simple symmetric random walk on
$\{0,\ldots,n\}$ starting at $1$. The process stabilizes precisely when
this walk reaches $0$ or $n$.
By the standard gambler's-ruin formula, the expected absorption time is
$n-1$. Moreover, the walk reaches $\lfloor n/2\rfloor$ before reaching
$0$ with probability $\Theta(1/n)$. From there, it takes
$\Omega(n^2)$ additional steps with constant probability. Thus, the
stabilization time is $\Omega(n^2)$ with probability $\Omega(1/n)$.
The full proof is given in \cref{sec:wcproductive}.
\end{proof}

\paragraph{Parameterized Stabilization Time of the Productive Process}
Although the bounds from the previous theorem are tight for some opinion graphs and initial configurations, many natural instances stabilize much faster than the worst-case bound of $O(n^2\log n)$. Intuitively, if the opinion graph is sparse and the agents are initially spread across many opinions, the occupied part of the opinion graph should quickly break into small components, which then stabilize rapidly. This suggests that a more refined analysis should take into account both the structure of the opinion graph and the initial distribution of the agents. The following theorem confirms this intuition at a general level.

We bound the stabilization time in terms of the initial number of productive pairs, the largest support attained by any opinion, and the maximum degree of the opinion graph. More precisely, for an opinion graph $G$, an initial configuration $\Vec{s_0}$, and a constant $c>0$, define
\begin{align}
\hat{\mathcal S}_c(G,\Vec{s_0})
:=
\min\left\{
x\geq 0 :
\pr{
\max_{i\in O}\max_{0\leq t\leq T}S_i(t)>x
\growingmid \Vec{S}(0)=\Vec{s_0}
}
\leq n^{-c}
\right\}.
\label{eq:maxsupportboundwhp}
\end{align}
Thus, $\hat{\mathcal S}_c(G,\Vec{s_0})$ is the smallest threshold such that, with probability at least $1-n^{-c}$, no opinion ever has support larger than this threshold.

The following theorem provides a general parameterization rather than an explicit bound for a particular class of opinion graphs. In particular, it shows that the stabilization time can be subquadratic whenever the initial potential, the maximal support, and the maximum degree are sufficiently small. Since $\hat{\mathcal S}_c(G,\Vec{s_0})$ depends on both the opinion graph and the initial configuration, we derive an explicit bound on this quantity later in \cref{sec:maxsupport}.

\begin{restatable}{paperthm}{newPotentialProd}
\label{thm:newPotentialProd}
Let $G=(O,E)$ be an opinion graph with maximum degree $\Delta$, and suppose that the process starts in a configuration $\Vec{s_0}$ with $|P(0)|$ productive pairs. Furthermore, let $\hat{\mathcal S}_c(G,\Vec{s_0})$ be defined as in \eqref{eq:maxsupportboundwhp} for a large enough $c \geq 1$. Then, with high probability, the productive \EdgeVOG process stabilizes within
    $
\widetilde O\left(
|P(0)| + \Delta^2\hat{\mathcal S}_c(G,\Vec{s_0})^2 \right)
$ steps.
Alternatively, the polylogarithmic factor multiplying $|P(0)|$ can be removed at the cost of a larger polynomial dependence on $\Delta$, giving
$$
O\left(|P(0)|\right)
+
\widetilde O\left(
\Delta^6\hat{\mathcal S}_c(G,\Vec{s_0})^2
\right).
$$
\end{restatable}
\begin{proof}[Proof sketch]
Recall that, by \cref{lemma:new_potential}, the potential decreases by one in expectation in every step. However, it can still increase in individual steps. To obtain a high-probability bound, we therefore need to control how far the potential can rise before the process stabilizes. For this, our upper bound $\hat{\mathcal{S}}_c(G,\Vec{s_0})$ comes into play.

For readability, write $\hat{\mathcal S}:=\hat{\mathcal S}_c(G,\Vec{s_0})$ and $b:=\Delta\hat{\mathcal S}$ in the remainder. If the initial configuration is already stable, there is nothing to prove.
First, let $\tau_{\mathrm{supp}}$ be the first step in which some opinion has support larger than $\hat{\mathcal S}$, and let $\tilde{T}:=T\wedge\tau_{\mathrm{supp}}$.
By definition of $\hat{\mathcal S}$, the probability of such a support violation is at most $n^{-c}$. Thus, it suffices to bound $\tilde{T}$ and afterwards add $n^{-c}$ to the failure probability.
We stop the potential at $\tau_{\mathrm{supp}}$. Its stepwise changes have absolute value at most $2b$. Thus, the stopped potential is a supermartingale with bounded differences.

Suppose we start from a configuration with potential $\psi$ before a support violation. For a sufficiently large constant $C$, let
\[
    B:=C\left(\psi+b^2\log^2n\right).
\]
The maximal version of Azuma's inequality shows that the stopped potential stays below $B$ for a suitably chosen interval of $\Theta(B\log n)$ steps, w.h.p.; see \cref{lemma:azumapotential}.
We also use auxiliary node potentials $\hat\Psi_i^{(B)}$ that agree with $\Psi_i$ until the first support violation or the first time $\tau_B$ when the potential exceeds $B$, and are zero afterwards.
For opinions at pairwise distance at least four, one interaction can affect at most one node potential. In \cref{lemma:prodpotential}, we use this observation to derive a joint moment bound for the auxiliary potentials.

Equipped with these observations, we distinguish between high and low potentials. We say that a potential is high if it is larger than $H:=32c\Delta^4b^2\log^2n$; otherwise, it is low.
If the potential is hight, we partition the opinions into $4$-separated sets and apply the variant of the Chernoff bound from \cref{thm:scheidelbound} to each set. This variant applies to the specific dependencies captured by our joint moment bound.
Starting from potential $\psi$, we end a phase when the potential first falls to at most $\max\{\psi/2,H\}$ or the support bound is violated.
In \cref{lemma:inductionstep}, we show that such a phase lasts $O(\psi+b^2\log^2n)$ steps, w.h.p.
This estimate holds conditional on the history at the beginning of each phase. A union bound over $O(\log n)$ phases and summing their lengths shows that the potential falls to $H$ within $O(|P(0)|+b^2\log^3n)$ steps, w.h.p., unless the process has already stopped.

Once the potential is at most $H$, we use the joint moment bound only for singleton sets. Summing over all opinions gives, for the auxiliary global potential $\hat\Psi^{(B)}=\frac12\sum_i\hat\Psi_i^{(B)}$,
\[
    \E{\hat\Psi^{(B)}(t)}\leq\psi\left(1-\frac1B\right)^t\leq n^2e^{-t/B},
\]
where we restart the auxiliary potentials at the beginning of this part and write $\psi\leq H$ for its initial potential.
For $t=\lceil(c+4)B\log n\rceil$, this expectation is at most $n^{-(c+2)}$.
Unless one of the bounds has been violated, the auxiliary potential is a positive integer whenever the process has not stabilized. Thus, Markov's inequality, together with the Azuma bound, shows that the process stabilizes or violates the support bound within another $O(\Delta^4b^2\log^3n)$ steps, w.h.p.; see \cref{prop:smallpotential}.
Adding the probability of a support violation gives the second bound,
\[
    T\in O\left(|P(0)|+\Delta^4b^2\log^3n\right)
    =O\left(|P(0)|+\Delta^6\hat{\mathcal S}^{\,2}\log^3n\right),
\]
with high probability.
Finally, the same expectation calculation can be applied directly from the initial configuration, without first reducing the potential to $H$. This gives $O((|P(0)|+b^2\log^2n)\log n)$ steps w.h.p., proving the first bound.
The details are given in \cref{sec:whpproductive}.

\end{proof}

\subsection{Distribution of Surviving Opinions}
\label{sec:distribution}
Next, we move away from stabilization time and consider the properties of a stable configuration.
Recall that in the standard voter model, the process eventually reaches consensus, i.e.,\textbf{exactly one} opinion survives.
In our process, by contrast, multiple opinions may coexist indefinitely, provided that the surviving opinions form an independent set in the opinion graph.
In this section, we analyze how large the number of surviving opinions can be.
To this end, we first consider the survival probability of a single opinion, and then we show that many opinions survive if the opinion graph is sufficiently sparse.

\paragraph{Survival Probability of a Fixed Opinion}
Note that for a complete opinion graph it is known from the analysis of the standard pull voter process that (on a regular communication graph) an opinion survives with probability exactly $\frac{s_i}{n}$ (where $s_i$ is the initial support).
As it turns out, this bound naturally generalizes to our model.
It holds
\begin{restatable}{paperthm}{survival}
\label{thm:survival}
Let $G=(O,E)$ be the opinion graph and let $s_i\geq 1$ denote the
initial support of opinion $i\in O$.
Let
$X_i:=\mathbf 1_{\{S_i(T)>0\}}$ indicate whether opinion $i$ is present in
the stable configuration, then
$$\E{X_i}\geq  \frac{s_i}{s_i+\sum_{j\in N_i}s_j}.$$
\end{restatable}
\begin{proof}[Proof sketch]
The lemma follows from a straightforward martingale argument.
Let $M_i(t):=\sum_{j\in N_i}S_j(t)$ be the total support of the
opinions adjacent to $i$.
Note that the potential of $i$ is precisely $\Psi_i(t) = S_i(t) \cdot M_i(t)$.
Define
$\tau_i:= \inf\{t\geq 0:\Psi_i(t)=0\} = \inf\{t\geq 0:S_i(t)=0\text{ or }M_i(t)=0\}$.
Thus, $\tau_i$ is the first time at which we know if  $i$ itself is extinct or if all neighboring opinions are extinct.
Using this definition, consider the stopped process
\begin{align}
    R_i(t)
    :=
    \begin{cases}
        \displaystyle
        \frac{S_i(t)}{S_i(t)+M_i(t)}
        & \textbf{if } t<\tau_i,\\[1.2ex]
        \displaystyle
        \frac{S_i(\tau_i)}
             {S_i(\tau_i)+M_i(\tau_i)}
        & \textbf{if } t\geq\tau_i.
    \end{cases}
\end{align}
At time $\tau_i$, either $S_i(\tau_i)=0$ or $M_i(\tau_i)=0$.
Consequently, $R_i(\tau_i)=0$ if opinion $i$ becomes extinct and
$R_i(\tau_i)=1$ if all its neighboring opinions become extinct.
Since extinct opinions cannot reappear, we have
$R_i(T)=R_i(\tau_i)=X_i$.

A direct one-step calculation shows that $(R_i(t))_{t\geq 0}$ is a
bounded submartingale. An interaction between opinion $i$ and one of
its neighbors changes $S_i(t)$ symmetrically by $\pm1$ while leaving
$S_i(t)+M_i(t)$ unchanged. An interaction between a neighboring
opinion and an opinion outside $\{i\}\cup N_i$ changes the denominator
symmetrically by $\pm1$. With some straightforward calculation, we can show that $R_i(t)$ increases in expectation in this case.
All other interactions leave $R_i(t)$
unchanged. Hence, the Optional Stopping Theorem (cf. \cref{thm:doob}) yields
\begin{align}
    \E{X_i}
    =
    \E{R_i(T)}
    \geq
    R_i(0)
    =
    \frac{s_i}{s_i+\sum_{j\in N_i}s_j}.
\end{align}
For the detailed proof, see \cref{sec:survival}.
\end{proof}
Now let $Z:=\sum_{i\in O}X_i$
be the number of surviving opinions.
In the special case in which every opinion has equal support, we have $\E{Z}\geq \sum_{i\in O}\tfrac{1}{d_i+1}$, where $d_i$ is the degree of opinion $i$.
For comparison, the Caro-Wei bound guarantees that every graph with degree sequence $(d_i)_{i\in O}$ contains a maximal independent set of size at least $\sum_{i\in O} \tfrac1 {d_i+1}$.
Thus, our stochastic process produces, in expectation, an independent set whose guaranteed size matches the Caro-Wei bound.
In other words, in expectation, we sample a \emph{big} independent set.

\paragraph{Number of Surviving Opinions}

\begin{wrapfigure}{R}{0.4\textwidth}
    \centering
    \begin{tabular}{lcc}
        \hline
        Graph class
        & Arboricity
        & $\chi(G^2)$ \\
        \hline
        Forests
        & $1$
        & $\Delta+1$ \\
        Planar graphs
        & $3$
        & $O(\Delta)$ \\
        Treewidth $k$
        &  $k$
        & $O_k(\Delta)$ \\
        Euler genus $g$
        & $O(\sqrt{g}+1)$
        & $O_g(\Delta)$ \\
        $K_r$-minor-free
        & $O(r\sqrt{\log r})$
        & $O_r(\Delta)$ \\
        General graphs
        & $\lceil(\Delta+1)/2\rceil$
        & $\Delta^2+1$ \\
        \hline
    \end{tabular}
    \caption{Arboricity and chromatic number of the square for common
    graph classes. $\Delta$ is the maximal degree. The parameters $k$, $g$, and $r$ are considered fixed.}
    \label{tab:arboricity-square-coloring}
\end{wrapfigure}
Next, we investigate what we can show beyond the expectation, i.e., what is the probability that the surviving set is indeed large? To put it more technically, we want to bound the probability that $Z$ deviates from its expected size.
Unfortunately, we cannot use the standard Chernoff and Hoeffding bounds for $Z$ as the survivals of neighboring opinions are highly correlated: Two neighboring opinions cannot both survive. Furthermore, if two opinions have shared neighbors, these neighbors' supports affect both nodes' $R_i(t)$'s, the variable by which we bound the survival probability in the proof of \cref{thm:survival}.
Nevertheless, Theorem~\ref{thm:survival-tail} shows that the number of
surviving opinions is unlikely to be much smaller than $\mu:=\sum_{i\in O}p_i$ with $p_i = \nicefrac{s_i}{s_i+\sum_{j\in N_i}s_j}$, even though the individual survival events are \textbf{not} independent.
The loss compared with the usual multiplicative Chernoff bound is
entirely captured by $\kappa=\chi(G^2)$, i.e., chromatic number of
the squared opinion graph $G^2$.  Recall that two vertices are adjacent in $G^2$ if their
distance in $G$ is at most two.
More precisely, it holds
\begin{theorem}
\label{thm:survival-tail}
Let $G=(O,E)$ be an opinion graph and let
$\kappa:=\chi(G^2)$.  For every $0\leq \delta\leq 1$,
\begin{align}
    \pr{Z\leq (1-\delta)\mu}
    \leq
    \exp\left(-\frac{\delta^2\mu}{2\kappa}\right).
\end{align}
\end{theorem}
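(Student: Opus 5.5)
Fix a proper colouring of $G^2$ with $\kappa$ colour classes $C_1,\dots,C_\kappa$. Inside one class any two opinions are at distance at least three in $G$. Their closed neighbourhoods $\{i\}\cup N_i$ are therefore pairwise disjoint, and no edge joins two of them.

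Write $Z_c:=\sum_{i\in C_c}X_i$ and $\mu_c:=\sum_{i\in C_c}p_i$, so that $Z=\sum_c Z_c$ and $\mu=\sum_c\mu_c$. The plan has two parts. First, prove a lower-tail moment-generating-function bound for each class separately, as if the $X_i$ with $i\in C_c$ were independent Bernoulli$(p_i)$ variables. Second, combine the $\kappa$ classes by H\"older's inequality, which is exactly where the factor $\kappa$ is lost.

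\textbf{Step 1 (joint moment bound within a class).} For a class $C$ and any $\lambda\ge 0$ we aim to show
\[
\E{\prod_{i\in C} e^{-\lambda X_i}}\le \prod_{i\in C}\bigl(1-p_i(1-e^{-\lambda})\bigr).
\]
Since $e^{-\lambda x}=1-(1-e^{-\lambda})x$ for $x\in\{0,1\}$, it suffices to show that for $a:=1-e^{-\lambda}\in[0,1]$ the process
\[
Q(t):=\prod_{i\in C}\bigl(1-aR_i(t)\bigr)
\]
is a supermartingale, where $R_i$ is the stopped ratio from the proof of \cref{thm:survival}. Then optional stopping gives $\E{Q(T)}\le Q(0)$, and $Q(T)=\prod_{i\in C}e^{-\lambda X_i}$ because $R_i(T)=X_i$.

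To verify the supermartingale property, consider one interaction along an edge $\{j,k\}$. It changes $R_i$ only if $\{j,k\}$ meets the closed neighbourhood of $i$. Suppose it changed both $R_i$ and $R_{i'}$ for $i\neq i'$ in $C$. Then $j$ would lie in the closed neighbourhood of $i$ and $k$ in that of $i'$ (or one endpoint would lie in both). This forces $\mathrm{dist}(i,i')\le 3$; the two neighbourhoods are already disjoint, so only the case $\mathrm{dist}=3$ with a $j$--$k$ edge between them remains.

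This case is the main obstacle, because with $\kappa=\chi(G^2)$ two opinions at distance exactly three can share such an edge. I would handle it directly. The step moves one agent from $j\in N_i$ to $k\in N_{i'}$ or back, with probability $1/2$ each. This changes $D_i:=S_i+M_i$ by $\pm1$ and $D_{i'}$ by $\mp1$, with opposite signs. A direct computation shows that $f(x)=1-a s/x$ is concave in $x$ for $s,a\ge0$, so each factor individually is a supermartingale along its own coordinate. The product of the two factors can be expanded as $\frac12[f_i(D_i{+}1)f_{i'}(D_{i'}{-}1)+f_i(D_i{-}1)f_{i'}(D_{i'}{+}1)]$. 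Anticorrelated moves lower this average compared with independent moves, so it is at most $\frac12(f_i^++f_i^-)\cdot\frac12(f_{i'}^++f_{i'}^-)\le f_i(D_i)f_{i'}(D_{i'})$, using that each $f$ is monotone in $D$. I expect this rearrangement-type inequality, which holds because both factors are increasing in $D$ and the moves are opposite, to be the point needing care. It is also the reason the paper speaks of negative correlation "in a specific way".

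Every other step touches at most one factor, and that factor is itself a supermartingale, since $R_i$ is a submartingale and $1-aR_i$ is affine. Since $Q$ is bounded, optional stopping applies.

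\textbf{Step 2 (Chernoff within a class and combination).} From Step 1, the standard computation $\prod_i(1-p_ia)\le e^{-a\mu_c}$, followed by Markov's inequality and optimisation over $\lambda$, gives $\E{e^{-\lambda Z_c}}\le \exp(-(1-e^{-\lambda})\mu_c)$. For the whole sum we apply H\"older's inequality:
\[
\E{e^{-\lambda Z}}\le\prod_{c}\E{e^{-\kappa\lambda Z_c}}^{1/\kappa}\le \exp\Bigl(-\tfrac{1-e^{-\kappa\lambda}}{\kappa}\mu\Bigr).
\]
Substituting $\theta=\kappa\lambda$ and applying Markov's inequality,
\[
\pr{Z\le(1-\delta)\mu}\le \exp\Bigl(\tfrac{1}{\kappa}\bigl[-(1-e^{-\theta})\mu+\theta(1-\delta)\mu\bigr]\Bigr).
\]
With the usual choice $\theta=-\log(1-\delta)$ this yields $\bigl(e^{-\delta}(1-\delta)^{-(1-\delta)}\bigr)^{\mu/\kappa}\le \exp(-\delta^2\mu/(2\kappa))$, which is the claim of \cref{thm:survival-tail}.
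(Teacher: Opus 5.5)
Your proposal is correct and follows essentially the paper's route. You colour $G^2$ and show that a product of the stopped ratios $R_i$ over a colour class is a supermartingale, handling the distance-three edge by concavity in $D_i$ plus the opposite-sign monotonicity argument (equivalent to the paper's cross-term estimate), and then combine the classes by H\"older's inequality and optimise. The only real difference is that you build the moment-generating parameter into the supermartingale $\prod_{i\in C}(1-aR_i)$ directly, whereas the paper proves the case $a=1$ for every set at pairwise distance at least three and recovers general $a$ by expanding $\prod_i\bigl(1+(e^\theta-1)(1-X_i)\bigr)$ over subsets; when writing it up, state the single-factor step conditionally on the chosen edge and treat $\delta=1$ by letting $\theta\to\infty$.
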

\begin{proof}[Proof sketch]
For every opinion $i$, we consider the ratio $R_i(t)=S_i(t)/(S_i(t)+M_i(t))$ from the previous proof.
Moreover, for every set $I$ of opinions at pairwise distance at least three, the product $\prod_{i\in I}(1-R_i(t))$ is a supermartingale.
Hence, for any such set, it holds
$$\E{\prod_{i\in I}(1-X_i)}\leq\prod_{i\in I}(1-p_i).$$
It is known that for binary random variables with this property, Chernoff and Hoeffding bounds hold too if we consider $\mu = \sum_{i \in I} p_i$ instead of the expectation. This has, for example, been shown in \cite{DBLP:journals/siamdm/SchmidtSS95}.
Just as the standard Chernoff bound, this result can be extended to variables whose dependencies can be modeled through a graph.
To get a result for all variables and not just the ones in distance $3$, we now consider a coloring of $G^2$ with $\kappa=\chi(G^2)$ colors.
Note that all nodes in the same color class are in distance at least 3 to each other as they form an independent set in $G^2$.
With some straightforward adaptation to the proof of the standard Chernoff bound, we derive a bound of $\pr{Z\leq(1-\delta)\mu}\leq\exp(-\delta^2\mu/(2\kappa))$.
For the detailed proof, see \cref{sec:survival-tail}.
\end{proof}
As $G^2$ has degree at most $\Delta^2$, the parameter $\kappa$ can always be bounded by $\Delta^2+1$.
Consequently, on a $\Delta$-regular opinion graph with equal initial supports, $\Omega(q/\Delta)$ opinions survive with probability $1-\exp(-\Omega(q/\Delta^3))$.
Moreover, the parameter $\kappa$ can be bounded particularly well for sparse
graph classes: A typical measure for sparseness is the arboricity, the number of forests a graph can be divided into. Agnarsson and Halldórsson~\cite{AgnarssonHalldorsson2003}
showed\footnote{ \cite{AgnarssonHalldorsson2003} showed the result for only planar graphs, but \cite{KiersteadYangYi2020} remarked that the proof can be extended to all graphs of bounded arboricity.} that, for every fixed $a$ and $p$, there exists a constant
$c$ such that every graph $G$ with arboricity $a$ satisfies
$\chi(G^p)\leq c\cdot 2^{p} \cdot a^{\lceil p/2\rceil}\cdot \Delta^{\lfloor p/2\rfloor}$.
Given this insight, for these graphs, the theorem improves to
\begin{align}
    \pr{Z\leq(1-\delta)\mu}
    \leq
    \exp\left(
        -\Omega_a\left(\frac{\delta^2\mu}{\Delta}\right)
    \right),
\end{align}
where the hidden constant polynomially depends on the arboricity bound $a$.
Additionally, as the average degree of these graphs is at most $2a$, with a uniform initialization, it holds $\mu \in \Omega(q/a)$.
Thus, for a constant $a \in O(1)$, a set of size $\Theta(q)$ survives with at least constant probability independent of $q$ and $\Delta$.
In \cref{tab:arboricity-square-coloring} we provide an overview of well-known graph classes and their respective bounds on the arboricity and the chromatic number.

\subsection{Maximal Support on Bounded-Degree Graphs}
\label{sec:maxsupport}
Finally, we bound the maximal support $\hat{\mathcal{S}}$ attained by any opinion during the process. Our main result in this section gives such a bound for opinion graphs of bounded degree when the initial support of every opinion is bounded. Recall that a bound on $\hat{\mathcal{S}}$ also gives a direct bound on the number of productive steps.
The main difficulty is that neighboring opinions are affected by the same interactions, so their supports are highly dependent. The basic idea is to show that the graph quickly breaks into small connected components of active opinions, each containing only a small number of agents. The support of any opinion is then at most the total number of agents in its component.
Before proving the general result, we present the argument for a cycle, where $\Delta=2$ and every opinion initially has support one. This special case already contains the main idea but avoids several additional complications needed for general bounded-degree graphs.

\paragraph{Maximal Support on a Cycle Graph}

On a cycle, we show that sufficiently many opinions become decided before any support grows too large. These opinions split the remaining active opinions into short paths containing only $O(\log n)$ agents. This gives the following bound.

\begin{restatable}{paperthm}{cycleshatter}
\label{prop:cycleshatter}
Let $G=(O,E)$ be a cycle on $n$ opinions, and suppose that every opinion initially has support one. Then, for both the \EdgeVOG and the productive \EdgeVOG processes, with high probability every opinion has support $O(\log n)$ throughout the process.
\end{restatable}

\begin{proof}[Proof sketch]
For the analysis we adopt a random-walk-based view of the process. Each agent can be seen as a token that is assigned to the node of the opinion graph that corresponds to its opinion. Note that, whenever an agent with opinion $i$ adopts another opinion $j$, the corresponding token moves to opinion $j$.
Thus, these tokens perform a (biased) random walk.
 We say that an opinion is decided if either it contains no tokens, or none of its neighboring opinions contains any tokens. Moreover, once an opinion becomes empty, it can effectively be removed from the graph, since no agent can ever adopt this opinion again.

With this abstraction, it is relatively straightforward to bound the maximal support for the cycle graph.
We show with high probability the process quickly decomposes into connected components containing only $O(\log n)$ agents.

To this end, fix an opinion $i$, and let $C^i_d$ and $C'^i_d$ denote the clockwise and counter-clockwise paths of $d\in O(\log n)$ opinions adjacent to $i$. If more than $2d+1$ agents ever gather at $i$, one agent must have started outside these paths and traversed one of them. Every internal opinion on its route must then have been productive at least twice.

Now consider a fixed independent set of at least $d/2$ opinions on one of these paths. Each opinion initially has one agent, so it becomes extinct with probability $1/2$ in its first productive interaction. For an opinion to be productive twice, this first interaction must instead increase its support. Since no interaction involves two opinions in the independent set, we can assign independent fair coins to these first interactions. Thus, the probability that an agent traverses the path is at most $2^{-d/2}$. A union bound over both directions and all opinions gives the desired bound throughout the process. We give the full analysis in \cref{sec:a ring to rule them all}.

\end{proof}

\paragraph{Maximal Support on Graphs of Maximum Degree $\Delta$}
For the generalization to graphs of maximum degree $\Delta$ and initial support at most $\hat{s}$, we show that decision events of opinions at pairwise distance at least four satisfy essentially the same joint tail bound as independent events. This allows us to prove that most opinions are decided before they can attract many agents. However, each opinion may still have up to $O(\Delta^3)$ nearby opinions whose decision events are strongly correlated with its own. This leads to the relatively large polynomial dependence on $\Delta$ in our bound (which we do not believe to be optimal). More precisely, we obtain the following result.

\begin{restatable}[simplified]{paperthm}{shatteringdos}
\label{prop:shatteringdos}
Let $G=(O,E)$ be an arbitrary graph of maximum degree at most $\Delta$, where $\Delta\geq 1$, and suppose that every opinion has initial support at most $\hat{s}$. For the \EdgeVOG and the productive \EdgeVOG processes, with high probability, every opinion has support at most $\tilde{O}(\min\{n,\Delta^6\hat{s}\})$ simultaneously for all steps $t\geq 0$.
\end{restatable}

\begin{proof}[Proof sketch]
Fix an opinion $i\in O$. We want to show that it is unlikely that $i$ belongs to a connected component in which many agents can accumulate. On the cycle, it was enough to find two opinions within distance $O(\log n)$ of $i$ that become decided before being productive twice. Extending this argument to arbitrary graphs and initial configurations introduces two main obstacles. First, if an opinion initially has more than one agent, it no longer becomes extinct with probability $1/2$ after its first productive step.
Second, when $\Delta\geq 3$, the number of connected subsets containing $i$ grows exponentially with their size and not linearly.
Thus, even if any fixed connected subset is unlikely to contain many agents, we cannot directly take a union bound over all such subsets as there are exponentially many.

We address the first obstacle by showing that, for any fixed opinion $i$, the probability that $i$ remains undecided after $\tau$ productive steps is at most $\Delta\hat{s}^2/\tau$. This follows by adapting the potential argument from \Cref{lemma:new_potential}.
In particular, we show that the expected number of productive steps (that involve $i$) before $i$ becomes decided is at most $\Psi_i(\Vec{S}(0))$.
As $i$ has at most $\Delta$ incident edges and the initial potential of each such edge is at most $\hat{s}^2$, it holds $\Psi_i(\Vec{S}(0)) \leq\Delta\hat{s}^2$.
The claim then follows from Markov's inequality.
Next, we consider the dependencies between these opinions:
If $I\subseteq O$ consists of opinions at pairwise distance at least four, then the probability that every opinion in $I$ remains undecided after $\tau$ productive steps is at most $(\Delta\hat{s}^2/\tau)^{|I|}$.
Thus, for the purpose of tail bounds, sufficiently distant opinions behave as if their decision events were independent. The precise statement and its proof are given in \cref{lemma:pkbound}.

To address the second obstacle, we introduce so-called \emph{witness sets}: Fix an opinion $i$ and parameters $\tau,t\geq 0$. Informally, the witness set $W_i^\tau(t)$ is the largest connected set containing $i$ in which every opinion has been productive at least $\tau$ times before step $t$; see \cref{def:witness} for the precise definition. Figure~\ref{fig:witness} shows three examples for different values of $\tau$.

Witness sets allow us to control how many agents can accumulate in a connected part of the active opinion graph. Any agent that holds an opinion in $W_i^\tau(t)$ at time $t$ but did not initially hold an opinion in this set must have crossed its outer boundary. This boundary contains at most $\Delta|W_i^\tau(t)|$ opinions, and if $i$ has been productive at least $\tau$ times, every opinion on the outer boundary has been productive fewer than $\tau$ times; otherwise, it would belong to the witness set as we defined them to be maximal. If $i$ has been productive fewer than $\tau$ times, the witness is $\{i\}$ and its total number of productive boundary interactions is less than $\tau$.

A crude upper bound on the number of agents that can enter the witness set is therefore $\tau\Delta|W_i^\tau(t)|$. Consequently, if both $\tau$ and $|W_i^\tau(t)|$ are small, then the total support inside the witness set is also small.
However, this deterministic bound is too weak for our purposes. We therefore strengthen it slightly: Whenever a productive interaction crosses the boundary, an agent enters the witness set with probability $1/2$ and leaves it with the same probability. Thus, the net influx behaves like a simple random walk. A martingale concentration bound then shows that, with high probability, the net number of agents entering the witness set is at most $\tilde{O}(\sqrt{\tau\Delta}\cdot|W_i^\tau(t)|)$; see \cref{lemma:upperboundS}.

\begin{figure}[t]
\centering

\begin{subfigure}{0.32\textwidth}
    \centering
\resizebox{\linewidth}{!}{\begin{tikzpicture}[
    node/.style={circle, draw, minimum size=7mm, inner sep=0pt},
    witness/.style={circle, draw, thick, fill=gray!15, minimum size=7mm, inner sep=0pt},
    every edge/.style={draw},
    scale=1, transform shape
]

\node[witness, ultra thick, label=above:$i$] (i) at (0,0) {$14$};
\node[witness] (a) at (1.5,1) {$6$};
\node[witness] (b) at (2,0) {$7$};
\node[witness] (c) at (1.5,-1) {$7$};
\node[witness] (d) at (-1.5,0.7) {$8$};

\node[node] (x1) at (3.5,1) {$5$};
\node[node] (x2) at (3.6,-0.5) {$4$};
\node[node] (x3) at (-2.7,1.8) {$1$};
\node[node] (x4) at (-2.8,-0.8) {$1$};
\node[node] (x5) at (0,-2.0) {$3$};

\draw (i) -- (a);
\draw (i) -- (b);
\draw (i) -- (c);
\draw (i) -- (d);
\draw (a) -- (b);
\draw (b) -- (c);

\draw (a) -- (x1);
\draw (b) -- (x2);
\draw (d) -- (x3);
\draw (c) -- (x5);
\draw (x4) -- (d);

\begin{scope}[on background layer]
    \draw[rounded corners=12pt, thick, dashed]
        (-2.1,-1.4) rectangle (2.5,1.6);
\end{scope}

\node[anchor=south west] at (-2.1,1.55) {$W_i^{6}(t)$};

\end{tikzpicture}
}

    \caption{The witness set for $\tau=6$.}
\end{subfigure}
\hfill
\begin{subfigure}{0.32\textwidth}
    \centering
\resizebox{\linewidth}{!}{\begin{tikzpicture}[
    node/.style={circle, draw, minimum size=7mm, inner sep=0pt},
    witness/.style={circle, draw, thick, fill=gray!15, minimum size=7mm, inner sep=0pt},
    every edge/.style={draw},
    scale=1, transform shape
]

\node[witness, ultra thick, label=above:$i$] (i) at (0,0) {$14$};
\node[witness] (a) at (1.5,1) {$6$};
\node[witness] (b) at (2,0) {$7$};
\node[witness] (c) at (1.5,-1) {$7$};
\node[witness] (d) at (-1.5,0.7) {$8$};
\node[witness] (x1) at (3.5,1) {$5$};
\node[witness] (x2) at (3.6,-0.5) {$4$};

\node[node] (x3) at (-2.7,1.8) {$1$};
\node[node] (x4) at (-2.8,-0.8) {$1$};
\node[node] (x5) at (0,-2.0) {$3$};

\draw (i) -- (a);
\draw (i) -- (b);
\draw (i) -- (c);
\draw (i) -- (d);
\draw (a) -- (b);
\draw (b) -- (c);

\draw (a) -- (x1);
\draw (b) -- (x2);
\draw (d) -- (x3);
\draw (c) -- (x5);
\draw (x4) -- (d);

\begin{scope}[on background layer]
    \draw[rounded corners=12pt, thick, dashed]
        (-2.1,-1.4) rectangle (4.2,1.6);
\end{scope}

\node[anchor=south west] at (-2.1,1.55) {$W_i^{4}(t)$};

\end{tikzpicture}
}

    \caption{The witness set for $\tau=4$.}
\end{subfigure}
\hfill
\begin{subfigure}{0.32\textwidth}
    \centering
\resizebox{\linewidth}{!}{\begin{tikzpicture}[
    node/.style={circle, draw, minimum size=7mm, inner sep=0pt},
    witness/.style={circle, draw, thick, fill=gray!15, minimum size=7mm, inner sep=0pt},
    every edge/.style={draw},
    scale=1, transform shape
]

\node[witness, ultra thick, label=above:$i$] (i) at (0,0) {$14$};
\node[witness] (a) at (1.5,1) {$6$};
\node[witness] (b) at (2,0) {$7$};
\node[witness] (c) at (1.5,-1) {$7$};
\node[witness] (d) at (-1.5,0.7) {$8$};
\node[witness] (x1) at (3.5,1) {$5$};
\node[witness] (x2) at (3.6,-0.5) {$4$};
\node[witness] (x5) at (0,-2.0) {$3$};

\node[node] (x3) at (-2.7,1.8) {$1$};
\node[node] (x4) at (-2.8,-0.8) {$1$};

\draw (i) -- (a);
\draw (i) -- (b);
\draw (i) -- (c);
\draw (i) -- (d);
\draw (a) -- (b);
\draw (b) -- (c);

\draw (a) -- (x1);
\draw (b) -- (x2);
\draw (d) -- (x3);
\draw (c) -- (x5);
\draw (x4) -- (d);

\begin{scope}[on background layer]
    \draw[rounded corners=12pt, thick, dashed]
        (-2.1,-2.4) rectangle (4.2,1.6);
\end{scope}

\node[anchor=south west] at (-2.1,1.55) {$W_i^{3}(t)$};

\end{tikzpicture}
}

    \caption{The witness set for $\tau=3$.}
\end{subfigure}

\caption{Witness sets of a fixed opinion $i$ for three values of $\tau$. The number in each node denotes how often the opinion has been productive. As $\tau$ decreases, the witness set can only grow.}
\label{fig:witness}

\end{figure}

Given these observations, it remains to find a sufficiently small value $\tau^\star$ for which every witness set $W_i^{\tau^\star}(t)$ is small, simultaneously for all opinions $i$ and all times $t$. In \cref{lemma:Ctau}, we show that one can choose $\tau^\star=O(\Delta^5\hat{s}^2)$ such that every witness set contains at most $O(\Delta^3\log n)$ opinions with high probability.
The main ingredient of the proof is the joint tail bound above. Let $I$ be a set of opinions at pairwise distance at least four. The probability that every opinion in $I$ remains undecided after $\tau$ productive steps is at most $(\Delta\hat{s}^2/\tau)^{|I|}$. A textbook counting argument shows that, for each fixed opinion $i$, there are at most $O(\Delta^{4c\log n})$ relevant candidate witness sets containing a $4$-separated set $I$ of size $c\log n$. Choosing $\tau=c'\Delta^5\hat{s}^2$ for a sufficiently large constant $c'$, the probability that all opinions in such a set remain undecided is at most $(1/(c'\Delta^4))^{|I|}$. A union bound over all candidate sets therefore shows that, with high probability, no witness set contains a $4$-separated set of size at least $c\log n$.
Now let $I$ be a maximal $4$-separated subset of a witness set. By maximality, every opinion in the witness set is within distance three of some opinion in $I$. Since every radius-three neighborhood contains $O(\Delta^3)$ opinions, we obtain $|W_i^{\tau^\star}(t)|=O(\Delta^3|I|)=O(\Delta^3\log n)$.

Finally, the witness set initially contains at most $\hat{s}|W_i^{\tau^\star}(t)|$ agents, and its net influx is at most $\tilde{O}(\sqrt{\tau^\star\Delta}\cdot|W_i^{\tau^\star}(t)|)$. Substituting $\tau^\star=O(\Delta^5\hat{s}^2)$ and $|W_i^{\tau^\star}(t)|=O(\Delta^3\log n)$ gives a total support of $\tilde{O}(\Delta^6\hat{s})$. Since the support is always at most $n$, this proves the claimed bound. The full proof, including the details omitted here, is given in \Cref{sec:support}.
\end{proof}

\subsection{Conclusion and Future Work}
\label{sec:conclusion}

We studied the voter model with opinion graph. Our goal was to understand how the structure of the opinion graph affects the stabilization time, the number of surviving opinions, and the largest support attained by an opinion. For arbitrary opinion graphs and initial configurations, both the ordinary and the productive EdgeVOG processes stabilize within $\widetilde O(n^2)$ steps in expectation and with high probability. For the productive process, we further showed that the expected stabilization time is exactly the initial number $|P(0)|$ of productive pairs. We also proved that an opinion $i$ survives with probability at least $s_i/(s_i+\sum_{j\in N_i}s_j)$ and established a lower-tail bound for the total number of surviving opinions. In particular, on a $\Delta$-regular opinion graph with equally supported opinions, $\Omega(q/\Delta)$ opinions survive with probability $1-\exp(-\Omega(q/\Delta^3))$. Finally, if every opinion initially has support at most $\widehat s$, then the support attained by any opinion is at most $\widetilde O(\min{n,\Delta^6\widehat s})$ with high probability. Together with our parameterized stabilization bound, this shows that sparse opinion graphs with well-spread initial configurations can stabilize much faster than the worst-case bound while preserving many opinions and preventing any single opinion from becoming dominant.

The most immediate open problem is to improve the dependence on $\Delta$ and $\widehat s$, as we do not believe that our bounds are optimal.
Another direction would be to replace the maximum degree by finer structural parameters of the opinion graph, similar to our results on the survival probability.
It would also be interesting to determine which of our quantitative results extend to the arbitrary interaction graphs considered by Lanchier and Scarlatos. Further directions include dynamic or asymmetric opinion graphs and a more direct comparison with continuous bounded-confidence processes such as the Deffuant--Weisbuch model.

\newpage

\section{Proofs}
\label{sec:proofs}

\subsection{Proof of \cref{lemma:new_potential} (Properties of our Potential Function)}
\label{sec:potential}

In this section, we define the potential function $\Psi(\Vec{S}(t))$ which forms the basis of our analysis and prove some helpful properties.
 For each edge $\{i,j\} \in E$ of the opinion graph, let $\Psi_{ij}(\Vec{S}(t)) = S_i(t)\cdot S_j(t)$ be the product of the endpoints' supports in step $t$. We define

\begin{equation}
\label{eq:basic_potential}
    \Psi(\Vec{S}(t)) = \sum_{\{i,j\} \in E} \Psi_{ij}(\Vec{S}(t)).
\end{equation}

Recall that an opinion is decided if (and only if) $\Psi_i(\Vec{S}(t))=0$. Thus, the convergence time can be restated in terms of this potential as the random variable
$T \coloneqq \min \{ t \geq 0 \mid \Psi(\Vec{S}(t)) = 0\}.$
Note that $\Psi(\Vec{S}(t)) \le n(n-1)/2 \in O(n^2)$, the number of all unordered pairs of agents.
For the proofs, we bound the expected change from $\Psi(\Vec{S}(t))$ to $\Psi(\Vec{S}(t+1))$ for the \EdgeVOG process and then apply appropriate drift theorems to derive results on the expected convergence time. The following lemma bounds the expected difference in the potential from step $t$ to step $t+1$. As we use it for several theorems, we state in a slightly technical way.
\newpotential*
\begin{proof}
\newcommand{\Eij}[1]{\underset{\scriptscriptstyle e(t)= \{i,j\}}{\mathbf{E}}\left[#1\right]}
Throughout the proof, we omit the conditioning on $\{\Vec{S}(t) = \Vec{s}\}$ and let $\Vec{S}' = \Vec{S}(t+1)$ for readability. Further, if the process is converged, then trivially, $\E{\Psi(\Vec{S}')} = \Psi(\Vec{s}) = 0$.
Thus, assume the process is not converged, and in step $t$, a productive pair $\{v,w\} \in A \times A$ is chosen.
Finally, assume that $o_v = i$ and $o_w = j$.
Since the behavior of the process is the same no matter which particular agents with opinions $i$ and $j$ we choose, we simply define the event $e(t) = \{i,j\}$, which indicates that agents with opinion $i$ and $j$ are chosen.
First, we prove the following helpful claim:
\begin{claim}
\label{claim:casedist}
    $\E{\Psi_{kl}(\Vec{S}') \growingmid e(t) = \{i,j\}} = \begin{cases}
        \Psi_{kl}(\vec{s})-1 & \textit{if } \{k,l\} = \{i,j\}\\
        \Psi_{kl}(\vec{s}) & \textit{else.}
    \end{cases}$
\end{claim}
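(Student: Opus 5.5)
We verify the claim by a direct case analysis on how the edge $\{k,l\}$ intersects the active edge $\{i,j\}$. Conditioned on $e(t)=\{i,j\}$, exactly one of two outcomes occurs, each with probability $1/2$. In the first, $S_i'=s_i+1$ and $S_j'=s_j-1$; in the second, $S_i'=s_i-1$ and $S_j'=s_j+1$. All other supports are unchanged in both outcomes. Hence $\Psi_{kl}(\vec S')$ takes only two values, and its conditional expectation is their average.

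\emph{Case 1: $\{k,l\}=\{i,j\}$.} Averaging the two outcomes gives
\[
\tfrac12(s_i+1)(s_j-1)+\tfrac12(s_i-1)(s_j+1)=s_is_j-1 .
\]
The cross terms $\pm s_j \mp s_i$ cancel, and the constant term $-1$ appears in both products, so this equals $\Psi_{ij}(\vec s)-1$.

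\emph{Case 2: $|\{k,l\}\cap\{i,j\}|=1$.} Say $k=i$ and $l\notin\{i,j\}$; the case $k=j$ is symmetric. Then $S_l'=s_l$ deterministically, and $S_i'=s_i\pm1$ with equal probability. By linearity, $\E{\Psi_{il}(\vec S')\mid e(t)=\{i,j\}}=s_l\cdot\E{S_i'}=s_ls_i$.

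\emph{Case 3: $\{k,l\}\cap\{i,j\}=\emptyset$.} Both supports are unchanged, so $\Psi_{kl}$ is unchanged deterministically.

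The main point needing care is the bookkeeping. The coin flip must give a $\pm1$ change to $S_i$ that is symmetric and exactly anticorrelated with the change to $S_j$. This is what the process specifies: one agent from $\{u,v\}$ adopts the other's opinion with probability $1/2$ each, so one unit of support moves from $j$ to $i$ or from $i$ to $j$. Because $\{i,j\}\in E$ and $G$ is simple, $i\neq j$, so these really are two distinct coordinates. The argument also uses no information about which specific agents were chosen, only their opinions, which justifies conditioning on $e(t)$ alone.

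With the claim in hand, the proposition follows. Summing the claim over all $\{k,l\}\in E$ gives $\E{\Psi(\vec S')\mid e(t)=\{i,j\}}=\Psi(\vec s)-1$ for every productive edge type. In a non-productive step the configuration, and hence the potential, is unchanged. Combining the two cases by the law of total expectation over whether the step is productive, and over which edge type is active, yields $\Est{\Psi(\vec S(t+1))}=\Psi(\vec s)-\Est{\mathcal P(t)}$. This holds regardless of the pair-selection rule.
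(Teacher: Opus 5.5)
Your proposal is correct and follows the paper's proof essentially verbatim. It uses the same case split on how $\{k,l\}$ meets the active edge $\{i,j\}$ and the same averaging over the two equally likely transfer directions. Your derivation of the proposition from the claim likewise matches the paper's law-of-total-expectation argument.
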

\begin{proof}

In the following, assume that the \ddw process chooses two agents on opinions $i$ and $j$ connected by an edge $\{i,j\}$ in step $t$.
It is easy to verify that if $e(t) = \{i,j\}$, with probability $\tfrac{1}{2}$ an agent from $i$ moves to $j$ or vice versa.
We will now examine the expected change of every summand of $\Psi(\Vec{S}') = \sum_{\{k,l\} \in E} \Psi_{kl}(\Vec{S}')$.
First, we consider the summand $\Psi_{ij}(\Vec{S}')$ for the edge $\{i,j\} \in E$ with $e(t) = \{i,j\}$.
\begin{align*}
    \Eij{\Psi_{ij}(\Vec{S}')} &=\frac{1}{2}(s_i-1)(s_j+1)+\frac{1}{2}(s_i+1)(s_j-1)=s_i s_j-1.
\end{align*}

Second, we consider a summand $\Psi_{il}(t)$ for edge $\{i,l\} \in E$ with $l \neq j$.
Here, only opinion $i$ is part of the active edge. Thus, the support of $i$ increases or decreases by $1$, whereas the support of $l$ remains unchanged.
Therefore,
\begin{align*}
    \Eij{\Psi_{il}(\Vec{S}')} &= \frac{1}{2}\bigl((s_i-1)s_l+(s_i+1)s_l\bigr)=s_i s_l.
\end{align*}
For $\Psi_{kj}(t)$ with $k\neq i$, the calculation is equivalent.
Finally, we consider $\Psi_{kl}(t)$ for some edge $\{ k,l \}$ with $\{ k,l \} \cap \{ i,j \} = \emptyset$. This potential is trivially unchanged as no endpoint changes its support.
\end{proof}
\noindent Now, we use the law of total expectation and linearity of expectation to prove the lemma. We define $\rho(t) \coloneqq \pr{\mathcal{P}(t) = 1} = \sum_{\{i,j\} \in E}\pr{e(t) = \{i,j\}}$ as the probability of an productive step.
\begin{align}
\label{eq:psihelper}
     \E{\Psi(\Vec{S}')} &= (1-\rho(t)) \cdot \Psi(\Vec{s}) + \rho(t)\E{\Psi(\Vec{S}') \growingmid \mathcal{P}(t) = 1}.
\end{align}
\newcommand{\Et}[1]{\underset{\scriptscriptstyle \mathcal{P}(t) = 1}{\mathbf{E}}\left[#1\right]}
\newcommand{\prt}[1]{\underset{\scriptscriptstyle \mathcal{P}(t) = 1}{\mathbf{Pr}}\left[#1\right]}
Now use the linearity of expectation to get
\begin{align*}
     \Et{\Psi(\Vec{S}')}
     &= \Et{\sum_{\{k,l\} \in E}\Psi_{kl}(\Vec{S}') }
     =\sum_{\{i,j\} \in E} \prt{ e(t) = \{i,j\} } \left(\sum_{\{k,l\} \in E } \Eij{\Psi_{kl}(\Vec{S}')}\right).
\end{align*}
Next, we apply \Cref{claim:casedist} to bound each $\E{\Psi_{kl}(\Vec{S}') \growingmid e(t) = \{i,j\}}$ for all $\{k,l\}, \{i,j\} \in E$:
\begin{align*}
        \Et{\Psi(\Vec{S}')}&=\sum_{\{i,j\} \in E} \prt{e(t) = \{i,j\}} \left( {\Psi_{ij}(t)} - 1 + \sum_{\{k,l\} \in E \setminus \{\{i,j\}\}} \Psi_{kl}(\Vec{s})\right)\\
     &= (\Psi(\Vec{s})-1)\sum_{\{i,j\} \in E} \prt{e(t) = \{i,j\}} = (\Psi(\Vec{s})-1) .
\end{align*}
Substituting this in \Cref{eq:psihelper} allows to conclude
\begin{align*}
      \E{\Psi(\Vec{S}')} &= (1-\rho(t))\Psi(\Vec{s}) + (\Psi(\Vec{s})-1) \rho(t)  = \Psi(\Vec{s})-\rho(t) =  \Psi(\Vec{s})-\pr{\mathcal{P}(t) = 1}. \qedhere
\end{align*}

\end{proof}

\subsection{Proof of \cref{thm:newPotential} (Stabilization Time of the Ordinary Process)}
\label{sec:general graph bound}
In this section, we prove \Cref{thm:newPotential}.
For the first statement, in addition to \Cref{lemma:new_potential}, we use the following multiplicative drift theorem~\cite{DoerrJW12, Johannsen2010,DBLP:conf/gecco/RoweS12} in the version by Lengler~\cite{Lengler2020}.
\begin{theorem}[{Theorem 2.3.11 in \cite{Lengler2020} (Based on \cite{DoerrJW12,Johannsen2010,DBLP:conf/gecco/RoweS12}})]
\label{driftthm:multiplicative}
Let $(X_t)_{t\geq 0}$ be a sequence of non-negative random variables with a finite state space $\mathcal{S} \subseteq \mathbb{R}_0^+$ such that $0 \in \mathcal S$. Let $s_{\text{min}} := \min(\mathcal{S} \setminus \{0\})$, let $T := \inf\{t \geq 0 \mid X_t =0\}$, and for $t \geq 0$ and $s \in \mathcal S$ let $\partial(s) := E[X_{t} - X_{t+1} \mid X_t = s]$. Suppose there exists $\delta > 0$ such that for all $s \in \mathcal{S} \setminus \{0\}$ and all $t \ge 0$ the drift is $\partial_t(s) \ge \delta s$. Then, $\E{T} \leq \frac{1 + \E{\ln(X_0/s_{\text{min}})]}}{\delta}$.
\end{theorem}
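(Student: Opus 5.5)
We prove this multiplicative drift theorem by reducing it to an additive drift statement for a logarithmic transformation of $X_t$. Define $g:\mathcal S\to\mathbb R_{\geq 0}$ by $g(0)=0$ and $g(s)=1+\ln(s/s_{\min})$ for $s>0$. Since $\mathcal S$ is finite, $g$ is bounded. Moreover, $g(s)\geq 1$ for every $s\in\mathcal S\setminus\{0\}$, so $g(X_t)=0$ if and only if $X_t=0$. Hence $T$ is also the hitting time of $0$ for $g(X_t)$. Throughout, we read the drift condition as holding conditionally on the history up to time $t$ together with $X_t=s$, as is implicit in the cited version. In the Markovian applications of this paper, this reading coincides with the stated one.

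The key step is the pointwise inequality
\[
g(s)-g(x)\;\geq\;\frac{s-x}{s}\qquad\text{for all } s\in\mathcal S\setminus\{0\},\ x\in\mathcal S .
\]
We check it in two cases.
\begin{itemize}
\item If $x=0$, the right-hand side equals $1$, and $g(s)\geq 1$.
\item If $x>0$, then $x\geq s_{\min}$, and the left-hand side is $\ln(s/x)$. Concavity of $\ln$ gives $\ln(s/x)\geq 1-x/s$, which is the right-hand side.
\end{itemize}
Apply this with $x=X_{t+1}$ and take conditional expectations given $X_t=s>0$. This yields
\[
\mathbf E[g(X_t)-g(X_{t+1})\mid X_t=s]\;\geq\;\frac{\partial_t(s)}{s}\;\geq\;\delta,
\]
so $g(X_t)$ has additive drift at least $\delta$ before time $T$.

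It remains to prove the additive drift bound $\mathbf E[T]\leq \mathbf E[g(X_0)]/\delta$. Consider $Y_t:=g(X_{t\wedge T})+\delta\,(t\wedge T)$. By the previous step, $(Y_t)$ is a supermartingale: on $\{t<T\}$ its expected increment is at most $-\delta+\delta=0$, and on $\{t\geq T\}$ it is constant. We avoid assuming $\mathbf E[T]<\infty$ by working with the truncation $T\wedge m$ for a fixed $m$. The supermartingale property gives
\[
\delta\,\mathbf E[T\wedge m]\;\leq\;\mathbf E[g(X_0)]-\mathbf E[g(X_{T\wedge m})]\;\leq\;\mathbf E[g(X_0)],
\]
where the last step uses $g\geq 0$. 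Letting $m\to\infty$ and applying monotone convergence gives $\mathbf E[T]\leq \mathbf E[g(X_0)]/\delta=(1+\mathbf E[\ln(X_0/s_{\min})])/\delta$. Here we use the convention $\ln(0/s_{\min}):=-1$ when $X_0=0$, which is consistent with $g(0)=0$; in that case $T=0$ and the bound is trivial.

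The only delicate point is the case $X_{t+1}=0$ in the pointwise inequality. This case is exactly why $g$ carries the additive constant $1$, which absorbs the jump to $0$. Once this is settled, the remaining steps are the routine optional-stopping argument above.
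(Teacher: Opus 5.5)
Your proof is correct and is the standard argument. The paper does not prove this statement itself but imports it from Lengler, where it is obtained the same way: additive drift is applied to the potential $g(s)=1+\ln(s/s_{\min})$, $g(0)=0$ (equivalently, the variable drift theorem with $h(s)=\delta s$). Your reading of the drift condition as conditional on the full history is not merely cosmetic. If one conditions only on $\{X_t=s\}$ and $0$ is not absorbing, a non-Markovian $\{0,1\}$-valued process can satisfy the hypothesis while $T=\infty$ with positive probability. That reading is satisfied in the paper's application, since \Cref{lemma:new_potential} gives the drift conditional on the whole configuration $\Vec{S}(t)$.
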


The expected convergence time for the ordinary \EdgeVOG process follows from using \Cref{driftthm:multiplicative} to bound the time when $\Psi(\Vec{S}(t))$ becomes $0$.
Note that $X_t$ here becomes the value of our potential $\Psi(\Vec{S}(t))$. This implies $X_0 \le {n \choose 2}$ and $s_{\min} = 1$. The probability to make a productive step in the global process is $\Psi(\Vec{S}(t))/{n \choose 2} \ge \Psi(\Vec{S}(t))/n^2$.
Note that, by the law of total expectation, \Cref{lemma:new_potential} implies that for every value $\psi \in [0,n(n-1)]$ we have
\begin{align*}
    \E{\Psi(\Vec{S}(t+1)) \growingmid \Psi(\Vec{S}(t)) = \psi } = \psi - \E{\mathcal{P}(t) \growingmid \Psi(\Vec{S}(t)) = \psi}.
\end{align*}
Therefore, for a fixed value $\Psi(\Vec{S}(t)) = \psi$, \Cref{lemma:new_potential} implies:
\begin{align}
    \label{eq:globalMulti}
    \E{\Psi(\Vec{S}(t+1)) \growingmid \Psi(\Vec{S}(t)) = \psi} \leq \psi - \frac{\psi}{n^2} \leq \psi\left(1-\frac{1}{n^2}\right).
\end{align}
Thus, we can use \cref{driftthm:multiplicative} with $\delta = \frac{1}{n^2}$ and obtain an expected convergence time of $O(n^2 \log n)$.

For the high probability bound, we can use (arguably) even simpler observations.
Write $N=\binom n2$. Since the probability of a productive step is exactly $\Psi(\Vec S(t))/N$, the law of total expectation gives
\begin{align*}
    \E{\Psi(\Vec S(t+1))}
    &=\E{\Psi(\Vec S(t))-\frac{\Psi(\Vec S(t))}{N}} =\left(1-\frac1N\right)\E{\Psi(\Vec S(t))}.
\end{align*}
An elementary induction and $1-1/N\leq e^{-1/N}$ therefore yield
\[
    \E{\Psi(\Vec S(t))}\leq\Psi(\vec s_0)e^{-t/N}\leq n^2e^{-t/N}.
\]
For any fixed $c>1$, choosing $t=\lceil(c+2)N\log n\rceil$ makes this expectation at most $n^{-c}$.
As the potential is a nonnegative integer and is positive exactly before stabilization, Markov's inequality gives
\[
    \pr{T>t}=\pr{\Psi(\Vec S(t))\geq1}\leq n^{-c}.
\]
This proves the high-probability bound in \Cref{thm:newPotential}.

\newpage
\subsection{Proof of \Cref{thm:ProdWorstCase} (Worst-Case Stabilization Time of the Productive Process)}
\label{sec:wcproductive}
First, we note that the ordinary \EdgeVOG process always requires as least as many steps as the productive \EdgeVOG process (as the latter simply ignores the non-productive steps). Hence, with high probability, it also requires $\tO{n^2}$ steps.
Therefore, we will only prove the improved bound on the expected convergence time of the productive \EdgeVOG process.
For the proof, we use the following drift theorem.
\begin{theorem}[Theorem 2.3.1 in \cite{Lengler2020} (Based on \cite{HeYao:04:drift})]\label{driftthm:additive}
Let $(X_t)_{t\geq 0}$ be a sequence of non-negative random variables with a finite state space $\mathcal{S} \subseteq \mathbb{R}_0^+$ such that $0 \in \mathcal S$. Let $T := \inf\{t \geq 0 \mid X_t =0\}$.
 Suppose that there exists $\delta > 0$ such that for all $s\in \mathcal S \setminus\{0\}$ and for all $t\geq 0$, we have $\partial_t(s):=\E{X_t-X_{t+1}\mid X_t=s} = \delta$.
Then,
$\E{T} = \frac{\E{X_0}}{\delta}.$
\end{theorem}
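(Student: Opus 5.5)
The plan is the standard martingale proof: add the linear drift back to the process, which turns it into a martingale, and then pass to the limit using only the finiteness (hence boundedness) of the state space. Let $\mathcal F_t$ be the natural filtration of $(X_t)$. I will read the drift hypothesis in the form in which it is used in Lengler's version: $\E{X_t-X_{t+1}\growingmid \mathcal F_t}=\delta$ on the event $\{X_t\neq 0\}$. Conditioning only on $X_t=s$ is too weak for a non-Markovian sequence, so I will add a short remark. In our application, $X_t=\Psi(\Vec S(t))$ and \cref{lemma:new_potential} gives the drift conditional on the full configuration $\Vec S(t)=\vec s$. The process $\Vec S(t)$ is Markov, so this drift also holds conditional on the entire history, and the stronger hypothesis is met.

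\textbf{Step 1 (martingale).} Define $Y_t:=X_{t\wedge T}+\delta\,(t\wedge T)$. On $\{T\le t\}$ we have $Y_{t+1}=Y_t$. On $\{T>t\}$ we have $X_t\neq 0$, and the hypothesis gives
\[
\E{Y_{t+1}-Y_t\growingmid\mathcal F_t}=\E{X_{t+1}-X_t\growingmid\mathcal F_t}+\delta=0 .
\]
Since $\{T>t\}\in\mathcal F_t$, this shows that $(Y_t)$ is a martingale. Taking expectations,
\[
\E{X_{t\wedge T}}+\delta\,\E{t\wedge T}=\E{X_0}\qquad\text{for every } t\ge 0 .
\]

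\textbf{Step 2 (finiteness of $\E{T}$).} Because $X_{t\wedge T}\ge 0$, Step 1 gives $\delta\,\E{t\wedge T}\le \E{X_0}\le \max\mathcal S<\infty$. Letting $t\to\infty$, monotone convergence yields $\E{T}\le \E{X_0}/\delta<\infty$. In particular, $T<\infty$ almost surely.

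\textbf{Step 3 (passing to the limit).} Since $T<\infty$ almost surely, $X_{t\wedge T}\to X_T=0$ almost surely. The variables $X_{t\wedge T}$ are uniformly bounded by $\max\mathcal S$, so bounded convergence gives $\E{X_{t\wedge T}}\to 0$. Monotone convergence gives $\E{t\wedge T}\to\E{T}$. Substituting both limits into the identity from Step 1 yields $\delta\,\E{T}=\E{X_0}$.

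The main subtlety is Step 2. We cannot invoke optional stopping directly, because the integrability of $T$ is not known in advance. It is obtained for free here from the nonnegativity of $X$ together with the bounded, finite state space. The only other point requiring care is the remark above: the drift must hold conditional on the history, not merely on the current value $X_t$.
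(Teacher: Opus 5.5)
The paper does not prove this statement; it imports it (in simplified form) from Lengler's survey, so there is no internal argument to compare against. Your martingale proof is correct and self-contained. Under your hypothesis, $Y_t=X_{t\wedge T}+\delta(t\wedge T)$ is a martingale. Nonnegativity of $X$ then gives $\E{T}\le\E{X_0}/\delta<\infty$ by monotone convergence. Finally, the finite (hence bounded) state space lets bounded convergence upgrade this to equality.

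Your caveat about the hypothesis is well founded and worth keeping. As literally written, with conditioning only on $X_t=s$ and no assumption that $0$ is absorbing, the statement is false. Take $\mathcal S=\{0,1\}$ and three equally likely deterministic paths: $X_t\equiv1$; $X_t=1$ exactly at even $t$; and $X_t=1$ exactly at odd $t$. At every time $t$, the paths with $X_t=1$ are the constant one (drift $0$) and one alternating one (drift $1$). Hence $\E{X_t-X_{t+1}\growingmid X_t=1}=\tfrac12$ for all $t$, while $T=\infty$ with probability $\tfrac13$, so $\E{T}=\infty\neq\E{X_0}/\delta=\tfrac43$.

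A different repair keeps the value-only condition but assumes $0$ is absorbing. Then the law of total expectation gives $\E{X_t}-\E{X_{t+1}}=\delta\,\pr{X_t\neq0}=\delta\,\pr{T>t}$. This telescopes directly to your Step 1 identity $\E{X_{t\wedge T}}+\delta\,\E{t\wedge T}=\E{X_0}$, without any martingale, and your Steps 2 and 3 then go through unchanged. So your route needs history conditioning but not absorption, whereas the telescoping route needs absorption but only value conditioning. In the paper's application both hold ($\Psi=0$ is absorbing and $\Vec{S}(t)$ is Markov), so either version suffices there.
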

The original formulation in \cite{Lengler2020} is slightly more general. We simplified it for our purposes\footnote{Lengler also gives specific bounds when only an upper or lower bound of $\delta$ is known. However, as we can compute the exact value, we omitted these cases here.}.
That being said, consider the potential function $\Psi(\Vec{S}(t))$ and the convergence time $T \coloneqq \min \{t \geq 0: \Psi(\Vec{S}(t)) = 0 \}$. Our goal is to bound $\E{T}$ using \Cref{driftthm:additive}. Towards this end, we bound the term
\begin{align*}
    \partial_t(s):=\E{\Psi(\Vec{S}(t))-\Psi(\Vec{S}(t+1))\growingmid \Psi(\Vec{S}(t)) = s}
\end{align*}
for all possible values $s > 0$.
Note that the productive \EdgeVOG process picks exactly one pair of agents per step as long as there are productive pairs and none otherwise.
This means that we can apply \Cref{lemma:new_potential} to bound the change in potential. The  lemma states that the potential decreases by $\E{\mathcal{P}(t) \growingmid \Vec{S}(t) = \Vec{s}}$ in expectation where $\mathcal{P}(t)$ is a random variable which equals 1 if step $t$  is productive.
Clearly, as long as there are productive pairs, we have  $\Psi(\Vec{S}(t)) > 0$.
Thus, the probability of a productive step is $1$, which means $\E{\mathcal{P}(t) \mid \Psi(\Vec{S}(t)) > 0} = 1$.
Therefore, by \Cref{lemma:new_potential} it holds for all $s > 0$ that
\begin{align}
    \partial_t(s) & = \E{\Psi(\Vec{S}(t))-\Psi(\Vec{S}(t+1))\mid \Psi(\Vec{S}(t)) = s}
    = s - \E{\Psi(\Vec{S}(t+1))\mid \Psi(\Vec{S}(t)) = s} \nonumber \\
    &= s -  \left(s - \E{\mathcal{P}(t) \growingmid \Psi(\Vec{S}(t)) = s}\right) = s-(s-1) = 1.\label{eqn:proddiff}
\end{align}
Thus, we have $\delta = 1$ for all $s > 0$, so \Cref{driftthm:additive} yields an expected convergence time of $\E{T} = \Psi(\Vec{s}_0) = |P(0)|$, and the proposition follows.

    Consider a star graph of $n$ opinions/nodes, with one agent holding each opinion.
This configuration has $n-1$ productive pairs.
The process stabilizes precisely when the support of the center opinion reaches $0$ or $n$.
In each step, exactly one agent either adopts or abandons the center opinion, with both events occurring with equal probability.
Consequently, the support of the center evolves as a one-dimensional random walk with absorbing barriers at $0$ and $n$.
By \Cref{lem:gr_tail_lower}, the process requires $\Omega(n^2)$ steps with probability $\Omega(1/n)$.

\subsection{Proof of \Cref{thm:newPotentialProd} (Parameterized Stabilization Time of the Productive Process)}
\label{sec:whpproductive}

\newcommand{\tPsi}{\ensuremath{\tilde{\Psi}}}

Before we start with the proof, we begin with some helpful definitions and observations.

For the proof, write $\hat{\mathcal S}=\hat{\mathcal S}_a(G,\vec s_0)$ for a sufficiently large fixed failure exponent $a$, and let $b\coloneqq\Delta\hat{\mathcal S}$.
If the initial configuration is already stable, there is nothing to prove. Otherwise, $\Delta\geq1$ and $b\geq1$.
Let
\[
    \tau_{\mathrm{supp}}\coloneqq\inf\{t\geq0:\max_{i\in O}S_i(t)>\hat{\mathcal S}\},
    \qquad
    \tilde{T}\coloneqq T\wedge\tau_{\mathrm{supp}}.
\]
By definition of $\hat{\mathcal S}$, $\pr{\tau_{\mathrm{supp}}<\infty}\leq n^{-a}$.
Rather than conditioning on the event that the support cap is never exceeded, we use the stopped potential
\[
    \tilde\Psi(\Vec S(t))\coloneqq\Psi(\Vec S(t\wedge\tau_{\mathrm{supp}})).
\]
Thus, $\tilde\Psi$ freezes at the first support violation, while $\tilde{T}$ is the first time when the process either stabilizes or violates the cap.
The original process and the stopped process agree until $\tilde{T}$. Hence, a high-probability bound for $\tilde{T}$ gives the same bound for $T$ after adding at most $n^{-a}$ to the failure probability.

\medskip

The remaining section is structured as follows.
On a high level, the proof proceeds in two phases, depending on whether the potential is large or small.
In Lemma~\ref{lemma:azumapotential} we first establish an auxiliary result that is used in both phases.
It shows that, for short time horizons,
the potential grows at most by a small constant factor.
Using this tool, we distinguish two phases.
{Let $c \geq 1$ be a constant that will be fixed in the analysis.}
In the first phase where $\tPsi(\Vec{S}(t)) \ge 32c\cdot\Delta^4\cdot b^2\log^2 n$,
we show that the potential decreases rapidly.
Proposition~\ref{prop:largepotential} proves that within $O(\tPsi(\Vec{s}_0) + b^2 \log^3 n)$ steps
the potential drops to $32c\cdot\Delta^4\cdot b^2\log^2 n$ with high probability.
The proof is based on a layered induction argument that partitions time into phases.
In each phase, a constant fraction of the remaining productive edges is eliminated.
In the second case, once $\tPsi(\Vec{S}(t)) \le 32c\cdot\Delta^4\cdot b^2\log^2 n$,
the argument no longer applies as there too few opinions left.
Instead, Proposition~\ref{prop:smallpotential} shows that in this regime
the process converges within $O(\Delta^4 \cdot b^2 \log^3 n)$ additional steps with high probability.
Combining both cases yields the desired convergence bound for the productive
\EdgeVOG process.

That being said, we begin with a technical result that we need for either case:
We show that the productive edges are highly concentrated if we do not observe the process for too long.
\begin{lemma}
\label{lemma:azumapotential}

Let $K=\Psi(\vec s_0)+b^2\log^2n$. Then, for every $r\geq1$ and integer $\tau\leq r^2K\log n$,
\[
    \pr{\max_{0\leq u\leq\tau}\tilde\Psi(\Vec S(u))
       >\Psi(\vec s_0)+8r^{3/2}K}
    \leq n^{-r}.
\]

\end{lemma}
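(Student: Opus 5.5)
The plan is to view the stopped potential $\tilde\Psi(\Vec S(u))$ as a supermartingale with bounded increments and apply the maximal form of Azuma--Hoeffding. By \cref{lemma:new_potential}, in the productive process we have $\E{\Psi(\Vec S(u+1))-\Psi(\Vec S(u)) \growingmid \Vec S(u)}=-1$ as long as the process has not stabilized, and the difference is $0$ afterwards. Stopping at $\tau_{\mathrm{supp}}$ preserves this property. Hence $\tilde\Psi(\Vec S(u))$ is a supermartingale with respect to the natural filtration. It is convenient to add back the drift and work with the martingale
\[
    M_u:=\tilde\Psi(\Vec S(u))+\#\{u'<u:\ u'<\tilde T\}.
\]
This satisfies $\tilde\Psi(\Vec S(u))\le M_u$, so it suffices to bound $\max_{u\le\tau}(M_u-M_0)$.

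Next we bound the increments. Before $\tau_{\mathrm{supp}}$, every opinion has support at most $\hat{\mathcal S}$. A productive step along an edge $\{i,j\}$ moves one agent between $i$ and $j$. Only the edge potentials $\Psi_{ik}$ and $\Psi_{jk}$ change, and each $\Psi_{ik}$ with $k\ne j$ changes by $s_k\le\hat{\mathcal S}$. Thus the total change is at most $\sum_{k\in N_i}s_k+\sum_{k\in N_j}s_k+1\le 2\Delta\hat{\mathcal S}+1\le 3b$. There is a subtlety at the step into $\tau_{\mathrm{supp}}$ itself: the supports before that step are still at most $\hat{\mathcal S}$, so the same bound applies. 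Hence $|M_{u+1}-M_u|\le 3b+1\le 4b$.

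The maximal Azuma inequality (Doob applied to $e^{\lambda M_u}$) then gives
\[
    \pr{\max_{u\le\tau}(M_u-M_0)>x}\le\exp\left(-\frac{x^2}{2\tau(4b)^2}\right).
\]
Plug in $x=8r^{3/2}K$ and $\tau\le r^2K\log n$. The exponent is at least $64r^3K^2/(32 r^2 K b^2\log n)=2rK/(b^2\log n)$. Since $K\ge b^2\log^2 n$, this is at least $2r\log n$, so the probability is at most $n^{-2r}\le n^{-r}$. On the complementary event, $\tilde\Psi(\Vec S(u))\le M_u\le\Psi(\vec s_0)+8r^{3/2}K$ for all $u\le\tau$.

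The main obstacle is making sure the increment bound holds uniformly, including the step at which the support cap is first exceeded and the steps after stabilization. Both are handled by the stopping construction: the increments are $0$ after $\tilde T$, and at the violating step they are computed from pre-step supports. The constants leave ample slack. If one prefers not to use the compensated martingale, one can apply Azuma directly to the supermartingale $\tilde\Psi$, which gives the same bound.
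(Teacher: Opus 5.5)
Your proof is correct and follows essentially the same route as the paper. In both, the stopped potential is a supermartingale whose increments are $O(b)$, computed from pre-step supports (including at the step that first violates the cap), and maximal Azuma with $\tau\le r^2K\log n$ and $K\ge b^2\log^2 n$ gives the tail. The differences are minor. The paper applies Azuma directly to the supermartingale and uses the sharper increment bound $\Delta\hat{\mathcal S}+1\le 2b$, since the two neighbor sums enter with opposite signs; this yields $n^{-8r}$ rather than your $n^{-2r}$. The paper also treats $\tau=0$ separately, which your Azuma step implicitly needs too.
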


The lemma follows from the maximal upper-tail version of Azuma's inequality (see \Cref{thm:azuma}) and the fact that the stepwise change of the stopped potential is bounded by $2b$.

\begin{proof}

Note that $(\tilde\Psi(\Vec S(t)))_{t\geq0}$ is a supermartingale, since it is obtained by stopping the original potential at $\tau_{\mathrm{supp}}$.
Next, consider its stepwise difference before $\tau_{\mathrm{supp}}$.
W.l.o.g. let $i$ receive an agent and let $j$ lose an agent in step $t$.
A direct calculation gives
\[
    \Psi(\Vec S(t+1))-\Psi(\Vec S(t))
    =\sum_{h\in N_i}S_h(t)-\sum_{h\in N_j}S_h(t)-1.
\]
Since every support before this interaction is at most $\hat{\mathcal S}$,
\[
    |\tilde\Psi(\Vec S(t+1))-\tilde\Psi(\Vec S(t))|
    \leq\Delta\hat{\mathcal S}+1\leq2b.
\]
This also holds for the interaction that first violates the support cap, since the bound uses the configuration before that interaction. After $\tau_{\mathrm{supp}}$, the difference is zero.
Thus, for $\tau\geq1$, maximal Azuma gives
\begin{align*}
    \pr{\max_{0\leq u\leq\tau}\tilde\Psi(\Vec S(u))
       >\Psi(\vec s_0)+8r^{3/2}K}
    &\leq\exp\left(-\frac{(8r^{3/2}K)^2}{2\tau(2b)^2}\right) \leq\exp\left(-\frac{8rK}{b^2\log n}\right)
    \leq n^{-8r}\leq n^{-r}.
\end{align*}
For $\tau=0$, the assertion is immediate. This proves the lemma.

\end{proof}

Equipped with this lemma, we now distinguish between the two cases mentioned earlier.

\subsubsection*{(Case 1) ${\Psi(\Vec{S}(t)) \geq 32c\cdot\Delta^4\cdot b^2\log^2 n}$}
In this first part, we consider only configuration with sufficiently large potentials.
We show that the potential rapidly decreases from its initial value of $\Psi(\Vec{s}_0)$ to $32c\cdot\Delta^4 \cdot b^2 \cdot \log^2 n$ within $O(\Psi(\Vec{s}_0) + b^2\log^3 n)$ steps, w.h.p.
Formally, we show that the following holds:
\begin{restatable}{prop}{largepotential}
\label{prop:case1}
\label{prop:largepotential}

For any fixed $c>3$, the productive \EdgeVOG process reaches a configuration with potential at most $32c\Delta^4b^2\log^2n$ within $O_c(\Psi(\vec s_0)+b^2\log^3n)$ steps w.h.p.

\end{restatable}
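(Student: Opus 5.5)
We follow the phase structure already outlined in the proof sketch of \cref{thm:newPotentialProd}. Set $H:=32c\Delta^4b^2\log^2n$. Starting from a configuration with potential $\psi>H$, a \emph{phase} ends at the first time the stopped potential $\tilde\Psi$ drops to at most $\max\{\psi/2,H\}$, or at $\tilde T$.

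The core step, which we state as \cref{lemma:inductionstep}, is the following. Conditional on the history at the start of a phase with potential $\psi$, the phase ends within $L:=C(\psi+b^2\log^2n)$ steps with probability $1-n^{-(c+1)}$, for a constant $C=C_c$. Given this lemma, at most $\log_2(n^2)=O(\log n)$ phases occur, and their lengths form a geometric series $\sum_k C(\psi_0 2^{-k}+b^2\log^2 n)=O(\Psi(\vec s_0)+b^2\log^3n)$. A union bound over the phases then gives the proposition, with failure probability $O(n^{-c}\log n)$ plus the $n^{-a}$ from support violations.

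To prove \cref{lemma:inductionstep}, we first apply \cref{lemma:azumapotential} with $K=\psi+b^2\log^2n$ and a constant $r$. This shows that during the $L$ steps of the phase the potential stays below $B:=\psi+8r^{3/2}K=O(K)$, w.h.p.

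We then define the truncated node potentials $\hat\Psi_i^{(B)}$. These equal $\Psi_i$ until $\tau_{\mathrm{supp}}\wedge\tau_B$ and are zero afterwards. In each step, opinion $i$ is active with probability $\Psi_i(t)/\Psi(t)\ge\Psi_i(t)/B$. Repeating the computation of \Cref{lemma:new_potential} restricted to the edges at $i$, each active step decreases $\Psi_i$ by one in expectation, while changes in neighbouring supports that are caused by interactions not involving $i$ are symmetric. This gives, roughly, $\E{\hat\Psi_i^{(B)}(t+1)\mid\mathcal F_t}\le\hat\Psi_i^{(B)}(t)(1-1/B)$.

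For a set $I$ of opinions at pairwise distance at least four, a single interaction touches the closed neighbourhood of at most one $i\in I$. Hence the product $\prod_{i\in I}\hat\Psi_i^{(B)}(t)/(1-1/B)^{t}$, or a suitable normalised version, is a supermartingale. This yields the joint moment bound of \cref{lemma:prodpotential}:
\[
\E{\prod_{i\in I}\hat\Psi_i^{(B)}(t)}\le\prod_{i\in I}\Psi_i(0)e^{-t/B}.
\]

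We then partition $O$ into $\Delta^3+1$ classes that are $4$-separated, using a greedy colouring of $G^3$. For each class we apply \cref{thm:scheidelbound} to the variables $\hat\Psi_i^{(B)}(L)/(2b)$; each $\Psi_i\le\Delta\hat{\mathcal S}^2$, and we rescale so the variables lie in $[0,1]$. With $L=\Theta(B)$ the mean of each class is at most $\psi e^{-L/B}/(8(\Delta^3+1))$. The Chernoff-type bound then shows that each class sum exceeds $\psi/(2(\Delta^3+1))$ with probability at most $\exp(-\Omega(\psi/(\Delta^3 b^2)))$. Because $\psi\ge H=32c\Delta^4b^2\log^2n$, this is at most $n^{-(c+2)}$. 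Summing over the classes gives $\hat\Psi^{(B)}(L)\le\psi/2$ w.h.p. On the event that neither $\tau_B$ nor $\tau_{\mathrm{supp}}$ occurred, this is exactly the phase-ending condition.

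The main obstacle is justifying the product supermartingale for $\hat\Psi_i^{(B)}$. The truncation at $\tau_B$ and the state-dependent activation probability $\Psi_i/\Psi$ must combine correctly to give a contraction factor of at least $1-1/B$. Steps in which a neighbour $j$ of $i$ interacts with some $k\notin N_i\cup\{i\}$ must also not increase $\Psi_i$ in expectation; they do not, because $S_j$ moves symmetrically. My first attempt would be to prove the contraction for a single $i$ by conditioning on the active edge, and then to use $4$-separation to argue that the conditional expectation of the product factorises, since at most one factor changes in any step.
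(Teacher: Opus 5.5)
\textbf{There is a genuine gap: the per-class tail step does not go through, because you use the same threshold for every colour class.} The rest of your outline matches the paper's proof: the Azuma cap at $B=O(K)$, the node potentials killed at $\tau_{\mathrm{supp}}\wedge\tau_B$, the product bound on $4$-separated sets, a Scheideler-type bound per colour class of $G^3$, and layered induction over $O(\log n)$ phases.

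You assert that after $L=\Theta(B)$ steps the mean of every class is at most $\psi e^{-L/B}/(8(\Delta^3+1))$. The product bound only gives $\E{\sum_{i\in I}\hat\Psi_i^{(B)}(L)}\le e^{-L/B}\psi_I$, where $\psi_I:=\sum_{i\in I}\Psi_i(\vec s_0)$. Nothing forces the colour classes to share the potential evenly; the only general bound is $\psi_I\le\psi$. For example, let $G$ consist of many disjoint stars. A greedy colouring of $G^3$ may put all centres into one class, and since every productive pair is incident to a centre, that class has $\psi_I=\psi$.

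In your construction $L/B$ is a constant depending only on $c$. So this class has mean a constant fraction of $\psi$, which exceeds your threshold $\psi/(2(\Delta^3+1))$ by a factor of order $\Delta^3$ once $\Delta^3\gg e^{L/B}$. Then \cref{thm:scheidelbound}, which needs $\delta\ge e^2$, gives nothing. To enforce the uniform threshold you would need $e^{-L/B}\le 1/(8(\Delta^3+1))$, i.e.\ $L=\Theta(B\log\Delta)$. That multiplies both $\Psi(\vec s_0)$ and $b^2\log^3 n$ by $\log\Delta$, so it does not prove the proposition, whose $O_c(\cdot)$ constant must not depend on $\Delta$.

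\textbf{The fix, as in the paper, is a class-dependent threshold.}
\begin{enumerate}
\item Take $L=\lceil B\log(32c)\rceil$, so that $\mu_I\le\psi_I/(32c)$.
\item Set $a_I=\psi_I/4+8cb^2\log n$. Then $\delta=a_I/\mu_I\ge 8c\ge e^2$.
\item The form $\exp(-\delta\mu/b^2)=\exp(-a_I/b^2)\le n^{-8c}$ of \cref{thm:scheidelbound} now holds for every class, however small $\psi_I$ is.
\item Sum over the at most $\Delta^3+1\le 2\Delta^3$ classes, using $\sum_I\psi_I=2\psi$. This gives $\hat\Psi^{(B)}(L)\le\psi/4+8c\Delta^3b^2\log n\le\max\{\psi/2,H\}$; the additive terms are exactly what $H$ absorbs.
\end{enumerate}
If you want to keep your use of $\psi\ge H$, the relative threshold $a_I=\psi_I/4+\psi/(4(\Delta^3+1))$ also works.

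Two smaller corrections:
\begin{itemize}
\item The node potentials satisfy $\hat\Psi_i^{(B)}\le\Delta\hat{\mathcal S}^2\le b^2$, so you must normalise by $b^2$, not $2b$. The quantity $2b$ bounds the one-step change of $\Psi$, not the node potentials.
\item The conditional expectation of the product does not factorise. Before killing it equals $(1-|I|/\Psi(t))\prod_{i\in I}\hat\Psi_i^{(B)}(t)$. Using $\Psi(t)\le B$ and Bernoulli's inequality, this is at most $(1-1/B)^{|I|}\prod_{i\in I}\hat\Psi_i^{(B)}(t)$, which is all you need.
\end{itemize}
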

The proof uses a \emph{layered induction}.
We partition the execution into $O(\log n)$ phases where phase $i$ is of length at most $O(\tfrac{\Psi(\vec{s}_0)}{2^i} + b^2\log^2 n)$ w.h.p.
In other words, w.h.p., the durations decrease geometrically (except for an additive term).
For each phase, we show that enough productive interactions occur to eliminate at least half of the remaining productive edges with high probability (as long as the potential is at least $32c\cdot\Delta^4\cdot b^2\log^2 n$).
Iterating this argument over $O(\log n)$ phases yields an exponential decay in the potential until it reaches $32c\cdot\Delta^4\cdot b^2\log^2 n$.
Since, w.h.p., the phase lengths form a geometric sequence that sums to $O(\Psi(\vec s_0)+b^2\log^3n)$, we conclude that the potential has fallen to $32c\cdot\Delta^4\cdot b^2\log^2 n$, as claimed.

To prove this statement, we require an auxiliary technical result. First, we show that the potentials of $4$-separated sets satisfy a suitable product bound.
For $B\geq1$, define
\[
    \tau_B\coloneqq\inf\{t\geq0:\Psi(\Vec S(t))>B\}
\]
and the auxiliary node potentials
\[
    \hat\Psi_i^{(B)}(t)=
    \begin{cases}
        \Psi_i(\Vec S(t)),&\text{if }t<\tau_{\mathrm{supp}}\wedge\tau_B,\\
        0,&\text{otherwise}.
    \end{cases}
\]
Let $\hat\Psi^{(B)}(t)=\tfrac12\sum_{i\in O}\hat\Psi_i^{(B)}(t)$.
As long as neither cap has been violated, $\hat\Psi^{(B)}(t)=\Psi(\Vec S(t))$.
After a violation, all auxiliary node potentials remain zero. Unlike the stopped potential $\tilde\Psi$, these auxiliary potentials are killed at a support violation, so a positive frozen value cannot prevent their decay.

Given this useful definition, we show
\begin{lemma}
\label{lemma:prodpotential}
Consider an opinion graph $G = (O,E)$ and threshold $B \geq 1$.
Further, let $\mathcal{I} \subset O$ be an \textbf{arbitrary} $4$-separated set in $G$.
For each $t \geq 1$ define
\[
    \Phi^{(B)}_{\mathcal{I}}(t) = \prod_{i \in \mathcal{I}} \hat{\Psi}^{(B)}_i(\Vec{S}(t)).
\]
Then, for all $t \geq 0$, it holds:
\begin{align*}
     \E{\Phi^{(B)}_\mathcal{I}(t)} \leq \prod_{i \in \mathcal{I}} \left(\Psi_i(\Vec{s}_0) \cdot (1-\tfrac{1}{B})^t \right).
\end{align*}
\end{lemma}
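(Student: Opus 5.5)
The plan is to prove a one-step contraction for $\Phi^{(B)}_{\mathcal I}$ conditional on the current configuration, and then iterate it via the tower property. Fix $t$ and condition on $\vec S(t)=\vec s$. If $t\geq\tau_{\mathrm{supp}}\wedge\tau_B$, then every auxiliary potential is already zero and stays zero. So $\Phi^{(B)}_{\mathcal I}(t)=\Phi^{(B)}_{\mathcal I}(t+1)=0$ and there is nothing to show. Otherwise $\hat\Psi_i^{(B)}(t)=\Psi_i(\vec s)$ for all $i$, and $0<\Psi(\vec s)\leq B$ (if $\Psi(\vec s)=0$ every factor is zero). Since $\hat\Psi_i^{(B)}(t+1)\leq\Psi_i(\vec S(t+1))$ pointwise (killing only sets values to $0$, and all quantities are nonnegative), it suffices to show
\[
\E{\prod_{i\in\mathcal I}\Psi_i(\vec S(t+1))\growingmid \vec S(t)=\vec s}\leq\left(1-\tfrac1B\right)^{|\mathcal I|}\prod_{i\in\mathcal I}\Psi_i(\vec s).
\]
Iterating this bound by induction on $t$ with the law of total expectation then gives the lemma.

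The first step is locality. $\Psi_i=S_i\cdot\sum_{j\in N_i}S_j$ depends only on supports in the closed neighbourhood $N[i]$. An interaction on an edge $\{k,l\}$ changes only $S_k$ and $S_l$. If it changed both $\Psi_i$ and $\Psi_{i'}$ for distinct $i,i'\in\mathcal I$, then $\{k,l\}$ would meet both $N[i]$ and $N[i']$, giving $\operatorname{dist}(i,i')\leq 3$. This contradicts $4$-separation. Hence in every step at most one factor of the product changes. Writing $A_i$ for the event that the chosen edge meets $N[i]$, these events are disjoint. Therefore
\[
\E{\Pi'}-\Pi=\sum_{i\in\mathcal I}\Bigl(\prod_{i'\neq i}\Psi_{i'}(\vec s)\Bigr)\bigl(\E{\Psi_i(\vec S(t+1))}-\Psi_i(\vec s)\bigr).
\]

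The second step is the single-factor drift, using the case analysis of \Cref{claim:casedist}. If the chosen edge is $\{i,j\}$, then $\Psi_{ij}$ drops by $1$ in expectation, while $\Psi_{ik}$ for $k\neq j$ is unchanged in expectation. If the edge is $\{j,k\}$ with $j\in N_i$ and $k\neq i$, then $S_j$ moves by $\pm1$ symmetrically and $S_i$ is fixed, so $\Psi_i$ is unchanged in expectation. This also holds when $k\in N_i$, since then the sum $S_j+S_k$ is preserved. Hence $\E{\Psi_i(\vec S(t+1))}-\Psi_i(\vec s)=-\pr{\text{chosen edge is incident to } i}$. In the productive process the pair is uniform among the $\Psi(\vec s)$ productive pairs, so this probability is $\Psi_i(\vec s)/\Psi(\vec s)\geq\Psi_i(\vec s)/B$.

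Plugging in, each summand is at most $-\Pi/B$, so $\E{\Pi'}\leq\Pi(1-|\mathcal I|/B)$. Bernoulli's inequality gives $1-|\mathcal I|/B\leq(1-1/B)^{|\mathcal I|}$, which yields the one-step claim. I expect the main obstacle to be bookkeeping rather than any estimate. One point is that the killing happens at the right time: at step $t+1$ the indicator may switch off, which only helps. The other is that the bound $\Psi(\vec s)\leq B$ is used only on the event $t<\tau_B$, where the factors are nonzero.
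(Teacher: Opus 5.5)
Your proof is correct and follows the paper's argument. $4$-separation means at most one factor changes per step, and \Cref{claim:casedist} gives each factor drift $-\Psi_i/\psi$, so the unkilled product contracts by exactly $(1-|\mathcal I|/\psi)\leq(1-|\mathcal I|/B)$; killing only lowers it, and Bernoulli's inequality plus iteration finish. One nit: the event $\{t<\tau_{\mathrm{supp}}\wedge\tau_B\}$ depends on the history, not only on $\vec S(t)$, so you should condition on $\mathcal F(t)$ as the paper does and then use that the one-step transition depends only on $\vec S(t)$.
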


The proof is by induction over the time steps. Before a cap violation, at most one of the selected node potentials can change in an interaction. Killing all factors at a cap violation can only decrease their nonnegative product.

\begin{proof}

As $B$ and $\mathcal I$ are fixed, we omit them from the notation where possible.
The base case $t=0$ is immediate. If either cap has already been violated, all auxiliary potentials are zero and the upper bound holds trivially.
For the induction step, condition on the full history $\mathcal F(t)$ and suppose that the auxiliary product is positive. In particular, neither cap has been violated, and every opinion in $\mathcal I$ is undecided.
Write $\psi=\Psi(\Vec S(t))\leq B$.

First ignore the killing that may occur in step $t$.
Since $\mathcal I$ is $4$-separated, the selected edge can affect at most one node potential in the product.
By \Cref{claim:casedist}, $\Psi_i$ decreases by one in expectation conditional on an interaction involving $i$, and does not change in expectation for all other interactions.
The probability that $i$ is productive is $\Psi_i(\Vec S(t))/\psi$.
Therefore,
\begin{align}
    \E{\prod_{i\in\mathcal I}\Psi_i(\Vec S(t+1))\mid\mathcal F(t)}
    &=\left(1-\frac{|\mathcal I|}{\psi}\right)
       \prod_{i\in\mathcal I}\Psi_i(\Vec S(t)).
    \label{eq:important}
\end{align}
If a cap is violated in this interaction, the auxiliary product is set to zero, which only decreases it.
Moreover, when the product is positive, the incident edge sets of the opinions in $\mathcal I$ are disjoint and each has positive integer potential. Hence, $|\mathcal I|\leq\psi\leq B$.
Using Bernoulli's inequality, we obtain
\begin{align*}
    \E{\Phi_{\mathcal I}^{(B)}(t+1)\mid\mathcal F(t)}
    &\leq\left(1-\frac{|\mathcal I|}{B}\right)\Phi_{\mathcal I}^{(B)}(t) \leq\left(1-\frac1B\right)^{|\mathcal I|}\Phi_{\mathcal I}^{(B)}(t).
\end{align*}
If the current product is zero, it remains zero: either a cap was violated or one of its opinions is already decided. Thus, the same inequality holds in that case.
Taking expectations and iterating gives
\[
    \E{\Phi_{\mathcal I}^{(B)}(t)}
    \leq\left(1-\frac1B\right)^{t|\mathcal I|}
       \prod_{i\in\mathcal I}\Psi_i(\vec s_0),
\]
which proves the lemma. For the empty set, both products are one.

\end{proof}

The second auxiliary result lets bound the decrease in potential.
\begin{lemma}
\label{lemma:inductionstep}

Consider the productive \EdgeVOG process at a time $t<\tilde{T}$, and write $\psi=\Psi(\Vec S(t))$.
For every fixed $c\geq1$, start a phase at time $t$ and end it when the potential first falls to at most  $ \max\{\psi/2,32c\Delta^4b^2\log^2n\}$
(or when the support bound is violated).
Conditional on $\mathcal F(t)$, the phase lasts $O(c\cdot \psi+b^2\log^2n)$ steps with probability at least $1-n^{-c}$.

\end{lemma}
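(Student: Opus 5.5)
Set $H:=32c\Delta^4b^2\log^2n$ and $B:=C(\psi+b^2\log^2n)$ for a large constant $C$. We restart the process at time $t$; by the Markov property we may treat $\vec S(t)$ as the initial configuration, with $K=\psi+b^2\log^2n$. The phase length will be $L:=\lceil C'B\log n\rceil$ if $\psi\le$ some multiple of $H$ only matters for the second term. More precisely, we take $L$ of order $c\,\psi+b^2\log^2n$ and show that the phase has ended by time $t+L$ with probability $1-n^{-c}$. There are two failure modes: the potential exceeds $B$ in the window, or it stays above $\max\{\psi/2,H\}$ without exceeding $B$.

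\textbf{Step 1 (no overshoot).} Apply \cref{lemma:azumapotential} with $r$ a constant of order $c$, so that $L\le r^2K\log n$ holds. If $L$ is in fact $\Theta(\psi)$ and not $\Theta(K\log n)$, the constraint $L\le r^2K\log n$ is trivially satisfied. The lemma then shows that the stopped potential stays below $\psi+8r^{3/2}K\le B$ throughout the window with probability $1-n^{-c}$. On this event, $\hat\Psi^{(B)}$ coincides with $\Psi$ until $\tilde T$.

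\textbf{Step 2 (decay via $4$-separated classes).} Partition $O$ into at most $\Delta^3+1$ classes $\mathcal I_1,\dots,\mathcal I_m$, each of which is $4$-separated. Such a partition exists by greedily coloring $G^3$, which has maximum degree at most $\Delta^3$. Since $\hat\Psi^{(B)}=\tfrac12\sum_k\sum_{i\in\mathcal I_k}\hat\Psi^{(B)}_i$, it suffices to show for every $k$ that $\sum_{i\in\mathcal I_k}\hat\Psi^{(B)}_i(L)\le \psi/(2m)$ with probability at least $1-n^{-c-1}$. Note that $\psi\ge H$, so $\psi/m$ is at least about $32c\Delta\,b^2\log^2n$.

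For this we use the variant of the Chernoff bound (\cref{thm:scheidelbound}) in scaled form. We set $Y_i:=\hat\Psi_i^{(B)}(L)/(2b)$. These variables lie in $[0,1]$ after normalisation, since before a cap violation $\Psi_i\le\Delta\hat{\mathcal S}^2$. Here I would actually normalise by $\max_i\Psi_i\le b\hat{\mathcal S}$. \cref{lemma:prodpotential} gives the product-moment condition $\E{\prod_{i\in J}Y_i}\le\prod_{i\in J}(\Psi_i(\vec S(t))e^{-L/B}/M)$ for every $J\subseteq\mathcal I_k$. This is exactly the hypothesis needed for Chernoff-type upper tails of sums with such dependencies. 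The mean is $\mu_k\le e^{-L/B}\psi/M$, and with $L=\Theta(B)$ we get $e^{-L/B}\le 1/(8m)$. Taking an absolute deviation of size $\psi/(4mM)$, the upper-tail bound gives failure probability $\exp(-\Omega(\psi/(mM)))$. The hypothesis $\psi\ge H$ with $M\le b^2$ and $m\le 2\Delta^3$ makes this exponent at least $\Omega(c\,\Delta\log^2n)\ge(c+2)\log n$. A union bound over the $m$ classes then finishes the argument.

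\textbf{Bookkeeping and main obstacle.} If the phase ended by a cap violation or by stabilization, we are done. Otherwise, the two steps together give that $\Psi(t+L)\le\psi/2$, so the phase ended within $L=O(B)=O(c\psi+b^2\log^2n)$ steps. We also need $\log(8m)=O(\log\Delta)$, which is absorbed into the constants because $e^{-L/B}$ only requires $L\ge B\log(8m)$; this costs a $\log\Delta\le\log n$ factor that is absorbed into the $b^2\log^2 n$ term only when $\psi$ is small. I would handle this by taking $B$ with $C$ large and accepting the factor through the requirement $\psi\ge H$. The main obstacle is matching the normalisation and the product-moment hypothesis of \cref{thm:scheidelbound} so that the exponent is $\Omega(\psi/(\Delta^4b^2))$. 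This is precisely where the threshold $H$ with its $\Delta^4b^2\log^2n$ is designed to make the bound $n^{-c}$.
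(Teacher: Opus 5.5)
Your overall structure matches the paper's: no overshoot via \cref{lemma:azumapotential}, a partition into $4$-separated classes, the product bound of \cref{lemma:prodpotential} fed into \cref{thm:scheidelbound}, and a union bound over classes. However, Step~2 as you set it up does not give the claimed phase length.

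You require every class to meet the \emph{uniform} target $\hat\Psi^{(B)}_{\mathcal I_k}(L):=\sum_{i\in\mathcal I_k}\hat\Psi_i^{(B)}(L)\le\psi/(2m)$. Nothing in a greedy colouring of $G^3$ balances the classes. Their initial potentials $\psi_{\mathcal I_k}:=\sum_{i\in\mathcal I_k}\Psi_i(\vec S(t))$ sum to $2\psi$, and one class can carry a constant fraction of this. For example, on a disjoint union of many stars, colouring the centres first puts all of them in one class. That class has $\psi_{\mathcal I_1}=\psi$ while $m=\Delta+1$, and $\psi\ge H$ is attainable with enough stars.

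The only control you have is $\E{\hat\Psi^{(B)}_{\mathcal I_k}(L)}\le e^{-L/B}\psi_{\mathcal I_k}$. An upper-tail bound is useless for a target below the mean bound it is fed, so you need $e^{-L/B}=O(1/m)$, i.e.\ $L\ge B\log(8m)$. Your claim that $L=\Theta(B)$ already gives $e^{-L/B}\le1/(8m)$ is false once $\Delta$ is unbounded. The phase length you actually get is $O\bigl((c\psi+b^2\log^2n)\log(8m)\bigr)$, which can be a $\Theta(\log\Delta)$ factor worse than claimed.

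That factor multiplies $\psi$, which is exactly the dominant term when $\psi\ge H$. Enlarging $C$ does not change the ratio $L/B$, and the hypothesis $\psi\ge H$ cannot absorb it. The loss also propagates: \cref{prop:largepotential} sums these phase lengths, and the whole point of the second bound in \cref{thm:newPotentialProd} is an $O(|P(0)|)$ term without extra factors.

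The missing idea is to make each class's target relative to its \emph{own} initial potential, plus an additive slack. The paper shows that $\hat\Psi^{(B)}_{\mathcal I}(\tau)< a_{\mathcal I}:=\psi_{\mathcal I}/4+8cb^2\log n$ with probability at least $1-n^{-8c}$ for every class $\mathcal I$. This needs only constant decay: with $\tau=\lceil B\log(32c)\rceil$ one has $(1-1/B)^{\tau}\le1/(32c)$.

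Then apply \cref{thm:scheidelbound} with range $b^2$, $p=1/(32c)$ and $x_i=\Psi_i(\vec s_0)$. The range $b^2$ is the correct normalisation, since $\hat\Psi_i^{(B)}\le\Delta\hat{\mathcal S}^2\le b^2$. This gives $\delta\ge8c\ge e^2$ and failure probability $e^{-a_{\mathcal I}/b^2}\le n^{-8c}$, with no lower bound on $\psi_{\mathcal I}$ required.

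Summing over the at most $2\Delta^3$ classes yields $\hat\Psi^{(B)}(\tau)\le\psi/4+8c\Delta^3b^2\log n\le\max\{\psi/2,H\}$. The number of classes now enters only through the additive term, which the $\Delta^4$ in $H$ is designed to absorb, rather than through the decay time. A per-class target such as $\psi_{\mathcal I_k}/4+\psi/(8m)$ would also work with your exponent computation.
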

\begin{proof}

We begin by introducing some notation and simplifying conventions.
Condition on the history up to time $t$, and shift the time index so that this configuration is at time zero. Write $\vec s_0$ for this configuration and $\psi=\Psi(\vec s_0)$.
For the fixed constant $c\geq1$, let
\[
    K=\psi+b^2\log^2n,\qquad r=10^4(c+1)^2,
    \qquad B=(8r^{3/2}+1)K,
    \qquad\tau=\lceil B\log(32c)\rceil.
\]
These choices give $\tau\in O_c(K)$ and $\tau\leq r^2K\log n$ for $n\geq2$.
The caps and auxiliary potentials are restarted at the beginning of the phase, using the same support threshold $\hat{\mathcal S}$.

Partition $O$ into $4$-separated sets $I_1,\ldots,I_k$ by properly coloring the graph joining opinions at distance at most three.
The radius-three ball bound gives
\[
    k\leq\min\{q,2\Delta^3\}\leq\min\{n,2\Delta^4\}.
\]
For each such set $I$, write
$\hat\Psi_I^{(B)}(u)=\sum_{i\in I}\hat\Psi_i^{(B)}(u)$
and $\psi_I=\sum_{i\in I}\Psi_i(\vec s_0)$.
We first show that
\begin{align}
    \pr{\hat\Psi_I^{(B)}(\tau)\geq
          \psi_I/4+8cb^2\log n}\leq n^{-8c}.
    \label{eq:netbound}
\end{align}

By \Cref{lemma:prodpotential}, for every subset $I'\subseteq I$,
\begin{align*}
    \E{\prod_{i\in I'}\hat\Psi_i^{(B)}(\tau)}
    &\leq\left(1-\frac1B\right)^{\tau|I'|}
           \prod_{i\in I'}\Psi_i(\vec s_0) \leq(32c)^{-|I'|}\prod_{i\in I'}\Psi_i(\vec s_0).
\end{align*}
Every auxiliary node potential lies in $[0,b^2]$, because before killing its value is at most $\Delta\hat{\mathcal S}^2\leq b^2$.
Thus, \Cref{thm:scheidelbound} applies with upper bound $b^2$, $p=1/(32c)$, $x_i=\Psi_i(\vec s_0)$, and $\mu=\psi_I/(32c)$.
If $\mu=0$, all the relevant auxiliary potentials vanish almost surely.
Otherwise, set
\[
    a_I=\psi_I/4+8cb^2\log n,\qquad\delta=a_I/\mu.
\]
Then $\delta\geq8c\geq e^2$, and consequently
\[
    \pr{\hat\Psi_I^{(B)}(\tau)\geq a_I}
    \leq e^{-a_I/b^2}\leq n^{-8c}.
\]
This proves \eqref{eq:netbound} without any lower bound on $\psi_I$ or $|I|$.

Taking a union bound over the at most $n$ color classes, with failure probability at most $n^{1-8c}$, gives
\begin{align*}
    \hat\Psi^{(B)}(\tau)
    &=\frac12\sum_{j=1}^k\hat\Psi_{I_j}^{(B)}(\tau) \leq\frac14\Psi(\vec s_0)+4ckb^2\log n\\
    &\leq\frac\psi4+8c\Delta^4b^2\log n \leq\frac\psi4+16c\Delta^4b^2\log^2n
    \leq\max\{\psi/2,32c\Delta^4b^2\log^2n\}.
\end{align*}
Here, we used $\log n\geq\log2>1/2$.

By \Cref{lemma:azumapotential}, the stopped potential exceeds $B$ during the first $\tau$ steps with probability at most $n^{-r}$.
Suppose that the stopped potential has not crossed $B$ during these steps.
If the support bound has been violated by time $\tau$, the phase has already ended.
Otherwise, $\hat\Psi^{(B)}(\tau)=\Psi(\Vec S(\tau))$, so the preceding bound on the auxiliary potential implies that the phase has ended by time $\tau$ as well.
The total failure probability is therefore at most
$n^{1-8c}+n^{-r}\leq n^{-c}$.
This proves the induction step.

\end{proof}
Equipped with lemma, we can now proceed to the proof of \Cref{prop:case1}.
\begin{proof}[Proof of \Cref{prop:case1}]

We need a final set of definitions.
Fix $c>3$ and let $H=32c\Delta^4b^2\log^2n$.
If $\Psi(\vec s_0)\leq H$, the assertion is immediate.
Otherwise, let
\[
    L=\left\lceil\log_2\frac{\Psi(\vec s_0)}H\right\rceil,
    \qquad h_j=\max\{\Psi(\vec s_0)/2^j,H\},\quad 0\leq j\leq L.
\]
Set $\tau_0=0$. For $j\geq1$, let $\tau_j$ be the first time at or after $\tau_{j-1}$ at which either $\tilde{T}$ has been reached or the potential is at most $h_j$.
On a phase that starts before $\tilde{T}$, \Cref{lemma:inductionstep}, conditional on its starting history, implies
\[
    \tau_j-\tau_{j-1}\in O_c(h_{j-1}+b^2\log^2n)
\]
with probability at least $1-n^{-c}$.
If the process has already stopped, the phase has length zero.
Since $L\in O(\log n)$, the union bound shows that all phases satisfy these bounds w.h.p.
On this event,
\begin{align*}
    \tau_L
    &\leq\sum_{j=1}^L O_c\left(\frac{\Psi(\vec s_0)}{2^{j-1}}+b^2\log^2n\right) =O_c\bigl(\Psi(\vec s_0)+b^2\log^3n\bigr).
\end{align*}
At this time, either $\tilde{T}$ has been reached or the potential has fallen to $H$.
A support violation occurs with probability at most $n^{-a}$, so the same conclusion holds for the original process w.h.p.

\end{proof}

\subsubsection*{Case 2: $\Psi(\Vec{S}(t)) \leq 32c\cdot\Delta^4\cdot b^2\log^2 n$.}
For the last part of the analysis, we consider the process starting at configurations with a potential of at most $32c\Delta^4 b^2 \log^2 n$.
In this case, our layered induction argument falls apart as only few active opinions are left.
Instead, we argue that w.h.p. the process converges after $O( \Delta^4 b^2 \log^3 n)$ additional steps.
\begin{restatable}{prop}{smallpotential}
\label{prop:smallpotential}

Suppose $\Psi(\vec s_0)\leq32c\Delta^4b^2\log^2n$ for a fixed $c\geq1$.
Then, $\tilde{T}\in O_c(\Delta^4b^2\log^3n)$ w.h.p. Hence, the original productive process satisfies the same bound when the probability of a support violation is added to the failure probability.

\end{restatable}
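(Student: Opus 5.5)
The plan mirrors the ``expectation plus Markov'' argument from the proof sketch of \Cref{thm:newPotentialProd}, applied with the auxiliary potentials. Write $H=32c\Delta^4b^2\log^2n$, so that $\Psi(\vec s_0)\leq H$, and set $K=\Psi(\vec s_0)+b^2\log^2n\leq 2H$. Choose a constant $r$ depending on $c$, e.g.\ $r=c+2$, and put $B=(8r^{3/2}+1)K$. This gives $B\in O_c(H)=O_c(\Delta^4b^2\log^2n)$. Finally, let $t^\star=\lceil (c+4)B\log n\rceil$. We need $t^\star\leq r^2K\log n$ so that \Cref{lemma:azumapotential} applies. This holds once $r$ is large enough relative to $c$, since $(c+4)(8r^{3/2}+1)\leq r^2$ for $r$ a large constant multiple of $(c+4)^2$. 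Hence $t^\star\in O_c(\Delta^4b^2\log^3n)$.

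\textbf{Step 1 (expected decay).} Apply \Cref{lemma:prodpotential} with singleton sets $\mathcal I=\{i\}$. Singletons are trivially $4$-separated, so the lemma gives $\E{\hat\Psi_i^{(B)}(t)}\leq \Psi_i(\vec s_0)(1-1/B)^t$ for every opinion $i$. Summing over $i$ and halving yields
\[
\E{\hat\Psi^{(B)}(t)}\leq \Psi(\vec s_0)e^{-t/B}\leq n^2e^{-t/B}.
\]
At $t=t^\star$ this is at most $n^{-(c+2)}$.

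\textbf{Step 2 (integrality and Markov).} Consider the event that neither cap is violated up to $t^\star$ and the process has not stabilized. On this event $\hat\Psi^{(B)}(t^\star)=\Psi(\Vec S(t^\star))$ is a positive integer. Markov's inequality therefore bounds the probability that $\tilde T>t^\star$ while $\tau_B>t^\star$ by $n^{-(c+2)}$.

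\textbf{Step 3 (the $B$-cap).} We must exclude $\tau_B\leq t^\star$ before $\tilde T$. Before $\tau_{\mathrm{supp}}$ the stopped potential $\tilde\Psi$ coincides with $\Psi$, so reaching $\tau_B$ before $\tilde T$ means $\tilde\Psi$ exceeds $B\geq\Psi(\vec s_0)+8r^{3/2}K$ within $t^\star$ steps. By \Cref{lemma:azumapotential} this has probability at most $n^{-r}$.

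\textbf{Step 4 (conclusion).} Combining Steps 2 and 3, $\tilde T\leq t^\star=O_c(\Delta^4b^2\log^3n)$ except with probability $n^{-(c+2)}+n^{-r}$. Adding $n^{-a}$ for a support violation transfers the bound to $T$.

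The only delicate point is the bookkeeping in Step 3. The auxiliary potential is \emph{killed} at $\tau_B$ and at $\tau_{\mathrm{supp}}$, so a small $\hat\Psi^{(B)}$ proves nothing on those events. Each must therefore be shown separately to be either harmless (a support violation already counts as $\tilde T$ having occurred) or unlikely (Azuma for the $B$-cap). The parameter choices also have to make $t^\star$ fit inside the time horizon of \Cref{lemma:azumapotential}.
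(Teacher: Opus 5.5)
Your proof is correct and follows the paper's argument step for step: the same $K$, $B=(8r^{3/2}+1)K$ and horizon $\lceil (c+4)B\log n\rceil$; the singleton case of \Cref{lemma:prodpotential} followed by Markov via integrality; and \Cref{lemma:azumapotential} to rule out crossing the $B$-cap before $\tilde T$. The only slip is the parenthetical choice $r=c+2$, which violates $t^\star\leq r^2K\log n$; your own corrected requirement that $r$ be a large constant multiple of $(c+4)^2$ is what makes the argument work, and the paper takes $r=10^4(c+1)^2$.
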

\begin{proof}

We introduce some notation and simplifying conventions.
For the fixed $c\geq1$, let
\[
    K=\Psi(\vec s_0)+b^2\log^2n,\qquad r=10^4(c+1)^2,
    \qquad B=(8r^{3/2}+1)K,
    \qquad\tau=\lceil B(c+4)\log n\rceil.
\]
Then $\tau\in O_c(K\log n)$ and $\tau\leq r^2K\log n$.
Applying \Cref{lemma:prodpotential} to singleton sets and summing yields
\begin{align*}
    \E{\hat\Psi^{(B)}(\tau)}
    &\leq\Psi(\vec s_0)\left(1-\frac1B\right)^\tau
    \leq n^2e^{-\tau/B}
    \leq n^{-(c+2)}.
\end{align*}
The auxiliary global potential is either the integer-valued original potential or zero.
Thus, Markov's inequality gives
\[
    \pr{\hat\Psi^{(B)}(\tau)\geq1}\leq n^{-(c+2)}.
\]
By the maximal estimate in \Cref{lemma:azumapotential}, the stopped potential exceeds $B$ during this interval with probability at most $n^{-r}$.
On the event $\{\tilde{T}>\tau\}$, if no such crossing occurs, the auxiliary and original potentials agree and are positive at time $\tau$.
Consequently,
\[
    \pr{\tilde{T}>\tau}\leq n^{-(c+2)}+n^{-r}\leq n^{-c}.
\]
Since $K\in O_c(\Delta^4b^2\log^2n)$ under the assumption of the proposition, this proves the claim.

The same calculation, without the small-potential assumption, gives
$\tilde{T}\in O_c((\Psi(\vec s_0)+b^2\log^2n)\log n)$ w.h.p., which proves the first bound in \Cref{thm:newPotentialProd}.
For the second bound, apply \Cref{prop:largepotential} first and start the present argument at the first time the potential is at most $32c\Delta^4b^2\log^2n$.
All estimates hold conditional on that starting history before a support violation. Adding the original failure probability $n^{-a}$ once gives
\[
    T\in O_c\bigl(\Psi(\vec s_0)+\Delta^4b^2\log^3n\bigr)
\]
w.h.p. Substituting $b=\Delta\hat{\mathcal S}$ proves the second bound.

\end{proof}

\subsection{Proof of \Cref{thm:survival} (Survival of a Fixed Opinion)}
\label{sec:survival}
In this section, we show

\survival*

The proof of the theorem follows from the following two lemmas. The first
identifies the terminal value of $R_i(t)$, while the second establishes the
required submartingale property.

\begin{lemma}
\label{lem:terminal-value-R}
For every opinion $i$, we have $R_i(T)=X_i$ almost surely.
\end{lemma}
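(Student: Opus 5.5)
The plan is to show that the stopped process $R_i(t)$ is frozen at $\tau_i$ and that its frozen value already equals the indicator $X_i$. Two facts about the dynamics drive this: extinct opinions never reappear, and a decided opinion stays decided.

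First, I would check that $\tau_i\leq T$ almost surely. At time $T$ every opinion is decided, so in particular $\Psi_i(\Vec S(T))=0$. This means $S_i(T)=0$ or $M_i(T)=0$, and hence $\tau_i\leq T$. For $T=\infty$ we may either note that $T<\infty$ almost surely (by \Cref{thm:newPotential} and \Cref{thm:ProdWorstCase}) or simply restrict attention to that almost-sure event. Since $\tau_i\le T$, the definition of $R_i$ gives $R_i(T)=S_i(\tau_i)/(S_i(\tau_i)+M_i(\tau_i))$.

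Next I would split into the two cases at $\tau_i$. The denominator is positive: $s_i\geq1$ and supports change by one per step, so before $\tau_i$ both $S_i$ and $M_i$ are positive. At time $\tau_i$ at most one of the two supports involved in the last step can have hit zero. The relevant step changes only the supports of two adjacent opinions by $\pm1$, and it cannot zero out both $S_i$ and $M_i$. If it changes $S_i$, the partner is a neighbor $j$ and $S_j$ increases. If it changes $M_i$ alone, $S_i$ is untouched.

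\textbf{Case $S_i(\tau_i)=0$.} Then $R_i(T)=0$. Opinion $i$ is extinct at $\tau_i$, and an extinct opinion can never be adopted again, since adoption copies an existing opinion. Hence $S_i(T)=0$ and $X_i=0$.

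\textbf{Case $M_i(\tau_i)=0$ and $S_i(\tau_i)>0$.} Then $R_i(T)=1$. All neighbors of $i$ are extinct and stay extinct. Agents with opinion $i$ can only change opinion by adopting a neighboring opinion of $G$, and none exists. So $S_i$ is constant from $\tau_i$ on, giving $S_i(T)=S_i(\tau_i)>0$ and $X_i=1$.

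The only mildly delicate point is the well-definedness of the denominator and the a.s.\ finiteness of $T$, both handled above. The rest is routine.
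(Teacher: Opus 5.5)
Your proof is correct and follows essentially the same route as the paper: a case split at $\tau_i$ using that extinct opinions never reappear, combined with $\tau_i\le T$ because every opinion is decided at $T$. Your additional check that $S_i(\tau_i)+M_i(\tau_i)>0$, so that $R_i$ is well defined at $\tau_i$, is a detail the paper leaves implicit.
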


\begin{proof}
Recall that $\tau_i$ is the first time at which either $S_i(t)=0$ or
$M_i(t)=0$. If $S_i(\tau_i)=0$, then opinion $i$ is extinct and cannot
reappear, since an opinion can only be copied from an agent that currently
holds it. Hence, $X_i=0$.

Otherwise, $M_i(\tau_i)=0$ and $S_i(\tau_i)>0$. In this case, no opinion in
$N_i$ is present. Since these opinions are extinct and cannot reappear,
opinion $i$ can no longer participate in a productive interaction and
therefore survives. Hence, $X_i=1$. Consequently,
\begin{align*}
    R_i(\tau_i)
    &=
    \begin{cases}
        0, & \text{if } S_i(\tau_i)=0,\\
        1, & \text{if } M_i(\tau_i)=0 \text{ and } S_i(\tau_i)>0,
    \end{cases}
    =X_i.
\end{align*}
At the stabilization time $T$, every opinion is either extinct or has no
surviving neighboring opinion. Thus, $\tau_i\leq T$ almost surely. Since
$R_i(t)$ is stopped at $\tau_i$, it follows that
$R_i(T)=R_i(\tau_i)=X_i$.
\end{proof}

\begin{lemma}
\label{lem:R-submartingale}
For every opinion $i$, the process $(R_i(t))_{t\geq 0}$ is a bounded
submartingale with respect to the natural filtration
$(\mathcal F(t))_{t\geq 0}$.
\end{lemma}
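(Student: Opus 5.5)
The plan is to verify the submartingale property by a one-step case analysis, conditional on the full history $\mathcal F(t)$. Boundedness is immediate, since $0\leq R_i(t)\leq 1$ by definition. Adaptedness is also immediate, because $R_i(t)$ is a function of $\Vec S(t\wedge\tau_i)$ and $\tau_i$ is a stopping time. On the event $\{t\geq\tau_i\}$ the process is frozen, so $\E{R_i(t+1)\mid\mathcal F(t)}=R_i(t)$ holds trivially. It therefore suffices to consider $t<\tau_i$. Write $a=S_i(t)\geq1$, $m=M_i(t)\geq1$ and $D=a+m$.

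As in the proof of \Cref{lemma:new_potential}, I would first condition additionally on the edge $e(t)=\{k,l\}\in E$ that is activated in step $t$ (or on the step being idle). I then show that the conditional expectation of $R_i(t+1)$ is at least $a/D$ in every case; the law of total expectation then finishes the argument. This decomposition works for both the ordinary and the productive process, because the only randomness that remains given the chosen edge is the fair coin deciding the direction of the transfer. The cases are the following.
\begin{itemize}
\item Idle step, or $\{k,l\}\cap(\{i\}\cup N_i)=\emptyset$: neither $a$ nor $D$ changes, so $R_i$ is unchanged.
\item $\{k,l\}=\{i,j\}$ with $j\in N_i$: $a$ changes by $\pm1$ with probability $1/2$ each, while $D$ is unchanged. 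Hence the expectation is exactly $a/D$.
\item $k,l\in N_i$: one agent moves between two neighbors of $i$, so $a$ and $m$ are both unchanged.
\item $k\in N_i$ and $l\notin\{i\}\cup N_i$: $a$ is fixed and $D$ changes by $\pm1$ with probability $1/2$ each. The conditional expectation is $\tfrac a2\bigl(\tfrac1{D+1}+\tfrac1{D-1}\bigr)$, and this is at least $a/D$ by convexity of $x\mapsto 1/x$ (Jensen). This needs $D-1>0$, which holds since $D-1\geq a\geq1$.
\end{itemize}
There is no case with $i$ adjacent to an opinion outside $N_i$, by definition of $N_i$.

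Finally, I would note that if step $t$ drives $S_i$ or $M_i$ to zero, then $R_i(t+1)$ is by definition the value at $\tau_i=t+1$. This is exactly the formula used in the case computations, so no separate treatment is needed.

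The only step that requires real care is the last case, where a neighbor trades with an outside opinion. This is the one case that yields a strict inequality rather than equality, so it is the reason $R_i$ is only a submartingale and not a martingale. The convexity argument handles it directly.
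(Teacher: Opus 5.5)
Your proposal is correct and follows the paper's proof essentially step for step. Both arguments condition on the activated edge, treat the same nontrivial cases, and identify the neighbor--outside trade as the only source of strict inequality. The one difference is that you use convexity of $1/x$ where the paper writes out the identity $\frac{S_iL_i}{L_i^2-1}=\frac{S_i}{L_i}+\frac{S_i}{L_i(L_i^2-1)}$, which is cosmetic.
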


\begin{proof}
By definition, $0\leq R_i(t)\leq 1$ for every $t\geq 0$, so the process is
bounded. Moreover, on the event $\{\tau_i\leq t\}$, the process has already
been stopped, and therefore $R_i(t+1)=R_i(t)$. We may thus fix a time $t$ and
work on the event $\{t<\tau_i\}$. In particular,
$S_i(t),M_i(t)\geq 1$. For convenience, let
$L_i(t):=S_i(t)+M_i(t)$ and note that $L_i(t)\geq 2$.

It suffices to consider productive interactions, since all other
interactions leave the support vector, and hence $R_i(t)$, unchanged. For
convenience, write $e(t)=\bot$ if the selected interaction is not productive.
Conditional on a productive edge being activated, either endpoint copies
the other with probability $1/2$. We distinguish three cases.

\begin{enumerate}
    \item Suppose that the activated edge is $e(t)=\{i,j\}$ for some
    $j\in N_i$. The pair $(S_i(t),M_i(t))$ then changes to either
    $(S_i(t)+1,M_i(t)-1)$ or $(S_i(t)-1,M_i(t)+1)$, each with probability
    $1/2$. In both cases, $L_i(t)$ remains unchanged. Therefore,
    \begin{align*}
        \E{
            R_i(t+1)
            \growingmid
            \mathcal F(t),e(t)=\{i,j\}
        }
        &=
        \frac{1}{2}
        \frac{S_i(t)+1}{S_i(t)+1+M_i(t)-1}
        +
        \frac{1}{2}
        \frac{S_i(t)-1}{S_i(t)-1+M_i(t)+1}
        \\
        &=
        \frac{S_i(t)}{S_i(t)+M_i(t)}
        =R_i(t).
    \end{align*}

    \item Suppose that the activated edge is $e(t)=\{j,k\}$, where
    $j\in N_i$ and $k\notin N_i\cup\{i\}$. In this case, $S_i(t)$ remains
    unchanged, while $M_i(t)$, and hence $L_i(t)$, increases or decreases
    by one with equal probability. It follows that
    \begin{align*}
        \E{
            R_i(t+1)
            \growingmid
            \mathcal F(t),e(t)=\{j,k\}
        }
        &=
        \frac{S_i(t)}{2}
        \left(
            \frac{1}{L_i(t)-1}
            +
            \frac{1}{L_i(t)+1}
        \right)
        =
        \frac{S_i(t)L_i(t)}{L_i(t)^2-1}
        \\
        &=
        \frac{S_i(t)}{L_i(t)}
        +
        \frac{S_i(t)}
        {L_i(t)\bigl(L_i(t)^2-1\bigr)}
        \geq
        \frac{S_i(t)}{L_i(t)}
        =R_i(t).
    \end{align*}

    \item Finally, suppose that both endpoints of the activated edge belong
    to $N_i$. Although the interaction may redistribute support between
    these two opinions, it does not change their total support and hence
    leaves $M_i(t)$ unchanged. An interaction whose endpoints both lie
    outside $N_i\cup\{i\}$ also leaves $S_i(t)$ and $M_i(t)$ unchanged. In
    either case, $R_i(t+1)=R_i(t)$.
\end{enumerate}

Thus, conditional on $\mathcal F(t)$ and on any possible interaction, the
expected value of $R_i(t+1)$ is at least $R_i(t)$. By the law of total
expectation,
\begin{align*}
    \E{R_i(t+1)\growingmid\mathcal F(t)}
    &=
    \sum_{e\in E\cup\{\bot\}}
    \pr{e(t)=e\growingmid\mathcal F(t)}
    \E{R_i(t+1)\growingmid\mathcal F(t),e(t)=e}
    \\
    &\geq
    \sum_{e\in E\cup\{\bot\}}
    \pr{e(t)=e\growingmid\mathcal F(t)}R_i(t)
    =R_i(t).
\end{align*}
Therefore, $(R_i(t))_{t\geq 0}$ is a bounded submartingale.
\end{proof}

We can now complete the proof of the theorem. Since $T<\infty$ almost surely
and $(R_i(t))_{t\geq 0}$ is bounded, Doob's Optional Stopping Theorem applies
directly. Together with \cref{lem:terminal-value-R,lem:R-submartingale}, this
yields \begin{align*}
    \pr{X_i=1}=\E{X_i}=\E{R_i(T)}\geq R_i(0).
\end{align*}
Finally, by the
definition of $R_i$, we have
\begin{align*}
    R_i(0)=S_i(0)/(S_i(0)+M_i(0))=s_i/(s_i+\sum_{j\in N_i}s_j)=p_i,
\end{align*} which
proves the theorem.

\subsection{Proof of \Cref{thm:survival-tail} (Number of Surviving Opinions)}
\label{sec:survival-tail}
For the proof of the theorem, we need two main ingredients. First, we show that the non-survival events of opinions at pairwise distance at least three satisfy a suitable negative product condition.

\begin{lemma}
\label{lem:product-distance-three}
Let $I\subseteq O$ be a set of opinions at pairwise distance at least three in $G$, and let $Q_i(t):=1-R_i(t)$. Then
$\left(\prod_{i\in I}Q_i(t)\right)_{t\geq 0}$
is a supermartingale.
\end{lemma}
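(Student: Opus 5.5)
The plan is a one-step conditional computation, mirroring \cref{lem:R-submartingale}. Fix $t$ and condition on $\mathcal F(t)$. Each $Q_i(t)\in[0,1]$, so the product is bounded and adapted. Once a factor has been stopped at $\tau_i$, it stays constant; if it is $0$, the product stays $0$. We may therefore restrict attention to the factors $i\in I$ with $t<\tau_i$, and treat the stopped factors as constants. As in the earlier proof, we condition on which edge $e(t)=\{j,k\}$ is activated, or on $e(t)=\bot$. It then suffices to show
\[
\E{\prod_{i\in I}Q_i(t+1)\growingmid\mathcal F(t),e(t)}\leq\prod_{i\in I}Q_i(t)
\]
for every edge, and finish by the law of total expectation.

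The key structural step is to determine which factors a single edge $\{j,k\}$ can change. $R_i$ changes only if the edge meets the closed neighborhood $N_i\cup\{i\}$ and is not contained in $N_i$ or disjoint from $N_i\cup\{i\}$. In the first case of \cref{lem:R-submartingale} we have $i\in\{j,k\}$. In the second case some endpoint lies in $N_i$ and the other lies outside $N_i\cup\{i\}$.

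Suppose two distinct $i,i'\in I$ were both affected by the edge. Then there are endpoints $x\in N_i\cup\{i\}$ and $x'\in N_{i'}\cup\{i'\}$ of the same edge (possibly $x=x'$). This gives $\dist{i}{i'}\leq \dist{i}{x}+1+\dist{x'}{i'}$. That bound alone reaches $3$, so I need a refinement. If $x=x'$, the distance is at most $2$, a contradiction. If $x\neq x'$, they are the two endpoints $j,k$.

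I will examine the cases using the distance-three assumption. If $i=j$, then $k\in N_{i}$, and $k$ cannot lie in $N_{i'}\cup\{i'\}$, since then the distance would be at most $2$. Hence $k\in N_{i'}\cup\{i'\}$ is impossible, and $j=i\notin N_{i'}\cup\{i'\}$, so $i'$ is unaffected. If instead $j\in N_i$ and $k\notin N_i\cup\{i\}$, then $i'$ can be affected only if $k\in N_{i'}\cup\{i'\}$ or $j\in N_{i'}\cup\{i'\}$. The latter gives distance at most $2$. The former needs $k\in N_{i'}$ with $j\notin N_{i'}\cup\{i'\}$ (or $k=i'$), so that $i$–$j$–$k$–$i'$ is a path of length $3$.

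This last configuration is the main obstacle. Distance exactly three permits one edge to change both $M_i$ and $M_{i'}$, or $M_i$ and $S_{i'}$, simultaneously. I expect to handle it by direct computation rather than by claiming that at most one factor moves.

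\textbf{Subcase $k=i'$.} One agent moves between $j$ and $i'$. This makes $L_i$ change by $\pm1$ while $S_{i'}$ changes by $\mp1$, with $L_{i'}$ fixed. So $Q_{i'}$ changes linearly by $\pm 1/L_{i'}$, and $Q_i=M_i/L_i$ changes by $\pm\,S_i/(L_i(L_i\pm1))$.

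\textbf{Subcase $k\in N_{i'}$.} The transfer moves one unit from $M_{i'}$ to $M_i$ or back. Here $Q_i$ increases by $S_i/(L_i(L_i+1))$ exactly when $Q_{i'}$ decreases, and vice versa.

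In both subcases the increments are anticorrelated. The plan is to write each factor as $Q\pm\varepsilon_\pm$, expand the two-term average of the product, and check that the cross term is negative. It should dominate the positive convexity terms, which are $-\frac{S}{L(L^2-1)}$-type expressions, since $Q$ is concave in $L$. If this inequality fails in general, the fallback is to note the theorem's proof only needs the product bound after taking expectations, and to verify it via the explicit two-point average. All remaining edges affect at most one factor, and for those the one-factor computation of \cref{lem:R-submartingale} applies directly, giving a supermartingale step for $1-R_i$.
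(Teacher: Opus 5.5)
Your skeleton is the paper's: condition on the activated opinion edge, observe that only an edge joining $N_i$ to $N_{i'}$ with $\operatorname{dist}_G(i,i')=3$ can move two factors, and expand the product in that case. However, you do not establish the two-factor step, and your analysis of it is wrong in two places.

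\textbf{First error: a subcase that cannot occur.} The subcase $k=i'$ does not exist. If $j\in N_i$ and $\{j,i'\}\in E$, then $j$ is a common neighbour of $i$ and $i'$, so $\operatorname{dist}_G(i,i')\leq 2$. Your path $i$--$j$--$k$--$i'$ collapses to $i$--$j$--$i'$. This is not cosmetic. Your own formulas for that subcase ($Q_{i'}$ changes by $\pm 1/L_{i'}$ and $Q_i$ by $\pm S_i/(L_i(L_i\pm1))$) show that $Q_i$ and $Q_{i'}$ move in the \emph{same} direction. So the claim that ``in both subcases the increments are anticorrelated'' is false there, and the inequality you planned to verify actually fails. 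For example, take a path $i$--$j$--$i'$ with unit supports and activate $\{j,i'\}$. The product goes from $1/4$ to either $0$ or $2/3$ with equal probability, so its conditional expectation is $1/3>1/4$. This configuration has to be excluded by the distance-three hypothesis, not computed through.

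\textbf{Second error: the sign of the single-factor terms.} In the genuine subcase $j\in N_i$, $k\in N_{i'}$, the ``convexity terms'' are not positive, and nothing needs to dominate them. Indeed, the expressions $-S/(L(L^2-1))$ you wrote down are negative. Here $S_i$ and $S_{i'}$ are fixed, and $L_i$, $L_{i'}$ move by $\pm1$ in opposite directions. Since $Q=1-S/L$ is concave in $L$, the case-2 computation of \cref{lem:R-submartingale} gives $\E{Q_i(t+1)\growingmid\mathcal F(t),e(t)}-Q_i(t)=-R_i(t)/(L_i(t)^2-1)\leq0$, and the same holds for $i'$. With $D_h:=Q_h(t+1)-Q_h(t)$,
\[
\E{Q_i(t+1)Q_{i'}(t+1)\growingmid\mathcal F(t),e(t)}-Q_i(t)Q_{i'}(t)
= Q_i(t)\,\E{D_{i'}\growingmid\mathcal F(t),e(t)}+Q_{i'}(t)\,\E{D_i\growingmid\mathcal F(t),e(t)}+\E{D_iD_{i'}\growingmid\mathcal F(t),e(t)}.
\]
All three summands are nonpositive. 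The last is nonpositive because $D_i$ and $D_{i'}$ have opposite signs on each of the two outcomes. This is exactly the paper's Case~3, and it closes the proof.

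Your fallback, that one only needs the bound after taking expectations, is the same one-step inequality and offers no way around this.
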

The proof of this lemma is based on elementary but somewhat lengthy one-step calculations, and we postpone it until the end of this section.
Our second ingredient is a variant of the Chernoff bound for random variables whose dependencies are described by a graph. In the standard setting, two variables may be dependent only if their corresponding vertices are adjacent. Consequently, the variables indexed by an independent set are mutually independent. The chromatic number $\chi(H)$ is the smallest number of colors needed to color $H$ such that adjacent vertices receive different colors. Equivalently, it is the smallest number of independent sets into which the vertices of $H$ can be partitioned. Applying the Chernoff argument separately to these color classes leads to a loss of a factor $\chi(H)$ in the exponent; see, for example, \cite{janson2004large}.
More precisely, we show
\begin{restatable}{paperthm}{graphchernoff}
\label{lem:chernoff-local-products}
Let $X_1,\ldots,X_q\in{0,1}$ be binary random variables, let $p_1,\ldots,p_q\in[0,1]$, and let $H$ be an undirected graph on $[q]$. Furthermore, let
$Z:=\sum_{i=1}^qX_i$, $\mu:=\sum_{i=1}^qp_i$, and
$\kappa:=\chi(H)$. Suppose that, for every independent set $I$ of $H$,
\begin{align}
\E{\prod_{i\in I}(1-X_i)}
\leq
\prod_{i\in I}(1-p_i).
\label{eq:local-product-assumption}
\end{align}
Then, for every $0\leq\delta\leq1$,
\begin{align*}
\pr{Z\leq(1-\delta)\mu}
\leq
\exp\left(-\frac{\delta^2\mu}{2\kappa}\right).
\end{align*}
\end{restatable}

The proof follows the standard argument for dependency graphs almost verbatim. In that argument, the vertices are partitioned into $\kappa$ color classes, each of which is an independent set of $H$. Independence within a color class is used only to factor the relevant expectation into a product. In our setting, this step is replaced by the assumption in \eqref{eq:local-product-assumption}. 

We can now derive the theorem from these two statements. We apply \cref{lem:chernoff-local-products} with $H=G^2$, the second power of $G$. Recall that two distinct opinions are adjacent in $G^2$ if their distance in $G$ is at most two. Thus, a set is independent in $G^2$ if and only if its opinions have pairwise distance at least three in $G$. It therefore remains to verify \eqref{eq:local-product-assumption} for every such set. More precisely, we need to show that, for every set $I\subseteq O$ whose opinions have pairwise distance at least three,
\begin{align}
\E{\prod_{i\in I}(1-X_i)}
\leq
\prod_{i\in I}(1-p_i).
\end{align}

This is where \cref{lem:product-distance-three} comes into play. Fix such a set $I$ and consider the bounded supermartingale
$\left(\prod_{i\in I}Q_i(t)\right)_{t\geq 0}$.
Recall that $Q_i(T)=1-X_i$ for every $i\in O$, where $T$ denotes the stabilization time of the process. Applying the Optional Stopping Theorem to $T\wedge t$ and then letting $t\to\infty$, using bounded convergence, gives

\begin{align}
\E{\prod_{i\in I}(1-X_i)}
&=\E{\prod_{i\in I}Q_i(T)}\nonumber\\
&\leq\prod_{i\in I}Q_i(0)
=\prod_{i\in I}(1-p_i).
\end{align}

Thus, the variables $X_i$ satisfy the local product assumption \eqref{eq:local-product-assumption} with respect to $G^2$, and the theorem follows.

It remains to prove \cref{lem:product-distance-three}.

\begin{proof}

Fix $t\geq 0$. If the process has already stabilized, the product remains constant and there is nothing to prove. Otherwise, condition on $\mathcal F(t)$ and on the unordered opinion edge $e(t)$ activated in step $t$. To prove that the product process is a supermartingale, we must show that, for every possible choice of $e(t)$,
\begin{align}
    \E{\prod_{j\in I}Q_j(t+1) \growingmid \mathcal F(t),e(t)}-\prod_{j\in I}Q_j(t)\leq 0.
\end{align}

If $Q_i(t)=0$ for some stopped $i\in I$, then the product remains zero forever and there is nothing to prove. A stopped factor with value $1$ can be removed from the product. Hence, we may assume that all ratios considered below are active.

Since the opinions in $I$ have pairwise distance at least three, their closed neighborhoods are pairwise disjoint. Consequently, an opinion edge can affect at most two of the ratios. Up to exchanging the endpoints of $e(t)$, we distinguish the following cases.

\begin{enumerate}
    \item $e(t)=\{i,k\}$ for some $i\in I$ and $k\in N_i$.

    Among the quantities entering $Q_i(t)$, only $S_i(t)$ changes, while $L_i(t)$ remains unchanged. Therefore,

\begin{align*}
        &\E{\prod_{j\in I}Q_j(t+1) \growingmid \mathcal F(t),e(t)}
            -\prod_{j\in I}Q_j(t) \\
        &\quad= \prod_{j\in I\setminus\{i\}}Q_j(t) \cdot \E{ Q_i(t+1) \growingmid \mathcal F(t),e(t)}
            -\prod_{j\in I}Q_j(t)\\
        &\quad= \prod_{j\in I\setminus\{i\}}Q_j(t)
        \left(
            \frac{1}{2}
            \frac{L_i(t)-S_i(t)+1}{L_i(t)}
            +\frac{1}{2}
            \frac{L_i(t)-S_i(t)-1}{L_i(t)}
            -Q_i(t)
        \right)\\
        &\quad= \prod_{j\in I\setminus\{i\}}Q_j(t)
        \left(
            Q_i(t)-Q_i(t)
        \right)
        =0.
    \end{align*}

    \item $e(t)=\{k,\ell\}$ for some $i\in I$ with $k\in N_i$ and
    $\ell\notin\bigcup_{j\in I}(N_j\cup\{j\})$.

    In this case, $S_i(t)$ remains unchanged, while $L_i(t)$ becomes either
    $L_i(t)-1$ or $L_i(t)+1$, each with probability $1/2$. Since the ratio is active, we have $L_i(t)\geq 2$. Hence,

\begin{align*}
        &\E{\prod_{j\in I}Q_j(t+1) \growingmid \mathcal F(t),e(t)}
            -\prod_{j\in I}Q_j(t) \\
        &\quad= \prod_{j\in I\setminus\{i\}}Q_j(t) \cdot \E{ Q_i(t+1) \growingmid \mathcal F(t),e(t)}
            -\prod_{j\in I}Q_j(t)\\
        &\quad=\prod_{j\in I\setminus\{i\}}Q_j(t)
        \left(
            \frac{1}{2}
            \frac{L_i(t)-1-S_i(t)}{L_i(t)-1}
            +\frac{1}{2}
            \frac{L_i(t)+1-S_i(t)}{L_i(t)+1}
            -Q_i(t)
        \right)\\
        &\quad=\prod_{j\in I\setminus\{i\}}Q_j(t)
        \left(
            1-\frac{L_i(t)S_i(t)}{L_i(t)^2-1}
            -Q_i(t)
        \right)\\
        &\quad=\prod_{j\in I\setminus\{i\}}Q_j(t)
        \left(
            R_i(t)-R_i(t)\left(1-\frac{1}{L_i(t)^2}\right)^{-1}
        \right)\\
        &\quad=-\prod_{j\in I\setminus\{i\}}Q_j(t)
        \frac{R_i(t)}{L_i(t)^2-1}
        \leq 0.
    \end{align*}

    \item $e(t)=\{k,\ell\}$ for distinct $i,j\in I$ with
    $k\in N_i$ and $\ell\in N_j$.

    This case can only occur when $\operatorname{dist}_G(i,j)=3$. First, we rewrite the term. Using the product-increment identity

\begin{align*}
    &Q_i(t+1)Q_j(t+1)-Q_i(t)Q_j(t) \\
        &\quad=
        Q_i(t)\bigl(Q_j(t+1)-Q_j(t)\bigr) +
        Q_j(t)\bigl(Q_i(t+1)-Q_i(t)\bigr)\\
    &\qquad{}+
        \bigl(Q_i(t+1)-Q_i(t)\bigr)
        \bigl(Q_j(t+1)-Q_j(t)\bigr),
\end{align*}

    and using that $Q_i(t)$ and $Q_j(t)$ are $\mathcal F(t)$-measurable, we obtain

\begin{align*}
        &\E{Q_i(t+1)Q_j(t+1)\growingmid\mathcal F(t),e(t)}
            -Q_i(t)Q_j(t) \\
        &\quad=
        Q_i(t)\left(
            \E{Q_j(t+1)\growingmid\mathcal F(t),e(t)}-Q_j(t)
        \right)\\
        &\qquad{}+Q_j(t)\left(
            \E{Q_i(t+1)\growingmid\mathcal F(t),e(t)}-Q_i(t)
        \right)\\
        &\qquad{}+\E{
            \bigl(Q_i(t+1)-Q_i(t)\bigr)
            \bigl(Q_j(t+1)-Q_j(t)\bigr)
            \growingmid\mathcal F(t),e(t)
        }.
    \end{align*}

    We now show that each of the three summands is non-positive. The first two terms are non-positive by the calculations from the previous case. Indeed,

\begin{align*}
        &\E{Q_i(t+1)\growingmid\mathcal F(t),e(t)}-Q_i(t) \\
        &\quad=
        \frac{1}{2}
        \left(
            \frac{L_i(t)-1-S_i(t)}{L_i(t)-1}
            +
            \frac{L_i(t)+1-S_i(t)}{L_i(t)+1}
        \right)
        -\frac{L_i(t)-S_i(t)}{L_i(t)}
        \leq 0.
    \end{align*}

    The same holds analogously for $j$.

    It remains to consider the last term. Define $\iota_{ij}\in\{-1,1\}$ as the direction of the interaction, where $\iota_{ij}=1$ if an agent moves from $\ell\in N_j$ to $k\in N_i$, and $\iota_{ij}=-1$ otherwise. Thus,
    \begin{align*}
        L_i(t+1)&=L_i(t)+\iota_{ij},&
        L_j(t+1)&=L_j(t)-\iota_{ij}.
    \end{align*}
    Consequently,

\begin{align*}
        Q_i(t+1)-Q_i(t)
        &=
        \left(1-\frac{S_i(t)}{L_i(t)+\iota_{ij}}\right)
        -\left(1-\frac{S_i(t)}{L_i(t)}\right) \\
        &= \frac{\iota_{ij}S_i(t)}
        {L_i(t)(L_i(t)+\iota_{ij})},
    \end{align*}

    and, analogously,

\begin{align*}
        Q_j(t+1)-Q_j(t)
        &=
        \left(1-\frac{S_j(t)}{L_j(t)-\iota_{ij}}\right)
        -\left(1-\frac{S_j(t)}{L_j(t)}\right) \\
        &= -\frac{\iota_{ij}S_j(t)}
        {L_j(t)(L_j(t)-\iota_{ij})}.
    \end{align*}

    Therefore,

\begin{align*}
    \mathcal{Q}_{ij}(t)
        &:=
        \bigl(Q_i(t+1)-Q_i(t)\bigr)
        \bigl(Q_j(t+1)-Q_j(t)\bigr) \\
        &= -\frac{S_i(t)S_j(t)}
        {L_i(t)(L_i(t)+\iota_{ij})L_j(t)(L_j(t)-\iota_{ij})}.
\end{align*}

    Since both outcomes of $\iota_{ij}$ are equally likely,

\begin{align*}
        &\E{\mathcal{Q}_{ij}(t)
            \growingmid\mathcal F(t),e(t)} \\
        &\quad= \frac{1}{2}
        \left(
            -\frac{S_i(t)S_j(t)}
            {L_i(t)L_j(t)(L_i(t)-1)(L_j(t)+1)}
            -
            \frac{S_i(t)S_j(t)}
            {L_i(t)L_j(t)(L_i(t)+1)(L_j(t)-1)}
        \right)\\
        &\quad=
        -\frac{S_i(t)S_j(t)}{2L_i(t)L_j(t)}
        \left(
            \frac{1}{(L_i(t)-1)(L_j(t)+1)}
            +
            \frac{1}{(L_i(t)+1)(L_j(t)-1)}
        \right)
        \leq 0.
    \end{align*}

    Finally, all other factors remain unchanged, and hence

\begin{align*}
        &\E{\prod_{h\in I}Q_h(t+1)\growingmid\mathcal F(t),e(t)}
            -\prod_{h\in I}Q_h(t) \\
        &\quad=
        \prod_{h\in I\setminus\{i,j\}}Q_h(t)
        \left(
            \E{Q_i(t+1)Q_j(t+1)\growingmid\mathcal F(t),e(t)}
            -Q_i(t)Q_j(t)
        \right)
        \leq 0.
    \end{align*}

    \item In all remaining cases, $e(t)$ does not change any $Q_i(t)$ with $i\in I$, and hence
    \begin{align*}
        \E{\prod_{i\in I}Q_i(t+1)\growingmid\mathcal F(t),e(t)}
        -\prod_{i\in I}Q_i(t)
        =0.
    \end{align*}
\end{enumerate}

For an idle step, $e(t)=\bot$ and the product remains unchanged.
Since the conditional drift is non-positive for every possible choice of $e(t)$, the law of total expectation gives

\begin{align*}
    &\E{\prod_{i\in I}Q_i(t+1)\growingmid\mathcal F(t)} \\
    &\quad=
    \sum_{e\in E\cup\{\bot\}}
    \pr{e(t)=e\mid\mathcal F(t)}
    \E{\prod_{i\in I}Q_i(t+1)
        \growingmid\mathcal F(t),e(t)=e} \\
    &\quad\leq \prod_{i\in I}Q_i(t),
\end{align*}

which proves the lemma.
\end{proof}

\subsection{Proof of \Cref{prop:cycleshatter} (Maximal Support on Cycle Graphs)}
\label{sec:a ring to rule them all}

We first explain the analysis approach of our main results in \Cref{prop:shatteringdos} for a simple cycle graph and a fixed starting configuration. This serves as a compact illustration and a warm-up for the complete technical analysis for bounded degree graphs in Section~\ref{sec:bounded degree bound} below.
In this section, we consider the productive \EdgeVOG process on $G_r=(O_r,E_r)$ with $O_r= \lbrace 1,...,m \rbrace$ and $\lbrace i, j \rbrace \in E_r$ whenever $j = 1+(i\bmod m)$ for all $i \in O_r$. There are $n$ agents, where the initial support of each opinion $i \in O_r$ is 1, i.e., $n = m$ and $|S_i{(0)}| = 1$ for all $i$.
In the following, we adopt the random walk-based view of the process where the agents are assigned to the node of the opinion graph corresponding to the opinion they have. On a high level, the main argument for the cycle graph is as follows. Suppose that, at some point,  a large number of agents is assigned to a specific node $i \in O$. Now consider one of these agents and assume that the agent followed the path $p$ to $i$. It is easy to see that each node on $p$ was active at least twice (once in the step that $i$ entered the node and once in the step that $i$ left the node).  However, as we will see later, each node is, on average,  active only rarely before being decided. Hence it is unlikely that there are many long paths crossing node $i$.

Fix an opinion $i\in O$ and a constant $c\geq1$, and let
$d\coloneqq\lceil2(c+2)\log_2 n\rceil$.
If $2d+1\geq n$, the desired bound follows directly from $S_i(t)\leq n$.
Otherwise, let $C_d$ be the path consisting of $i$ and the $d$ closest opinions on either side of $i$ in the cycle. Initially, this path contains exactly $2d+1$ agents.
Thus, if $i$ ever has more than $2d+1$ agents, at least one of them must have started outside $C_d$.
Such an agent must traverse one of the two paths of $d$ opinions leading to $i$. Each of these $d$ opinions is involved in at least two productive interactions: one in which the agent enters and one in which it leaves.

In the following, we focus, w.l.o.g., on the left path.
Choose a fixed independent set $I$ of at least $d/2$ opinions on this path.
For each $j\in I$, consider its first productive interaction, if such an interaction occurs.
Since $S_j(0)=1$ and $j$ has not previously been productive, its support immediately before this interaction is still one. With probability $1/2$, its sole agent leaves, so $j$ becomes extinct and can never be productive again.
Consequently, if $j$ is productive at least twice, its first productive interaction must increase its support.

To account for the dependence between the opinions, assign an independent fair coin to each $j\in I$ and use this coin to choose the direction of its first productive interaction. Since $I$ is independent, an interaction involves at most one opinion in $I$, so this gives exactly the original process. The remaining interactions use fresh fair coins. If a first interaction never occurs, its assigned coin is simply unused.
Thus, traversing the path requires a prescribed outcome of every coin assigned to $I$, and its probability is at most
\[
    2^{-|I|}\leq 2^{-d/2}.
\]
The same argument holds for the right path. Applying the union bound over both directions and all $n$ opinions gives
\[
    \pr{\exists i\in O,\ t\geq0:S_i(t)>2d+1}
    \leq 2n\cdot2^{-d/2}
    \leq 2n^{-(c+1)}\leq n^{-c}.
\]
As $d\in O(\log n)$, this proves the bound simultaneously for all opinions and all time steps. The ordinary process visits the same configurations after idle steps are removed, so the result also holds for that process.

\subsection{Proof of \Cref{prop:shatteringdos} (Maximal Support on Bounded Degree Graphs)}
\label{sec:bounded degree bound}
\newcommand{\indicator}{\ensuremath{\mathds{1}_{\{ P_j(\ato T) \geq \tau'\}}}}

\label{sec:support}
In this section we prove \Cref{prop:shatteringdos}. It bounds the maximal support of an opinion as a function of the maximal initial support and the maximal degree in the opinion graph.
In particular, we show the following more detailed theorem
\begin{theorem}
\label{prop:shatteringdosX}

Let $G=(O,E)$ be an arbitrary graph with maximum degree at most $\Delta$, where $\Delta\geq1$. Suppose $\hat s$ is the maximal initial support.
For both the \EdgeVOG and the productive \EdgeVOG processes, with high probability every opinion has support at most
$O\left(\min\{n,\Delta^6\hat s\log^2n,\Delta^9\hat s^2\log n\}\right)$ simultaneously for all steps $t\geq0$.

\end{theorem}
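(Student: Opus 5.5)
We follow the witness-set strategy from the sketch of \Cref{prop:shatteringdos}. We work only with the productive process, since the ordinary process visits the same configurations once idle steps are removed. The argument has three steps.

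\textbf{Step 1: a per-opinion decision bound.} Fix an opinion $j$ and let $P_j(t)$ count the productive interactions involving $j$ before step $t$. Restrict attention to steps involving $j$. By \Cref{claim:casedist}, in every such step $\Psi_j$ drops by one in expectation. Interactions on edges incident to $N_j$ but not to $j$ leave $\Psi_j$ unchanged in expectation. Hence $\Psi_j(t)+P_j(t)$, stopped when $j$ becomes decided, is a martingale. Optional stopping then shows that the expected number of productive interactions of $j$ before it is decided is at most $\Psi_j(\vec s_0)\le\Delta\hat s^2$. By Markov's inequality, $j$ is involved in at least $\tau$ productive interactions with probability at most $\Delta\hat s^2/\tau$.

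For a $4$-separated set $I$, a single interaction touches at most one $\Psi_j$ with $j\in I$, exactly as in \Cref{lemma:prodpotential}. We therefore use the product $\prod_{j\in I}(\Psi_j(t)+\text{(number of $j$-steps so far)})$, suitably truncated, or equivalently we sample the events sequentially. This should give the joint bound $\pr{\forall j\in I: P_j(\infty)\ge\tau}\le(\Delta\hat s^2/\tau)^{|I|}$, which is the content of \cref{lemma:pkbound}.

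\textbf{Step 2: witness sets are small.} Define $W_i^\tau(t)$ as the maximal connected set containing $i$ in which every opinion has been productive at least $\tau$ times before step $t$. Witness sets only grow in $t$, so it suffices to bound the final witness set $W_i^\tau(\infty)$. If $|W_i^\tau|$ is large, it contains a connected set $U\ni i$ of size about $\Delta^3 k$, and $U$ contains a $4$-separated subset of size $k=c\log n$. We count these by taking a spanning tree in the third power $G^3$ of the opinion graph, where $G^3$ has maximum degree $O(\Delta^3)$. The number of trees of size $k$ rooted at $i$ in a graph of maximum degree $D$ is at most $(eD)^{k}$, so there are at most $O(\Delta^{3})^{k}$ candidate $4$-separated sets $I$ with a connecting structure. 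This $G^3$-tree framing is our concrete choice for the count the paper calls a textbook argument.

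With $\tau^\star=c'\Delta^5\hat s^2$, each fixed $I$ is fully witnessed with probability at most $(c'\Delta^4)^{-k}$. The union bound over candidates gives failure probability at most $(O(1)/(c'\Delta))^{k}\le n^{-c-1}$. Taking a maximal $4$-separated subset then yields $|W_i^{\tau^\star}|=O(\Delta^3\log n)$ for all $i$ simultaneously, w.h.p.

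\textbf{Step 3: bound the support via boundary flux.} The support of $i$ at time $t$ is at most the number of agents in $W=W_i^{\tau^\star}(t)$. That number is at most $\hat s|W|$ plus the net flux across the boundary edges of $W$. Each boundary opinion has fewer than $\tau^\star$ productive interactions, so there are at most $\Delta|W|\tau^\star$ crossing interactions, each moving an agent in or out with probability $1/2$.

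The obstacle is that $W$ is random and depends on the future. We handle this by union-bounding over all candidate connected sets $U\ni i$ of size $O(\Delta^3\log n)$; there are at most $n^{O(1)}\cdot(e\Delta)^{O(\Delta^3\log n)}$ of them. For each fixed $U$, the flux process is a martingale counted only over crossings through boundary opinions with fewer than $\tau^\star$ interactions, that is, truncated at the $\tau^\star$-th interaction of each boundary opinion. By Azuma's inequality, this martingale stays below $\sqrt{\Delta|U|\tau^\star\cdot\log(\#U)}$ w.h.p. Two bounds follow.

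Using $\log\#U=O(\Delta^3\log n\cdot\log\Delta)$ naively is too costly. Instead, we bound the crossings per boundary opinion by $\tau^\star$ deterministically, which gives an influx of at most $\Delta|W|\tau^\star=O(\Delta^9\hat s^2\log n)$. This is the third term in the statement.

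For the $\Delta^6\hat s\log^2n$ term, we apply Azuma per boundary edge family, using a union bound only over the $\le n\Delta$ boundary opinions and times, not over sets. Each boundary opinion $j$ contributes net flux $O(\sqrt{\tau^\star\log n})$ w.h.p. Summing over the $\Delta|W|$ boundary opinions gives $\Delta|W|\sqrt{\tau^\star\log n}=O(\Delta^3\log n\cdot\Delta\cdot\Delta^{2.5}\hat s\sqrt{\log n})$. This is within $O(\Delta^6\hat s\log^2n)$.

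The per-opinion martingale must be set up carefully. For each opinion $j$ and neighbor $h$, the net transfer along the edge $\{j,h\}$ during the first $\tau^\star$ interactions of $j$ is a sum of fair $\pm1$ variables, so Azuma applies to it with a union bound over $n\Delta$ edges. Adding the initial mass $\hat s|W|$ and taking the minimum with $n$ completes the proof.
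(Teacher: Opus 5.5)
\textbf{Gap in Step 3 (the $\Delta^6\hat s\log^2 n$ term).} You bound the net transfer on each boundary edge separately by $O(\sqrt{\tau^\star\log n})$ and then add these bounds over the up to $\Delta|W|$ cut edges. This use of the triangle inequality discards the cancellation between different cut edges, and the arithmetic does not give what you claim. With $|W|=O(\Delta^3\log n)$ and $\tau^\star=\Theta(\Delta^5\hat s^2)$,
\[
    \Delta|W|\sqrt{\tau^\star\log n}
    =O\bigl(\Delta^{3}\log n\cdot\Delta\cdot\Delta^{5/2}\hat s\sqrt{\log n}\bigr)
    =O\bigl(\Delta^{13/2}\hat s\log^{3/2}n\bigr).
\]
This exceeds $\Delta^6\hat s\log^2 n$ by a factor $\sqrt{\Delta/\log n}$. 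Since $\Delta$ is unrestricted, this matters. For example, take $\hat s=1$ and $\Delta=n^{1/10}$. The statement then claims $\Theta(n^{3/5}\log^2 n)$, but your argument only yields $n^{13/20}\log^{3/2}n$; the deterministic term is $n^{9/10}\log n$.

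The missing idea is to treat the \emph{total} net flux into a fixed set $C$ as a single martingale with increments in $\{-1,0,1\}$. Then fluctuations add in quadrature and give order $|C|\sqrt{\Delta\tau^\star}$ up to logarithms. The randomness of $W$ is then paid for by a union bound over sets. The paper does exactly this in \Cref{lemma:trafficbound}. Using the self-normalized bound of \Cref{lemma:nonzero-increments}, for each fixed $C$ one has $\mathcal T_C(\ato t)\le3c(\sqrt{|C|P_{\mathcal S(C)}(\ato t)}+|C|)\log n$ for all $t$, except with probability $n^{-c|C|}$. This failure probability is small enough to survive a union bound over \emph{all} subsets of $O$. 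On that event, the witness' cut count $P_{\mathcal S(W)}(\ato t)\le\Delta|W|\tau^\star$ gives $S_W(t)\le|W|(\hat s+6c\sqrt{\Delta\tau^\star}\log n)$.

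Incidentally, the route you dismissed as too costly actually works. Use your truncated flux, which has at most $O(\Delta|U|\tau^\star)$ nonzero increments, together with \Cref{lemma:nonzero-increments}. A union bound over the at most $(4\Delta)^s$ connected sets of size $s=O(\Delta^3\log n)$ containing $i$ then gives $O\bigl(\sqrt{\Delta s\tau^\star(s\log(4\Delta)+\log n)}\bigr)=O(\Delta^6\hat s\log n\sqrt{\log(4\Delta)})$, which is within the claimed bound.

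\textbf{Secondary issue in Step 2.} A $4$-separated set is independent in $G^3$, so it has no spanning tree there. What you need is connectivity in $G^4$, or in the graph of pairs at distance exactly four, which has maximum degree at most $\Delta^4$. A merely maximal $4$-separated subset of a connected set is only guaranteed to be connected in $G^7$. That would cost $\Delta^7$ choices per vertex in the count and defeat $\tau^\star=\Theta(\Delta^5\hat s^2)$. The paper's \textsc{GreedyNet} avoids this by adding each new vertex at distance exactly four from the current set, while still covering $C$ within distance three. Then $(2e^3\Delta^4)^k(100\Delta^4)^{-k}\le2^{-k}$, so this step is repairable with the same $\tau^\star$.

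One further small point: when $W=\{i\}$ because $i$ itself has fewer than $\tau^\star$ interactions, the boundary opinions need not have fewer than $\tau^\star$ interactions. In that case the cut count must be bounded by $P_i(\ato t)<\tau^\star$ instead.
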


\noindent \emph{Proof overview:}
We generalize the arguments for the cycle in Section~\ref{sec:a ring to rule them all} above. For a given opinion $i \in O$ we want to bound the subset of opinions whose agents might end up on $i$ in the final configuration. We need a more general notion than the $O(\log n)$-neighborhood, since it may already contain all opinions and agents. Moreover, multiple agents may start on the same opinion, and therefore each such opinion is expected to be productive more than once before becoming decided.

Instead, our main tool for the analysis are \emph{witness sets} defined as follows: For each opinion $i \in O$, time $t \ge 0$, and threshold $\tau \ge 0$, the witness set $W_i^\tau(t)$ (cf.\ \Cref{def:witness} below) is the largest connected subgraph containing $i$ in which every opinion has been productive at least $\tau$ times by time~$t$. Thus, the witness set depends on both the topology of the opinion graph and the configuration in step $t$.
We furthermore assign each witness a weight proportional to its size, scaled by $\hat{s}+\min\{\Delta\tau,\sqrt{\Delta\tau}\log n\}$ (cf. \Cref{def:witnessweight} below). This weight gives a coarse upper bound on how many agents $W_i^{\tau}(t)$ can contain in step $t$.

The proof proceeds in two steps:
First, in \Cref{lemma:upperboundS}, we show that bounding the weight of $W_i^\tau(t)$ directly bounds the support $S_i(t)$, w.h.p.
As the weight depends on the parameter $\tau$ and the corresponding size $|W^{\tau}_i(t)|$, we need to argue that there is a small $\tau$ where the witness sets remain small with high probability.
We will show this in \Cref{lemma:Ctau}.
The main complication in proving the propositon is that productive steps of neighboring opinions are strongly correlated.
For example, if a given opinion was productive $\tau$ times, then, in total, all its neighbors must also have been productive $\tau$ times.
Thus, there is a positive correlation between productive events of adjacent opinions.
To deal with this dependence,
in \Cref{lemma:pkbound}, we show that productive steps of opinions at distance at least four behave independently for the purpose of tail bounds

We then use this insight in \Cref{lemma:Ctau} to prove that for $\tau'=100\Delta^5\hat s^2$, witness sets cannot grow larger than $O(\Delta^3\log n)$ with high probability.

Together with \Cref{lemma:upperboundS}, this implies the desired bound on $S_i(t)$, completing the proof.

\noindent Let us start with a formal definition of the witness sets.

For each opinion $i$ and each step $t$, we define $P_i(\ato t)$ as the number of productive interactions involving $i$ in steps $0,\ldots,t-1$. Thus, $P_i(\ato0)=0$, and these counts refer to the configuration at the beginning of step $t$.

Fix a node $i\in V$, a step $t\geq 0$ and a parameter $\tau\ge 1$.
Then, we define the \emph{$\tau$-witness} $W^\tau_i(t)$ of $i$  in step $t$ as the largest connected subset containing $i$
in which every node experiences at least $\tau$ productive steps until step $t$.
\begin{definition}[$\tau$-Witness]
\label{def:witness}
Let $i \in V$ be an arbitrary opinion and let $t \geq 0$ be a step.
For any integer $\tau\geq0$, we define the set $O_\tau(t) = \{ j \in O \mid P_j(\ato{t}) \geq \tau\}$ and let $G_{\tau}(t)$ be the subgraph induced by $O_\tau(t)$.
Then, define the unique \emph{$\tau$-witness} $W^\tau_i(t) \subset O$ of $i$ as the connected component containing $i$ in $G_{\tau}(t)$ if $i \in O_\tau(t)$ and $\{i\}$ otherwise.
\end{definition}
\noindent Observe that for different choices of $\tau$, we obtain widely different witness sets in the same configuration.
Furthermore, we define the \emph{weight} of a witness set.
\begin{definition}[Weight of a $\tau$-Witness]
\label{def:witnessweight}
Suppose \EdgeVOG starts in a configuration with maximum support $\hat{s}$. Then, for a fixed $c \geq 1$, we define the \emph{weight} of $W^\tau_i(t)$ for opinion $i \in O$ in step $t$ and integer $\tau\geq0$, as
\begin{align*}
\omega_c(W^\tau_i(t))\coloneqq |W^\tau_i(t)| \cdot \left(\hat{s}+\min\left\{\Delta\tau, 6c\sqrt{\Delta\tau} \log n\right\}\right).
\end{align*}
\end{definition}
The weight $\omega_c(W_i^\tau(t))$ represents a coarse upper bound on how many agents can be contained in $W_i^\tau(t)$ by time $t$. Intuitively, the argument for this is as follows:

Clearly, the agents initially in $W_i^\tau(t)$ contribute at most $|W_i^\tau(t)|\hat s$ to its support. Any increase in this support is the net traffic across its boundary.
If $i$ has been productive at least $\tau$ times, every opinion on the outer boundary has been productive fewer than $\tau$ times, by maximality of the witness. If $i$ has been productive fewer than $\tau$ times, then the witness is $\{i\}$ and its total boundary count is itself less than $\tau$.
In either case, the number of productive boundary interactions is at most $\Delta|W_i^\tau(t)|\tau$.
Each such interaction sends one agent in either direction with equal probability. In \Cref{lemma:trafficbound}, we bound this net traffic simultaneously over time. Taking a union bound over all possible sets then allows us to apply the estimate to the random witness $W_i^\tau(t)$.

Consequently, for any threshold $\tau$, the weight provides an upper bound on the agents in the subset and therefore also on the support of opinion $i$.

More formally, we show the following bound.
\begin{proposition}
    \label{lemma:upperboundS}

For any $c\geq1$, it holds
\[
    \pr{\forall i\in O,\ t\geq0:\quad
    S_i(t)\leq\min_{\tau\in\mathbb Z_{\geq0}}\omega_c(W_i^\tau(t))}
    \geq1-\frac{1}{n^{c-3}}.
\]

\end{proposition}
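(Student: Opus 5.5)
The plan is to prove a traffic bound for fixed sets and then transfer it to the random witness $W_i^\tau(t)$ by a union bound over all candidate sets. I will work with the productive process; the ordinary process visits the same configurations once idle steps are removed, so the statement carries over. Fix a set $W\subseteq O$ and let $\partial W$ be the set of opinion edges with exactly one endpoint in $W$. For each productive step $u$ whose active edge lies in $\partial W$, let $\xi_u\in\{-1,+1\}$ record whether an agent enters or leaves $W$. Conditional on the history and on the chosen edge, $\xi_u$ is a fair sign. All other steps leave $S_W$ unchanged. Hence
\[
S_W(t)=S_W(0)+\sum_{u<t,\ e(u)\in\partial W}\xi_u,
\]
and $S_W(0)\le |W|\hat s$. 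Let $N_W(t)$ be the number of boundary interactions before $t$. Indexing the boundary interactions by their order gives a simple symmetric random walk $Y_W(k)$, and the net influx at time $t$ equals $Y_W(N_W(t))$.

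\emph{Key step (the intended \cref{lemma:trafficbound}).} For every fixed $W$ and every integer $m\geq1$, maximal Azuma (\cref{thm:azuma}) applied to $Y_W$ gives
\[
\pr{\max_{k\le m}Y_W(k)> 6c\sqrt{m}\log n}\le \exp\left(-18c^2\log^2 n\right).
\]
This holds simultaneously for all $t$, because $k$ ranges over all indices up to $m$. On the event $N_W(t)\le m$, the influx is therefore at most $\min\{m,\,6c\sqrt m\log n\}$; the first term is the trivial bound $|Y_W(k)|\le k$.

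Next comes the deterministic link to witnesses. Set $W=W_i^\tau(t)$. If $i\in O_\tau(t)$, every opinion on the outer boundary of $W$ has $P_j(\ato t)<\tau$ by maximality. Each boundary edge has such an outer endpoint, and there are at most $\Delta|W|$ outer boundary opinions. So $N_W(t)\le \Delta|W|\tau$. If $i\notin O_\tau(t)$, then $W=\{i\}$ and $N_W(t)=P_i(\ato t)<\tau\le\Delta\tau$.

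With $m=\Delta|W|\tau$ we get $\sqrt m\le |W|\sqrt{\Delta\tau}$ and $m\le|W|\Delta\tau$. Therefore
\[
S_i(t)\le S_W(t)\le |W|\hat s+\min\{|W|\Delta\tau,\,6c|W|\sqrt{\Delta\tau}\log n\}=\omega_c(W),
\]
provided the Azuma event holds for this particular $W$ and $m$.

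\emph{Main obstacle: the union bound.} The witness is random and depends on the same coins, so we need the Azuma event for every candidate pair $(W,m)$ at once. There can be exponentially many connected sets $W$, so a naive count fails. The fix is to use the $\log^2 n$ in the exponent: the failure probability per pair is $n^{-18c^2\log n}$. I would restrict to connected sets $W$ of size $k$, since witnesses are connected. For $k\lesssim\log n$, the count $n(e\Delta)^k$ is at most $n^{O(\log n)}$ only if $\Delta$ is small, so the superpolynomial decay must absorb this.

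For larger $|W|$ I would argue differently, using monotonicity. Take $\tau=0$, or more generally note that whenever $|W|\hat s\ge n$ the claimed bound is trivially true since $S_i(t)\le n$. So only sets with $|W|<n$ matter. The $\sqrt{\Delta\tau}$ term is only relevant while $m\le n^2$, because the number of boundary interactions is at most the total number of productive steps. Values of $m$ can be taken in powers of two, giving a factor $O(\log n)$.

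The remaining difficulty is bounding the number of relevant sets by $n^{O(\log n)}$. My first attempt is to note that when $|W|\ge\log n$ the $\min$ is nonbinding in a way that a cruder Azuma bound with deviation $|W|\sqrt{\Delta\tau}\log n$ yields failure probability $\exp(-\Omega(|W|\log^2 n))$. This beats the count $n(e\Delta)^{|W|}\le n^{1+2|W|}$ of connected sets of size $|W|$, since $\Delta\le n$. Summing over sizes, $m$, and $i$ then gives total failure at most $n^{-(c-3)}$.
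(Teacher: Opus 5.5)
This is essentially the paper's proof: the decomposition $S_W(t)=s_W+\mathcal T_W(\ato t)$, a per-set tail bound on the net cut traffic with threshold $6c\sqrt{|W|m}\log n=6c|W|\sqrt{\Delta\tau}\log n$, a union bound over sets, and the cut count $\leq\Delta|W|\tau$ from maximality of the witness. Your time change to a simple random walk plus maximal Azuma is an equivalent substitute for \Cref{lemma:nonzero-increments}. Note that this $|W|$-scaled threshold (your ``cruder'' bound) already gives failure $e^{-18c^2|W|\log^2 n}$ for \emph{every} $|W|\geq1$, which beats even $\binom qk\leq n^k$. So the $\sqrt m$ version, the case split at $\log n$, and the restriction to connected sets are unnecessary.

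One justification needs fixing. The cap $m\leq n^2$ does not follow from bounding the number of productive steps, since that number is not deterministically $O(n^2)$. It follows instead because for $m>n^2$ the threshold exceeds $n\geq\mathcal T_W(\ato t)$, exactly as in the paper.
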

\begin{proof}
Fix an opinion $i^* \in O$.
Clearly, the support of $i^*$ is at most the support of \emph{every} connected subset it is contained in.
In particular, it must be at most the support of any witness set.
Thus, for every $t \geq 0$ and \emph{every} integer $\tau\geq0$, it holds that
$S_{i^*}(t) \leq S_{W_{i^*}^\tau(t)}(t).$
Therefore, to prove the proposition, we show that
\begin{align}
    \pr{\forall i^*\in O,\ t,\tau\in\mathbb Z_{\geq0}:\quad S_{W_{i^*}^\tau(t)}(t) \leq \omega_c(W_{i^*}^\tau(t))} \geq 1-\frac{1}{n^{c-3}}.\label{eq:setweight}
\end{align}
We begin the proof with some definitions:
For each set $C \subset V$, let $N_C \coloneqq \{ j \in V \setminus C \mid
\exists \{i, j\} \in E \text{ with } i \in C \}$ be the set of its neighbors.
Further, for each edge $e \in E$ and each step $t \geq 0$, we let $P_e(t)$ be the indicator that edge $e \coloneqq \{i,j\}$ is productive in step $t$, i.e., an agent switches its opinion from $i$ to $j$ or vice versa.
For a set of edges $\mathcal{S} \subseteq E$ define
\begin{align*}
    P_\mathcal{S}(t) = \sum_{e \in \mathcal{S}} P_e(t) && \text{and }&& P_\mathcal{S}(\ato t) \coloneqq \sum_{\tau=0}^{t-1} P_\mathcal{S}(\tau).
\end{align*}
For an ordered pair $(i,j) \in V^2$ with $\{i,j\} \in E$ let $P_{i \to j}(t)$ indicate that an agent moved from $i$ to $j$ in $t$.
Note that
\begin{align*}
    P_e(t) \coloneqq P_{i \to j}(t)+ P_{j \to i}(t).
\end{align*}
Moreover, let $P_{i \to j}(\ato{t}) \coloneqq \sum_{s=0}^{t-1} P_{i \to j}(s)$ count how often an agent moved from $i$ to $j$ before step $t$.
The net \emph{traffic} that went along the edge is then defined as
\begin{align*}
    \mathcal{T}_{i \to j}(\ato t) \coloneqq P_{i \to j}(\ato{t}) - P_{j \to i}(\ato{t}).
\end{align*}
For a subset $C$, define the traffic into $C$ over the cut defined by this set as
\begin{align*}
    \mathcal{T}_{C}(\ato t) \coloneqq \sum_{i \in C} \sum_{j \in N_i \setminus C} \mathcal{T}_{j \to i}(\ato t).
\end{align*}
Given these definitions and assuming the \EdgeVOG process starts in an initial configuration $\Vec{s} = (s_1, \ldots, s_q)$, for any set $C \subseteq V$, we can show the following identity:
\begin{align*}
    S_C(t) &= \sum_{i \in C} S_i(t) = \sum_{i \in C} \left( s_i + \left(\sum_{j \in N_i} P_{j \to i}(\ato{t}) -P_{i \to j}(\ato{t})\right) \right)\\
    &= \sum_{i \in C} s_i + \sum_{i \in C}\left(\sum_{j \in N_i} P_{j \to i}(\ato{t}) - P_{i \to j}(\ato{t})\right)\\
    &= \sum_{i \in C} s_i + \sum_{i \in C}\left(\sum_{j \in N_i} \mathcal{T}_{j \to i}(\ato t)\right)
\\
    &= \sum_{i \in C} s_i + \sum_{\{i,j\} \in E[C]} \left(\mathcal{T}_{i \to j}(\ato t) + \mathcal{T}_{j \to
     i}(\ato t) \right) + \sum_{i \in C}\left(\sum_{j \in N_i\setminus C} \mathcal{T}_{j \to i}(\ato t)\right).
\end{align*}
Note that for each $\{i, j\} \in E$ with $i,j \in C$, holds:
\begin{align*}
\mathcal{T}_{i \to j}(\ato t) + \mathcal{T}_{j \to
     i}(\ato t) = \left( P_{i \to j}(\ato{t})  - P_{j \to i}(\ato{t})  \right) + \left( P_{j \to i}(\ato{t}) -P_{i \to j}(\ato{t})  \right) = 0.
\end{align*}
Therefore, the formula simplifies to
\begin{align}
    S_C(t)
     = \sum_{i \in C} s_i + \sum_{i \in C}\left(\sum_{j \in N_i\setminus C} \mathcal{T}_{j \to i}(\ato t)\right) = s_C + \mathcal{T}_{C}(\ato t). \label{eq:sdef}
\end{align}

For the next step, we will bound the traffic $\mathcal T_C(\ato t)$.
Let $\mathcal S(C)$ denote the edges of the cut between $C$ and $O\setminus C$.
In \Cref{lemma:trafficbound} below, we show that for every nonempty fixed set $C$ and every $c\geq1$,
\begin{align}
    \pr{\exists t\geq0:\mathcal T_C(\ato t)
    \geq3c\left(\sqrt{|C|P_{\mathcal S(C)}(\ato t)}+|C|\right)\log n}
    \leq n^{-c|C|}.
    \label{eq:traffic}
\end{align}
Equipped with this bound, consider the event
\[
    \mathcal E=\bigcap_{\emptyset\neq C\subseteq O}
    \bigcap_{t\geq0}
    \left\{\mathcal T_C(\ato t)\leq
    3c\left(\sqrt{|C|P_{\mathcal S(C)}(\ato t)}+|C|\right)\log n\right\}.
\]
For $c\leq3$, the proposition is immediate, so assume $c>3$.
Since initially empty opinions have been discarded, $q\leq n$. Thus, the union bound gives
\begin{align*}
    \pr{\overline{\mathcal E}}
    &\leq\sum_{k=1}^q\binom qk n^{-ck}
    \leq\sum_{k=1}^n n^{-(c-1)k}
    \leq n^{-(c-2)}\leq n^{-(c-3)}.
\end{align*}
In particular, this event holds simultaneously for every set and time, so it applies to the random witness sets without further conditioning.

Fix $i^*$, $t$ and an integer $\tau\geq0$, and write $W=W_{i^*}^\tau(t)$.
If $P_{i^*}(\ato t)\geq\tau$, then $W$ is a maximal connected component of opinions whose productive counts are at least $\tau$. Each neighbor outside $W$ therefore has count less than $\tau$.
For every cut edge $e=\{i,j\}$ with $i\in W$ and $j\notin W$, we obtain
\begin{align}
    P_e(\ato t)\leq\min\{P_i(\ato t),P_j(\ato t)\}\leq\tau.
    \label{eq:witness-cut-count}
\end{align}
There are at most $\Delta|W|$ cut edges, so
$P_{\mathcal S(W)}(\ato t)\leq\Delta|W|\tau$.
If instead $P_{i^*}(\ato t)<\tau$, then $W=\{i^*\}$ and
$P_{\mathcal S(W)}(\ato t)=P_{i^*}(\ato t)<\tau\leq\Delta|W|\tau$.
Thus, the same bound holds in both cases.

For $\tau=0$, the cut count and its traffic are zero.
For $\tau\geq1$, on $\mathcal E$ and using $\Delta\geq1$, we have
\begin{align*}
    \mathcal T_W(\ato t)
    &\leq3c|W|\left(\sqrt{\Delta\tau}+1\right)\log n
    \leq6c|W|\sqrt{\Delta\tau}\log n.
\end{align*}
On the other hand, the net traffic is at most the total cut count, giving
\[
    \mathcal T_W(\ato t)\leq\Delta|W|\tau.
\]
Combining both estimates with \eqref{eq:sdef} yields
\begin{align*}
    S_W(t)
    &=s_W+\mathcal T_W(\ato t) \leq |W|\hat s+|W|\min\{\Delta\tau,6c\sqrt{\Delta\tau}\log n\}
    =\omega_c(W).
\end{align*}
As this holds for every opinion, time and threshold on the same event $\mathcal E$, \Cref{eq:setweight} and the proposition follow.

\end{proof}

\begin{lemma}
\label{lemma:trafficbound}

Let $\emptyset\neq C\subseteq O$, and let $\mathcal S(C)$ be the edges of the cut between $C$ and $O\setminus C$.
Then, for every $c\geq1$,
\[
    \pr{\exists t\geq0:\mathcal T_C(\ato t)
    \geq3c\left(\sqrt{|C|P_{\mathcal S(C)}(\ato t)}+|C|\right)\log n}
    \leq n^{-c|C|}.
\]

\end{lemma}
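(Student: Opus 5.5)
We treat the net traffic $\mathcal T_C(\ato t)$ as a martingale with respect to the natural filtration $(\mathcal F(t))_{t\geq0}$, whose increments are non-zero only in steps where a cut edge of $\mathcal S(C)$ is productive. In such a step, conditional on $\mathcal F(t)$ and on the chosen edge, the direction is decided by a fair coin, so the increment is $\pm1$ with probability $1/2$ each. Every other step (productive inside $C$, inside $O\setminus C$, or idle) contributes $0$. Hence $\mathcal T_C(\ato t)$ is a martingale with predictable quadratic variation exactly $P_{\mathcal S(C)}(\ato t)$.

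The difficulty is that this variance is random and unbounded in time, so plain Azuma does not apply. We therefore use a self-normalised exponential supermartingale. For fixed $\lambda\in(0,1]$ consider
\[
    Y_\lambda(t)=\exp\bigl(\lambda\mathcal T_C(\ato t)-\tfrac{\lambda^2}{2}P_{\mathcal S(C)}(\ato t)\bigr).
\]
Since $\cosh\lambda\leq e^{\lambda^2/2}$, each cut step multiplies the conditional expectation by at most $1$, so $Y_\lambda$ is a nonnegative supermartingale with $Y_\lambda(0)=1$. By Ville's maximal inequality,
\[
    \pr{\exists t:\ \lambda\mathcal T_C(\ato t)-\tfrac{\lambda^2}{2}P_{\mathcal S(C)}(\ato t)\geq x}\leq e^{-x}.
\]
We take $x=(c|C|+2)\log n$ or similar.

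A single $\lambda$ only handles the regime $P_{\mathcal S(C)}\approx x/\lambda^2$, which is the main obstacle. We resolve it by peeling. Use a geometric grid $\lambda_k=2^{-k}$ for $k=0,\dots,K$. Since $P_{\mathcal S(C)}(\ato t)\leq$ the total number of productive steps touching $C$, we would like an a priori cap. Rather than relying on a cap, we let $k$ range over all nonnegative integers and take $x_k=x+2\log(k+2)$, so the failure probabilities sum to $O(e^{-x})$.

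On the complement event, for every $t$ we pick $k$ with $\lambda_k\approx\min\{1,\sqrt{x/P}\}$, where $P=P_{\mathcal S(C)}(\ato t)$. This gives
\[
    \mathcal T_C(\ato t)\leq \frac{x_k}{\lambda_k}+\frac{\lambda_k}{2}P\leq O\bigl(\sqrt{xP}+x\bigr)+O\bigl(\sqrt{P/x}\,\log\log P\bigr).
\]
The $\log(k+2)$ overhead is at most $\log\log n$-ish when $P\leq n^{O(1)}$; otherwise it is absorbed since $\log(k+2)\leq \log\log P$, which is tiny relative to $x\geq\log n$ whenever $P\leq \exp(n)$. To be safe, we instead bound $P$ by $\sum_{i\in C}P_i$ and simply allow $k$ up to $\log_2 n^{2}$ or so. Case $P\geq x\,n^{4}$ is handled by $\lambda$ small, using the crude fact $\mathcal T_C\leq$ the number of cut steps.

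Finally we check the constants. With $x\approx (c|C|+1)\log n$ we have $\sqrt{xP}\leq \sqrt{2c|C|P\log n}\leq \sqrt{|C|P}\cdot c\log n$ (using $c\geq1$, $\log n\geq\log2$, with a little care for small $n$), and $x\leq 2c|C|\log n$. Together these give the claimed $3c(\sqrt{|C|P}+|C|)\log n$. The union over the grid costs at most a factor $O(1)$, which is absorbed into $n^{-c|C|}$ by enlarging $x$ by $\log n$ (still within the factor $3$). The main care goes into this constant bookkeeping and into the uniform-in-$t$ statement, which Ville's inequality supplies directly.
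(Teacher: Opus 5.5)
Your supermartingale $Y_\lambda$ and the use of Ville's inequality are correct, but the large-$P$ regime, with $P=P_{\mathcal S(C)}(\ato t)$, fails as written.

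\emph{Unbounded grid.} The overhead $\log(k+2)\approx\log\log P$ grows without bound. So for astronomically large $P$ your bound $x_k/\lambda_k+\lambda_kP/2\approx\sqrt{2(x+2\log\log P)P}$ exceeds the target $3c\sqrt{|C|P}\log n$. There is no deterministic cap on $P$: the number of productive steps is only almost surely finite, and bounding $P$ by $\sum_{i\in C}P_i$ does not change this.

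\emph{Your fallback.} Truncating the grid at $\lambda\geq n^{-2}$ leaves a term $P/(2n^2)$, which is linear in $P$. The fact $\mathcal T_C\leq P$ is useless against a threshold of order $\sqrt P$.

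\emph{The missing observation} is conservation of agents. By \eqref{eq:sdef}, $\mathcal T_C(\ato t)=S_C(t)-s_C\leq n$. For $P>n^2$ the threshold exceeds $3cn\log n>n$, so only $1\leq P\leq n^2$ matters. Your grid can then stop at $k\approx\log_2 n$, and the $\log(k+2)$ overhead becomes $O(\log\log n)$. The same cap also handles the small-$n$ constant issues you flagged, e.g. your step $\sqrt{2c|C|P\log n}\leq c\sqrt{|C|P}\log n$ needs $c\log n\geq2$. For small $n$ the threshold (with $P\geq1$) already exceeds $n$.

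With this fix your route works, but it differs from the paper's. The paper uses the same exponential supermartingale, packaged in \Cref{lemma:nonzero-increments} as a crossing bound $\pr{\exists t: V_t\leq m,\ X_t\geq a}\leq e^{-a^2/(2m)}$ for a \emph{fixed} $m$, with $\lambda=a/m$. It then takes a plain union bound over every integer $m\in\{1,\ldots,n^2\}$ with $a=3c(\sqrt{|C|m}+|C|)\log n$. The threshold carries $\log n$ rather than $\sqrt{\log n}$, so the exponent is $\tfrac92c^2|C|\log^2 n$, which easily absorbs the factor $n^2$. No geometric grid, no $\log(k+2)$ bookkeeping and no delicate constants are needed. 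Your peeling avoids the polynomial union factor and would give a sharper $\sqrt{\log n}$ dependence. But that sharper bound also follows from the paper's union over $m$ by adding $2\log n$ to the exponent, so here peeling only adds complexity.
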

\begin{proof}

In each step, either no cut edge is productive, or the traffic increases or decreases by one with equal probability.
Thus, $(\mathcal T_C(\ato t))_{t\geq0}$ is a martingale with respect to the natural filtration, starts at zero, and has increments in $\{-1,0,1\}$.
Moreover, a nonzero increment occurs precisely at a productive cut interaction, so
\[
    \sum_{s=1}^t\bigl(\mathcal T_C(\ato s)-\mathcal T_C(\ato{s-1})\bigr)^2=P_{\mathcal S(C)}(\ato t).
\]
Applying \Cref{lemma:nonzero-increments} gives, for every deterministic $a>0$ and integer $m\geq1$,
\begin{align}
    \pr{\exists t\geq0:P_{\mathcal S(C)}(\ato t)\leq m,\ \mathcal T_C(\ato t)\geq a}
    \leq e^{-a^2/(2m)}.
    \label{eq:traffic-crossing}
\end{align}

For each integer $1\leq m\leq n^2$, apply this estimate with
$a=3c(\sqrt{|C|m}+|C|)\log n$.
The failure probability is at most
$\exp(-\tfrac92c^2|C|\log^2 n)$.
If $P_{\mathcal S(C)}(\ato t)=0$, the traffic is zero and cannot violate the claimed bound.
If $P_{\mathcal S(C)}(\ato t)>n^2$, the threshold exceeds $n$, whereas
$\mathcal T_C(\ato t)=S_C(t)-s_C\leq n$.
Thus, a union bound over $m=1,\ldots,n^2$ gives
\begin{align*}
    \pr{\exists t\geq0:\mathcal T_C(\ato t)
       \geq3c\left(\sqrt{|C|P_{\mathcal S(C)}(\ato t)}+|C|\right)\log n}
    &\leq n^2\exp\left(-\tfrac92c^2|C|\log^2n\right)
    \leq n^{-c|C|}.
\end{align*}
For the last inequality, we used $n\geq2$, $c,|C|\geq1$ and
$\tfrac92c^2|C|\log n\geq c|C|+2$.
This proves the lemma for both the ordinary and the productive process.

\end{proof}

It remains to find a suitable $\tau$ where the emerging components have small weight, w.h.p. Note that the weight is not simply linear in $\tau$. To see this, consider the following two cases.
If we choose a large $\tau$, the witness set will be small, but weight is still large due to the large $\tau$.
As an extreme example, for $\tau \in \Theta(n^2 \log n)$ it easy to show that at most two nodes are productive this often. Thus, the witness set contains at most $2$ nodes but has a weight of $\Theta(n^2 \log n)$.
On the other hand, a $\tau$ that is too small will result in a large witness set and therefore, a large weight.
Again, as an extreme example, for $\tau = 0$ the witness set is trivially $V$ (assuming that $G$ is connected). Thus, the weight is $\Theta(n)$.

In \Cref{lemma:Ctau}, we show a much smaller bound for $\tau' \in \Theta{(\Delta^5 \hat{s}^2)}$. Let $T$ be the step in which the process converges. Then it holds that $W^{\tau'}_i(T) \leq O(\Delta^3 \log n)$ with high probability.
\begin{proposition}
\label{lemma:Ctau}

Let $\tau'=100\Delta^5\hat s^2$. Then, there is a universal constant $C\geq1$ such that, for every opinion $i$ and every integer $k\geq1$,
\[
    \pr{\exists t\geq0:|W_i^{\tau'}(t)|\geq C\Delta^3 k
          \mid\vec S(0)=\vec s}\leq2^{-k}.
\]

\end{proposition}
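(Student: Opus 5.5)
The plan is to reduce the event $\{|W_i^{\tau'}(t)|\geq C\Delta^3k\}$ to the existence of a connected set around $i$ that contains a large $4$-separated subset of opinions, each productive at least $\tau'$ times. A union bound then combines the counting of such sets with a joint tail bound for these productive counts. Since productive counts only increase, $W_i^{\tau'}(t)$ is monotone in $t$ as long as $i\in O_{\tau'}(t)$. Hence the event ``for some $t$'' is the same as the event at the final time $T$, or equivalently the existence of some time at which the witness is large. In particular, for every opinion $j$ in the witness, $P_j(\ato\infty)\geq\tau'$.

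\textbf{Step 1 (joint tail bound).} I will first prove, or invoke as the forthcoming \cref{lemma:pkbound}, that for any $4$-separated set $I$,
\[
\pr{\forall j\in I:\ P_j(\ato\infty)\geq\tau'}\leq\left(\Delta\hat s^2/\tau'\right)^{|I|}.
\]
The argument adapts \cref{lemma:prodpotential}. Let $N_j(t)$ be the number of productive interactions of $j$ so far. Then $\Psi_j(t)+N_j(t)$, stopped when $j$ becomes decided, is a martingale by \cref{claim:casedist}. Because $I$ is $4$-separated, a single interaction changes at most one node potential. The product $\prod_{j\in I}(\Psi_j+N_j)$ is therefore also a (super)martingale. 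I expect it to be cleaner to work with $\prod_{j\in I}\bigl(\Psi_j+\min\{N_j,\tau'\}\bigr)$ stopped at decision. Optional stopping then gives $\E{\prod_j \min\{N_j(\infty),\tau'\}}\leq\prod_j\Psi_j(0)\leq(\Delta\hat s^2)^{|I|}$, and Markov's inequality on the product yields the claimed bound. I expect this step to be the main obstacle. The difficulty is the cross terms: an interaction on an edge between $N(j)$ and $N(j')$ for $j,j'\in I$ at distance $4$ changes no $\Psi_j$, because $\Psi_j=S_j\cdot M_j$ and such an interaction changes $M_j$ by $\pm1$. It can, however, change two factors at once, so I will need a case check like the one in \cref{lem:product-distance-three}. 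The cross term $S_jS_{j'}\cdot(\pm1)(\mp1)$ has negative expectation, which is the favorable direction.

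\textbf{Step 2 (extracting a $4$-separated set).} Suppose $|W|\geq C\Delta^3k$ with $W=W_i^{\tau'}(t)$ connected. I will greedily build a $4$-separated set $I\subseteq W$ that contains $i$ and is ``$7$-connected'', meaning its members are linked by chains of hops of length at most $7$ in $G$. To do this, repeatedly add a vertex of $W$ at distance exactly $4$ from the current $I$ (within the connected set $W$). Such a vertex exists while the radius-$3$ balls around $I$ fail to cover $W$. Since each such ball has at most $1+\Delta+\Delta^2+\Delta^3\leq 4\Delta^3$ vertices, we can grow $I$ to size $k$ once $C\geq4$.

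\textbf{Step 3 (counting and union bound).} A set $I$ of size $k$ that contains $i$ and is connected in $G^7$ is one of at most $(e\Delta^7)^{k}$ candidates, by the standard bound on connected subgraphs (trees) in a graph of maximum degree $\Delta^7$. Combining this with Step 1 and $\tau'=100\Delta^5\hat s^2$ gives a failure probability of at most $(e\Delta^7)^k\bigl(1/(100\Delta^4)\bigr)^k$. This is not small enough: the count grows like $\Delta^7$ while the tail only supplies $\Delta^{-4}$. So I will instead use hops of length exactly $4$, which I can arrange because $W$ is connected. Then the candidates are connected in $G^4$ with degree at most $\Delta^4$, their number is at most $(e\Delta^4)^k$, and the product is $(e/100)^k\leq 2^{-k}$. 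If the hop-$4$ construction forces a slightly larger exponent, I will increase the constant $C$ and adjust $\tau'$ by a constant factor only.
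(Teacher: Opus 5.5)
Your proposal is correct and follows the paper's proof essentially step for step: the product martingale $\prod_{j\in I}(\Psi_j+P_j)$ for $4$-separated sets combined with optional stopping and Markov's inequality (the paper's joint tail bound), then greedy extraction of a $4$-separated prefix whose members are added at distance exactly four, then a union bound over sets connected in the distance-four graph of maximum degree at most $\Delta^4$. The cross-term case you flag as the main obstacle in Step 1 is in fact vacuous. An edge between $N(j)$ and $N(j')$ would give a path of length three from $j$ to $j'$, so under $4$-separation at most one factor changes per step. (Moreover, such an interaction would change $\Psi_j$ by $\pm S_j$ rather than leave it fixed.)
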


\noindent Before we prove the proposition, we show how it combines with \Cref{lemma:upperboundS} to imply \Cref{prop:shatteringdosX}.
Fix a constant $c>4$ and define
\[
    f_c(\Delta,\hat s)
    \coloneqq\hat s+\min\{100\Delta^6\hat s^2,
                           60c\Delta^3\hat s\log n\}.
\]
Substituting $\tau'=100\Delta^5\hat s^2$ in the definition of the weight gives exactly
\[
    \omega_c(W_i^{\tau'}(t))=|W_i^{\tau'}(t)|f_c(\Delta,\hat s).
\]
Let $k=\lceil(c+2)\log_2 n\rceil$, and consider the events
\begin{align*}
    \mathcal E_1&=\{\forall i,t,\tau:\ S_i(t)\leq\omega_c(W_i^\tau(t))\},\\
    \mathcal E_2&=\{\forall i,t:\ |W_i^{\tau'}(t)|<C\Delta^3 k\}.
\end{align*}
By \Cref{lemma:upperboundS}, $\pr{\overline{\mathcal E_1}}\leq n^{-(c-3)}$.
By \Cref{lemma:Ctau} and a union bound over at most $n$ opinions,
$\pr{\overline{\mathcal E_2}}\leq n2^{-k}\leq n^{-(c+1)}$.
On $\mathcal E_1\cap\mathcal E_2$, every support at every time is at most $C\Delta^3 kf_c(\Delta,\hat s)$.
Consequently,
\begin{align}
    \pr{\forall i\in O,\ t\geq0:\ S_i(t)\leq C\Delta^3 kf_c(\Delta,\hat s)}
    &\geq\pr{\mathcal E_1\cap\mathcal E_2} \geq1-n^{-(c-3)}-n^{-(c+1)}.
    \label{eq:helpereq}
\end{align}
For fixed $c$, the support threshold is
$O(\min\{\Delta^9\hat s^2\log n,\Delta^6\hat s\log^2n\})$.
The additional term $\Delta^3\hat s\log n$ is absorbed since $\Delta,\hat s\geq1$ and $n\geq2$.
Combining this with the deterministic bound $S_i(t)\leq n$ proves \Cref{prop:shatteringdosX}. By choosing $c$ sufficiently large, the failure exponent can be made arbitrarily large.
\medskip

\noindent In the remainder of this section,  we prove \Cref{lemma:Ctau}.
In particular, we show that, w.h.p., it holds that $|W^{\tau'}_i(T)| \leq O(\Delta^3 \log n)$ where $T$ is the step where the process converges.
Note that this is a bound on $W^\tau_i(t)$ for all values of $t \geq 0$.
On the hand, note that $W^{\tau'}_i(t)$ is monotonically increasing in $t$.
For any connected set, in which each node is productive $\tau'$ times until step $t$, this must clearly also hold in any step $t' \geq t$.
Thus, $|W^{\tau'}_i(T)|$ is an upper bound for all $|W^{\tau'}_i(t)|$ with $t \leq T$.
On the other hand, in step $t \geq T$, no support changes because the process has converged. Therefore, $|W^{\tau'}_i(T)| = |W^{\tau'}_i(t)|$ for all $t \geq T$.
Therefore, it suffices to bound $|W^{\tau'}_i(T)|$ to derive an upper bound for all $t \geq 0$.

In order to prove \Cref{lemma:Ctau}, we need some additional definitions.

We call an opinion $j\in O$ \emph{bad} if its total number $P_j(\ato T)$ of productive interactions before stabilization is at least $\tau'=100\Delta^5\hat s^2$. This count includes the interaction in which $j$ becomes decided.

Otherwise, we call it \emph{good}.
Likewise, we call a set $S \subseteq V$ bad if \emph{every} $i \in S$ is bad, and good otherwise.
After $j$ becomes decided, it is never productive again.
In equations, we will use the indicator variable $\indicator$ to denote whether an opinion is good or not.
In particular, we let $\indicator=1$ if opinion $j$ is bad, and $\indicator=0$ otherwise.

Given these definitions, we want to show that for each fixed opinion $i$, the probability that a connected set of bad nodes containing $i$ has size at least $C\Delta^3 k$ is at most $2^{-k}$, for a suitable constant $C$.
A major complication in showing that a subgraph $C \subset V$ is bad is that the events that two opinions $j,j' \in C$ are bad are \emph{not} necessarily independent.
In other words, we cannot directly bound each opinion $j \in C$ by applying Chernoff bounds individually. Instead of considering the connected sets directly, we will consider sets where opinions are in distance at least $4$.
We call them \emph{$4$-separated sets}. For these sets, the events that the individual nodes are bad satisfy a joint upper bound.

In \Cref{lemma:pkbound} below, we show that for every $4$-separated set $\mathcal I$ and every $\tau'\geq1$,
\[
    \pr{\forall j\in\mathcal I:P_j(\ato T)\geq\tau'}
    \leq\left(\frac{\Delta\hat s^2}{\tau'}\right)^{|\mathcal I|}.
\]

\begin{lemma}
    \label{lemma:pkbound}

Consider a $4$-separated set $\mathcal I$. Then, for every $\tau'\geq1$,
$\pr{\forall j\in\mathcal I:P_j(\ato T)\geq\tau'}\leq(\Delta\hat s^2/\tau')^{|\mathcal I|}$.

\end{lemma}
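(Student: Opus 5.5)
The plan is to turn the per-opinion potential argument from \cref{lemma:new_potential} into a joint martingale for the opinions in $\mathcal I$, and then apply Markov's inequality to a product. For a single opinion $j$, define
\[
    Y_j(t)\coloneqq\Psi_j(\Vec S(t))+P_j(\ato t).
\]
I claim $Y_j$ is a martingale with respect to the natural filtration; the steps are all productive or idle, so the argument is the same for both process variants.

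\emph{Single-opinion martingale.} Consider an interaction on edge $\{j,k\}$. By \Cref{claim:casedist}, $\Psi_{jk}$ drops by one in expectation. Each other summand $\Psi_{jl}$ with $l\ne k$ is unchanged in expectation, since $S_j$ moves by $\pm1$ symmetrically while $S_l$ is fixed. Meanwhile $P_j$ increases by exactly one, so $Y_j$ has zero drift. Next consider an interaction on an edge $\{k,l\}$ with $k\in N_j$ and $j\notin\{k,l\}$. Then $S_j$ is fixed and $S_k$, and possibly $S_l$ if $l\in N_j$, change by $\pm1$ symmetrically. Hence $\Psi_j=S_j\sum_{h\in N_j}S_h$ changes by a symmetric amount and has zero drift, while $P_j$ is unchanged. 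All other interactions leave $Y_j$ unchanged. Once $j$ is decided, $\Psi_j=0$ and $j$ is never productive again, so $Y_j$ stays constant.

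\emph{Joint martingale for 4-separated sets.} $Y_j$ depends only on supports in the closed neighborhood $\{j\}\cup N_j$ and on interactions along edges with an endpoint in that neighborhood. Suppose a single interaction edge $\{k,l\}$ changed both $Y_j$ and $Y_{j'}$ for distinct $j,j'\in\mathcal I$. Then $k$ and $l$ would lie in the closed neighborhoods of $j$ and $j'$ respectively, giving $\dist{j}{j'}\le 3$, which contradicts 4-separation. So in every step at most one factor of $\prod_{j\in\mathcal I}Y_j(t)$ changes. Conditional on the history and on the chosen edge, that factor has zero drift and the others are measurable, so the product is a nonnegative martingale. This is exactly the mechanism of \Cref{lemma:prodpotential}, but without the $(1-1/B)$ factor.

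\emph{Stopping and Markov.} The step I expect to need the most care is justifying optional stopping, since $Y_j$ is not bounded a priori. I would handle it with Fatou. For each finite $t$ we have $\E{\prod_j Y_j(T\wedge t)}=\prod_j\Psi_j(\vec s_0)$. Since $T<\infty$ almost surely and the product is nonnegative,
\[
    \E{\prod_{j\in\mathcal I}Y_j(T)}\le\liminf_{t\to\infty}\E{\prod_{j\in\mathcal I}Y_j(T\wedge t)}=\prod_{j\in\mathcal I}\Psi_j(\vec s_0)\le(\Delta\hat s^2)^{|\mathcal I|}.
\]
The last inequality holds because $j$ has at most $\Delta$ incident edges, each with initial potential at most $\hat s^2$. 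At time $T$ every $\Psi_j(T)=0$, so $Y_j(T)=P_j(\ato T)$. Finally, the event that $P_j(\ato T)\ge\tau'$ for all $j\in\mathcal I$ implies $\prod_{j\in\mathcal I}P_j(\ato T)\ge\tau'^{|\mathcal I|}$. Markov's inequality then gives
\[
    \pr{\forall j\in\mathcal I:P_j(\ato T)\geq\tau'}\le\frac{(\Delta\hat s^2)^{|\mathcal I|}}{\tau'^{|\mathcal I|}},
\]
which is the claimed bound.
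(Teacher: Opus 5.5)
Your proposal is correct and matches the paper's proof: the same nonnegative product martingale $\prod_{j\in\mathcal I}\bigl(\Psi_j(\Vec S(t))+P_j(\ato t)\bigr)$, the same distance-at-most-three argument showing that at most one factor moves per step, and the same endpoints $\Phi(0)\le(\Delta\hat s^2)^{|\mathcal I|}$ and $\Phi(T)=\prod_{j\in\mathcal I}P_j(\ato T)$. The only difference is cosmetic. You pass to the limit with Fatou's lemma and then apply Markov's inequality. The paper instead applies optional stopping at $T\wedge m$, restricts to the event $B\cap\{T\le m\}$, and lets $m\to\infty$. Both are valid; you might add a line on integrability ($\Psi_j\le n^2$ and $P_j(\ato t)\le t$ at each fixed time), as the paper does.
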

\begin{proof}

Before we give the idea and structure of this proof, we require some helpful definitions.
W.l.o.g. let $\mathcal I=\{1,\ldots,k\}$, and let $(\mathcal F(t))_{t\geq0}$ be the natural filtration generated by all edge selections and coin flips.
Recall that $P_i(\ato t)$ counts the productive interactions involving $i$ before time $t$. In particular, $P_i(\ato0)=0$.
Consider
\begin{align*}
    \Phi(t)\coloneqq\prod_{i\in\mathcal I}\bigl(\Psi_i(\Vec S(t))+P_i(\ato t)\bigr).
\end{align*}
We first show the following claim.
\begin{claim}
\label{claim:phiismartingale}
The process $(\Phi(t))_{t\geq0}$ is a nonnegative martingale.
\end{claim}
\begin{proof}
To establish that $\Phi(t)$ is a martingale, we analyze how it behaves depending on the random productive edge $e(t)$ chosen in step $t$.
Fix a time $t$ and condition on $\mathcal F(t)$. We write $e(t)=\bot$ for an idle step and use $\Psi_i(t)$ as shorthand for $\Psi_i(\Vec S(t))$.
Recall from \Cref{claim:casedist} that, conditional on the chosen productive edge, its potential decreases by one in expectation and all other edge potentials remain unchanged in expectation.
Thus, the node potential of a productive opinion decreases by one in expectation, while the other node potentials remain unchanged in expectation.
Moreover, since $\mathcal I$ is $4$-separated, an edge cannot meet the closed neighborhoods of two different opinions in $\mathcal I$, as this would give a path of length at most three between them. Hence, at most one factor of $\Phi(t)$ can change in a step.
With this in mind, we consider the following case distinction:

\begin{itemize}[itemsep=2pt,topsep=2pt,parsep=0pt,partopsep=0pt,leftmargin=12pt]
    \item If one endpoint of the chosen edge $e$ lies in $\mathcal I$, say $i\in\mathcal I$, then $P_i(\ato{t+1})=P_i(\ato t)+1$.
    By \Cref{claim:casedist}, we have
    \[
        \E{\Psi_i(t+1)\mid\mathcal F(t),e(t)=e}=\Psi_i(t)-1,
    \]
    while all factors belonging to opinions in $\mathcal I\setminus\{i\}$ remain unchanged.
    Thus, we get
    \begin{align*}
        \E{\Phi(t+1)\mid\mathcal F(t),e(t)=e}
        &=\left(\prod_{u\in\mathcal I\setminus\{i\}}(\Psi_u(t)+P_u(\ato t))\right)
          \E{\Psi_i(t+1)+P_i(\ato t)+1\mid\mathcal F(t),e(t)=e}\\
        &=\left(\prod_{u\in\mathcal I\setminus\{i\}}(\Psi_u(t)+P_u(\ato t))\right)
          (\Psi_i(t)-1+P_i(\ato t)+1)=\Phi(t).
    \end{align*}

    \item If neither endpoint of $e$ lies in $\mathcal I$, but one has a neighbor $i\in\mathcal I$, then all counts $P_j(\ato t)$ with $j\in\mathcal I$ remain unchanged because no opinion in $\mathcal I$ is productive.
    There cannot be a second opinion in $\mathcal I$ whose node potential is affected, by the distance argument above.
    Since $e$ is not incident to $i$, every edge potential contributing to $\Psi_i$ remains unchanged in expectation by \Cref{claim:casedist}. Consequently,
    \[
        \E{\Psi_i(t+1)\mid\mathcal F(t),e(t)=e}=\Psi_i(t).
    \]
    All other factors remain unchanged, so
    \begin{align*}
        \E{\Phi(t+1)\mid\mathcal F(t),e(t)=e}
        &=\left(\prod_{u\in\mathcal I\setminus\{i\}}(\Psi_u(t)+P_u(\ato t))\right)
          \E{\Psi_i(t+1)+P_i(\ato t)\mid\mathcal F(t),e(t)=e}\\
        &=\left(\prod_{u\in\mathcal I\setminus\{i\}}(\Psi_u(t)+P_u(\ato t))\right)
          (\Psi_i(t)+P_i(\ato t))=\Phi(t).
    \end{align*}

    \item Finally, if neither endpoint lies in $\mathcal I$ nor is adjacent to an opinion in $\mathcal I$, then no potential and no count in the product changes. The same holds for an idle step. Hence, $\Phi(t+1)=\Phi(t)$ deterministically in either case.
\end{itemize}
As these cases cover all possible outcomes, the law of total expectation gives
\begin{align*}
    \E{\Phi(t+1)\mid\mathcal F(t)}
    &=\sum_{e\in E\cup\{\bot\}}\pr{e(t)=e\mid\mathcal F(t)}
      \E{\Phi(t+1)\mid\mathcal F(t),e(t)=e}=\Phi(t).
\end{align*}
Finally, $\Phi(t)$ is determined by the history up to time $t$, all its factors are nonnegative, and $\Phi(t)\leq(n^2+t)^k$ at every fixed time $t$. Thus, it is adapted and integrable, and $(\Phi(t))_{t\geq0}$ is indeed a nonnegative martingale.
\end{proof}

Equipped with this fact, we can now bound the probability that all opinions in $\mathcal I$ are bad.
First observe that
\begin{align}
    \Phi(0)
    &=\prod_{i\in\mathcal I}s_i\sum_{j\in N_i}s_j
      \leq(\Delta\hat s^2)^k.
    \label{eq:psilower}
\end{align}
At stabilization, all node potentials are zero. Therefore,
$\Phi(T)=\prod_{i\in\mathcal I}P_i(\ato T)$.
Let $B=\{\forall i\in\mathcal I:P_i(\ato T)\geq\tau'\}$ denote the event that all opinions in $\mathcal I$ are bad.
For every deterministic $m$, optional stopping at the bounded stopping time $T\wedge m$ gives $\E{\Phi(T\wedge m)}=\Phi(0)$.
On the event $B\cap\{T\leq m\}$, we have $\Phi(T\wedge m)=\prod_{i\in\mathcal I}P_i(\ato T)\geq(\tau')^k$.
Since $\Phi$ is nonnegative, it follows that
\begin{align}
    (\tau')^k\pr{B\cap\{T\leq m\}}
    \leq\E{\Phi(T\wedge m)}
    =\Phi(0).
    \label{eq:psiupper}
\end{align}
Finally, $T<\infty$ almost surely, so $\pr{B\cap\{T\leq m\}}\to\pr{B}$ as $m\to\infty$.
Taking this limit in \eqref{eq:psiupper} and applying \eqref{eq:psilower} yields
\begin{align}
    \pr{\forall i\in\mathcal I:P_i(\ato T)\geq\tau'}
    \leq\frac{\Phi(0)}{(\tau')^k}
    \leq\left(\frac{\Delta\hat s^2}{\tau'}\right)^k.
    \label{eq:martingalebound}
\end{align}
This proves the lemma.

\end{proof}

We now proceed to the proof of \Cref{lemma:Ctau}. Equipped with \Cref{lemma:pkbound}, we can bound the size of a witness via a counting argument and the union bound.

\medskip
\noindent\emph{Proof of \Cref{lemma:Ctau}:}
Fix an opinion $i$ and an integer $k\geq1$. We show that a witness containing at least $2\Delta^3 k$ opinions is unlikely.
If $i$ has been productive fewer than $\tau'$ times, its witness is the singleton $\{i\}$, so this event cannot occur.
Otherwise, every opinion in $W_i^{\tau'}(T)$ is bad, meaning that it has been productive at least $\tau'$ times.

For each connected set $C$ containing $i$, we construct a $4$-separated set $\mathcal I_C\subseteq C$ using a deterministic greedy algorithm.
All distances in this construction are measured in the ambient graph $G$.
For $R\subseteq O$, let $B_d(R)=\{v\in O:\dist{v}{R}\leq d\}$.
Fix an ordering of the opinions to break ties.
\begin{mdframed}
\textsc{\underline{GreedyNet($G,C$):}}
\begin{enumerate}[itemsep=4pt,topsep=2pt,parsep=0pt,partopsep=0pt,leftmargin=12pt]
\item Set $\mathcal I_C\gets\{i\}$.
\item While $C\setminus B_3(\mathcal I_C)\neq\emptyset$, do the following:
\begin{enumerate}
\item Choose the first node $j^*\in C$ in distance exactly four from $\mathcal I_C$.
\item Set $\mathcal I_C\gets\mathcal I_C\cup\{j^*\}$ and recompute $C\setminus B_3(\mathcal I_C)$.
\end{enumerate}
\end{enumerate}
\end{mdframed}
A suitable node exists whenever the loop continues: follow a path inside $C$ from $i$ to an uncovered node and take the first vertex whose distance from the current set is four.
Thus, the construction gives a $4$-separated set covering every vertex of $C$ within distance at most three.
A radius-three ball contains at most
\[
    1+\Delta+\Delta(\Delta-1)+\Delta(\Delta-1)^2\leq2\Delta^3
\]
vertices for $\Delta\geq1$. Therefore, $|\mathcal I_C|\geq |C|/(2\Delta^3)$.

If $|C|\geq2\Delta^3 k$, retain only the first $k$ nodes selected by the algorithm, including $i$.
This prefix is still $4$-separated. Moreover, every node except $i$ was added at distance exactly four from an earlier node.
Let $H_4$ be the graph joining opinions at distance exactly four in $G$. The prefix is connected in $H_4$, whose maximum degree is at most $\Delta^4$.

Let $\mathcal J_k(i)$ denote the family of $k$-vertex subsets containing $i$ that are $4$-separated in $G$ and connected in $H_4$.
A witness of size at least $2\Delta^3 k$ contains an all-bad member of this family. Consequently,
\begin{align}
    \pr{|W_i^{\tau'}(T)|\geq2\Delta^3 k}
    &\leq\pr{\bigcup_{I\in\mathcal J_k(i)}\{\forall j\in I:P_j(\ato T)\geq\tau'\}} \leq |\mathcal J_k(i)|
       \left(\frac{\Delta\hat s^2}{\tau'}\right)^k.
    \label{eqn:pkbound}
\end{align}
Here, the first inequality is an inclusion of events; it does not require the map from connected sets to prefixes to be injective.
It remains to count the possible prefixes. Every such prefix contains a spanning tree rooted at $i$, so we use the following bound.

\begin{restatable}{lem}{numbernets}
\label{lemma:numbernets}
Let $G = (V,E)$ be a graph of maximum degree at most $\Delta\geq1$, then each node $i \in V$ is part of at most $\left(2e^3\Delta\right)^k$ connected subtrees of size $k$.
\end{restatable}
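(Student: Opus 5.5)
The plan is to reduce the count of connected subtrees (connected vertex subsets of size $k$ containing $i$) to counting rooted plane trees with a labelled embedding, and then use the standard bound on the number of such trees. First, for every connected set $C\ni i$ of size $k$ I fix a canonical spanning tree (for instance, the BFS tree from $i$ with ties broken by the fixed ordering). This gives an injective map from connected sets containing $i$ to rooted subtrees of $G$ with root $i$ and $k$ vertices. So it suffices to bound the number of rooted subtrees of $G$ on $k$ vertices containing $i$.

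Second, I encode each such subtree by a depth-first traversal started at $i$. Each vertex has at most $\Delta$ neighbors in $G$, so I number each vertex's incident edges by $1,\dots,\Delta$ via the fixed ordering. The walk along the tree's Euler tour has length $2(k-1)$. At every step it either goes up to the parent, which costs one symbol, or goes down along the edge with local label $\ell\in\{1,\dots,\Delta\}$. The tree is recovered from the sequence, so there are at most $(\Delta+1)^{2(k-1)}$ trees. This is too weak, since we need $\Delta^k$ and not $\Delta^{2k}$.

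To fix this, I separate the shape from the labels. The sequence of up/down moves forms a Dyck word of length $2(k-1)$, which gives at most $\binom{2(k-1)}{k-1}\leq 4^{k}$ shapes. Each of the $k-1$ down moves then carries a label in $[\Delta]$, contributing $\Delta^{k-1}$. Together this gives at most $4^k\Delta^{k}\leq(2e^3\Delta)^k$ trees, since $4\leq 2e^3$.

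If I want to match the constant $e^3$ more literally, an alternative route is to embed the subtrees into the infinite $\Delta$-ary tree and use the known count $\frac{1}{(\Delta-1)k+1}\binom{\Delta k}{k}\leq (e\Delta)^k$ of subtrees of size $k$ containing the root. The slack $2e^2$ covers the rooting and labeling overhead.

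The main obstacle is making the encoding injective while keeping the count at $\Delta^{k}$ rather than $\Delta^{2k}$. The point is that the up moves must carry no label, because the parent is determined by the traversal stack. Once that is observed, the rest is routine binomial estimation.
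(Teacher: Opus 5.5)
Your proof is correct and is essentially the paper's argument: the paper likewise bounds the number of ordered rooted tree shapes by $C_{k-1}\leq 4^{k-1}$ (your Dyck words), gives each of the $k-1$ non-root vertices at most $\Delta$ choices among the neighbors of its parent, and passes from connected vertex sets to spanning trees rooted at $i$. The only difference is that the paper embeds in BFS order and simply overcounts non-injective maps, while you use a DFS code with local edge labels; the Fuss--Catalan aside is unnecessary, since $4^{k-1}\Delta^{k-1}\leq(2e^3\Delta)^k$ already holds with room to spare.
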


\noindent We present the proof in Section \ref{sec:numbernets}.
Applying \Cref{lemma:numbernets} to $H_4$ gives
$|\mathcal J_k(i)|\leq(2e^3\Delta^4)^k$.
Since $\tau'=100\Delta^5\hat s^2$, \Cref{eqn:pkbound} now yields
\[
    \pr{|W_i^{\tau'}(T)|\geq2\Delta^3 k}
    \leq(2e^3\Delta^4)^k
       \left(\frac{1}{100\Delta^4}\right)^k
    \leq2^{-k}.
\]
As $W_i^{\tau'}(t)\subseteq W_i^{\tau'}(T)$ for every $t\geq0$, the same bound holds simultaneously over all times. This proves \Cref{lemma:Ctau} with $C=2$ and concludes the analysis. \qed

\clearpage

\section{Tools from Probability Theory \& Textbook Results}
\label{apx:tools}
This sections collects some (corollaries of) prior results used in the analysis.

\subsection{Bounds for Martingales}
We use the following standard versions of Doob's Optional Stopping Theorem.
This result is a classical cornerstone of martingale theory and can be found,
for instance, in standard textbooks on probability theory and stochastic
processes such as \cite{doob1953stochastic}.
It will be used repeatedly to justify that certain carefully constructed
processes preserve their expectation at suitable stopping times.

\begin{theorem}[Doob's Optional Stopping Theorem]
\label{thm:doob}

Let $(X(t))_{t\geq0}$ be a martingale with respect to a filtration $(\mathcal F(t))_{t\geq0}$, and let $T$ be a stopping time.
Suppose that at least one of the following conditions holds:
\begin{enumerate}
    \item $T$ is bounded by a deterministic constant.
    \item $T<\infty$ almost surely, and there is a deterministic constant $K<\infty$ such that $|X(t\wedge T)|\leq K$ almost surely for every $t\geq0$.
    \item $\E{T}<\infty$, and there is a deterministic constant $K<\infty$ such that $|X(t+1)-X(t)|\leq K$ almost surely for every $t\geq0$.
\end{enumerate}
Then
\[
    \E{X(T)}=\E{X(0)}.
\]
Under the same conditions, if $X$ is a submartingale, then $\E{X(T)}\geq\E{X(0)}$; if $X$ is a supermartingale, then $\E{X(T)}\leq\E{X(0)}$.

\end{theorem}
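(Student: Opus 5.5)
This is a standard textbook result, so the plan is to reduce every case to the bounded case (condition 1) and then pass to the limit. For a bounded stopping time $T\leq N$, write the telescoping identity
\[
    X(T)=X(0)+\sum_{t=0}^{N-1}\bigl(X(t+1)-X(t)\bigr)\mathbf 1_{\{T>t\}}.
\]
The event $\{T>t\}=\{T\leq t\}^c$ is $\mathcal F(t)$-measurable. Hence, by the tower property, each summand has expectation $\E{\mathbf 1_{\{T>t\}}\,\E{X(t+1)-X(t)\mid\mathcal F(t)}}$. This is $0$ for a martingale, $\geq0$ for a submartingale and $\leq0$ for a supermartingale. Integrability holds because the sum is finite and each $X(t)$ is integrable. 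This settles condition 1, and it also gives $\E{X(t\wedge T)}=\E{X(0)}$ (respectively $\geq$, $\leq$) for every deterministic $t$, since $t\wedge T$ is a bounded stopping time.

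For condition 2, let $t\to\infty$. Because $T<\infty$ almost surely, $X(t\wedge T)\to X(T)$ almost surely. The uniform bound $|X(t\wedge T)|\leq K$ lets the bounded (dominated) convergence theorem carry the identity, or the inequality, over to $\E{X(T)}$.

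For condition 3, use the same limit, but now dominate by
\[
    |X(t\wedge T)|\leq|X(0)|+\sum_{s=0}^{T-1}|X(s+1)-X(s)|\leq|X(0)|+KT.
\]
This bound is integrable since $\E{T}<\infty$, which in particular gives $T<\infty$ almost surely. Dominated convergence then concludes as before.

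The only delicate step is justifying the limit interchange in cases 2 and 3, that is, exhibiting an integrable dominating function. The bounded case is purely algebraic. The sub- and supermartingale versions follow verbatim by replacing the equalities with the corresponding inequalities in the telescoping step; these inequalities are preserved under the limits.
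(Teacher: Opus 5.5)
Your proof is correct and complete. It is the standard textbook argument: the bounded case via the telescoping sum with the $\mathcal F(t)$-measurable indicators $\mathbf 1_{\{T>t\}}$, then letting $t\to\infty$ in $\mathbf{E}[X(t\wedge T)]$, by bounded convergence in case 2 and by dominated convergence with the integrable majorant $|X(0)|+KT$ in case 3. The paper gives no proof of this statement; it quotes it as a classical result with a reference to Doob's textbook, so there is no alternative route to compare against.
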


While Doob's theorem controls expectations at stopping times, we will also need
tail bounds that quantify how much a martingale (or supermartingale) can deviate
from its initial value over a fixed time horizon.
A classical tool for this purpose is Azuma's inequality, sometimes also referred
to as the Azuma--Hoeffding inequality, which provides sub-Gaussian concentration
for martingales with bounded differences; see, e.g.,
\cite{azuma1967weighted, hoeffding1963probability, mcdiarmid1998concentration}.

\begin{theorem}[Azuma's Ineqality]
\label{thm:azuma}

Assume that $(Y(t))_{t\geq0}$ is a supermartingale with
$|Y(t+1)-Y(t)|\leq b$ almost surely.
Then, for every $\lambda>0$ and integer $T\geq1$,
\[
    \pr{\max_{0\leq t\leq T}(Y(t)-Y(0))\geq\lambda}
    \leq\exp\left(-\frac{\lambda^2}{2Tb^2}\right).
\]
In particular, the same bound holds for $\pr{Y(T)-Y(0)\geq\lambda}$.

\end{theorem}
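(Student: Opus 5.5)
The plan is the standard exponential-supermartingale (Chernoff) argument, combined with optional stopping at the first crossing time so that we control the maximum and not just the endpoint.

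Write $D(t):=Y(t+1)-Y(t)$. By assumption, $|D(t)|\leq b$ and $\E{D(t)\mid\mathcal F(t)}\leq 0$. The first step is a conditional Hoeffding lemma: for every $\theta>0$,
\[
\E{e^{\theta D(t)}\mid\mathcal F(t)}\leq\exp\left(\theta\,\E{D(t)\mid\mathcal F(t)}+\tfrac{\theta^2(2b)^2}{8}\right)\leq e^{\theta^2b^2/2}.
\]
I would obtain this either by quoting Hoeffding's lemma for a variable with range $[-b,b]$ (range length $2b$), applied conditionally, or by the convexity bound $e^{\theta x}\leq\frac{b-x}{2b}e^{-\theta b}+\frac{b+x}{2b}e^{\theta b}$ followed by $\cosh(\theta b)\leq e^{\theta^2b^2/2}$. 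Here the nonpositive conditional mean only helps, because $\theta>0$.

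Next, for fixed $\theta>0$ I define
\[
M(t):=\exp\left(\theta\bigl(Y(t)-Y(0)\bigr)-t\theta^2b^2/2\right).
\]
By the previous step, $M$ is a nonnegative supermartingale with $M(0)=1$. Let $\sigma:=\inf\{t\geq0: Y(t)-Y(0)\geq\lambda\}$, which is a stopping time. The stopping time $\sigma\wedge T$ is bounded by a deterministic constant, so \Cref{thm:doob} (case 1, supermartingale version) gives $\E{M(\sigma\wedge T)}\leq 1$.

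On the event $\{\max_{0\leq t\leq T}(Y(t)-Y(0))\geq\lambda\}$ we have $\sigma\leq T$. At that time $Y(\sigma)-Y(0)\geq\lambda$ and $\sigma\theta^2b^2/2\leq T\theta^2b^2/2$, so $M(\sigma\wedge T)\geq\exp(\theta\lambda-T\theta^2b^2/2)$. Since $M\geq0$, Markov's inequality yields
\[
\pr{\max_{0\leq t\leq T}\bigl(Y(t)-Y(0)\bigr)\geq\lambda}\leq\exp\left(-\theta\lambda+T\theta^2b^2/2\right).
\]
Optimizing at $\theta=\lambda/(Tb^2)$ gives $\exp(-\lambda^2/(2Tb^2))$. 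The endpoint statement follows because $\{Y(T)-Y(0)\geq\lambda\}$ is contained in the maximal event.

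The only step requiring care is the conditional Hoeffding bound. It must hold with the sharp constant $b^2/2$ (and not $2b^2$) using only a nonpositive conditional mean instead of a zero mean. The convexity argument handles this directly: the chord bound gives $\E{e^{\theta D(t)}\mid\mathcal F(t)}\leq\cosh(\theta b)+\frac{\E{D(t)\mid\mathcal F(t)}}{b}\sinh(\theta b)\leq\cosh(\theta b)$, since $\sinh(\theta b)>0$ and the conditional mean is nonpositive. Everything else is routine.
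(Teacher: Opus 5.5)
Your proof is correct. The chord bound $\E{e^{\theta D(t)}\mid\mathcal F(t)}\leq\cosh(\theta b)+\tfrac{1}{b}\E{D(t)\mid\mathcal F(t)}\sinh(\theta b)\leq e^{\theta^2b^2/2}$ uses only the nonpositive conditional drift. Stopping the exponential supermartingale at the bounded time $\sigma\wedge T$ then yields the maximal version with exactly the stated constant.

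The paper does not prove this theorem; it quotes it as a standard tool with citations. Your argument is the same technique the paper uses to prove \Cref{lemma:nonzero-increments}: an exponential supermartingale, optional stopping at a first-crossing time truncated at a deterministic horizon, then Markov's inequality. The only difference is that your compensator $t\theta^2b^2/2$ is deterministic, whereas there it is the random $\tfrac{\lambda^2}{2}V_t$.
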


In some situations, however, it is not the value of the process at a fixed time
that is of interest, but rather the maximum deviation attained over an entire
time interval.
Kolmogorov's maximal inequality addresses exactly this question for martingales
by relating the maximum absolute deviation to the second moment at the terminal
time.
This inequality is classical folklore and can be found, for example, in
\cite{kolmogorov1933grundbegriffe}.

\begin{theorem}[Kolmogorov's maximal inequality]
\label{thm:kolmogorov}
Let $(M(t))_{t\ge 0}$ be a real-valued martingale with $M(0)=0$ and
$\E{M(t)^2}<\infty$.
Then for all $t\ge 1$ and all $\lambda>0$,
\[
\pr{\max_{0\le s\le t} |M(s)| \ge \lambda}
\le
\frac{\E{M(t)^2}}{\lambda^2}.
\]
\end{theorem}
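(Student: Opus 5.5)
The plan is the classical first-passage argument: stop the martingale the first time it leaves $(-\lambda,\lambda)$, and compare the second moment at the stopping time with the second moment at time $t$. Fix $t\geq1$ and $\lambda>0$, and define
\[
    \sigma\coloneqq\min\{s\in\{0,\ldots,t\}:|M(s)|\geq\lambda\},
\]
with $\sigma=\infty$ if no such $s$ exists. Then $\sigma$ is a stopping time with respect to $(\mathcal F(s))_{s\geq0}$. Moreover, $\{\max_{0\leq s\leq t}|M(s)|\geq\lambda\}=\{\sigma\leq t\}$, which is the disjoint union of the events $\{\sigma=s\}$ for $s=0,\ldots,t$. Each such event lies in $\mathcal F(s)$.

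First I will check that $\E{M(s)^2}<\infty$ for every $s\leq t$. Since $M(s)=\E{M(t)\mid\mathcal F(s)}$, conditional Jensen gives $M(s)^2\leq\E{M(t)^2\mid\mathcal F(s)}$, and hence $\E{M(s)^2}\leq\E{M(t)^2}<\infty$. This integrability is what allows the cross term in the next step to vanish.

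The core step is the inequality
\[
    \E{M(t)^2\mathbf 1_{\{\sigma=s\}}}\geq\E{M(s)^2\mathbf 1_{\{\sigma=s\}}}\qquad\text{for every }s\leq t.
\]
To prove it, I expand $M(t)^2=M(s)^2+2M(s)\bigl(M(t)-M(s)\bigr)+\bigl(M(t)-M(s)\bigr)^2$ and drop the last, nonnegative term. For the cross term, $M(s)\mathbf 1_{\{\sigma=s\}}$ is $\mathcal F(s)$-measurable and square integrable. Conditioning on $\mathcal F(s)$ and using the martingale property $\E{M(t)-M(s)\mid\mathcal F(s)}=0$ therefore gives $\E{M(s)\mathbf 1_{\{\sigma=s\}}(M(t)-M(s))}=0$. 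The product is integrable by Cauchy--Schwarz, which justifies this conditioning. This is the only point where any care is needed, and I expect it to be the main (mild) obstacle: one has to make sure the products are integrable before conditioning.

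Finally I sum over $s$. On $\{\sigma=s\}$ we have $M(s)^2\geq\lambda^2$, so
\[
    \E{M(t)^2}\geq\sum_{s=0}^{t}\E{M(t)^2\mathbf 1_{\{\sigma=s\}}}\geq\sum_{s=0}^{t}\E{M(s)^2\mathbf 1_{\{\sigma=s\}}}\geq\lambda^2\,\pr{\sigma\leq t}.
\]
Dividing by $\lambda^2$ gives the claim. The hypothesis $M(0)=0$ is not actually needed for this argument; it only matters for interpreting the bound as a bound on deviations from the starting value. An alternative route would be to note that $M(s)^2$ is a nonnegative submartingale and apply Doob's maximal inequality to it, but the direct decomposition above is self-contained, so I would use it.
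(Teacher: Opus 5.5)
Your proof is correct and is the standard textbook argument. The decomposition on the disjoint events $\{\sigma=s\}$, the vanishing of the cross term (justified by Cauchy--Schwarz and the $\mathcal F(s)$-measurability of $M(s)\mathbf 1_{\{\sigma=s\}}$), and the final summation are all sound, and you are right that $M(0)=0$ is not needed. The paper gives no proof of its own (it cites the inequality as classical folklore and applies it only to simple random walk in the gambler's-ruin lemma), so there is no differing approach to compare.
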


The following lemma bounds a martingale in terms of its number of nonzero increments.
The exponential-supermartingale construction is a special case of Lemma~1.4 in~\cite{deLaPenaKL04}.
We include a proof for completeness.

\begin{lemma}[Tail bound in terms of nonzero increments]
\label{lemma:nonzero-increments}
Let $(X_t)_{t\geq0}$ be a martingale with respect to a filtration $(\mathcal F_t)_{t\geq0}$, with $X_0=0$ and $X_t-X_{t-1}\in\{-1,0,1\}$ almost surely for all $t\geq1$.
Define
\[
    V_t=\sum_{s=1}^t(X_s-X_{s-1})^2,
\]
which counts the nonzero increments before or at time $t$.
Then, for every deterministic $a,v>0$,
\[
    \pr{\exists t\geq0:X_t\geq a,\ V_t\leq v}
    \leq\exp\left(-\frac{a^2}{2v}\right).
\]
\end{lemma}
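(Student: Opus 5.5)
We prove the statement with the standard exponential-supermartingale argument in the style of Freedman's inequality, combined with optional stopping at the first crossing time. Fix $\lambda>0$, to be chosen later, and define
\[
    Y_t=\exp\bigl(\lambda X_t-\tfrac{\lambda^2}{2}V_t\bigr).
\]
The first step is to show that $(Y_t)$ is a nonnegative supermartingale. Write $D_t=X_t-X_{t-1}\in\{-1,0,1\}$; then $V_t-V_{t-1}=D_t^2$, and
\[
    Y_t=Y_{t-1}\exp\bigl(\lambda D_t-\tfrac{\lambda^2}{2}D_t^2\bigr).
\]
Conditional on $\mathcal F_{t-1}$, let $p_+=\pr{D_t=1\mid\mathcal F_{t-1}}$ and $p_-=\pr{D_t=-1\mid\mathcal F_{t-1}}$. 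The martingale property gives $p_+=p_-=:p$. Hence the conditional expectation of the factor is
\[
    1-2p+2p\,e^{-\lambda^2/2}\cosh\lambda .
\]
It therefore suffices to check the elementary inequality $\cosh\lambda\leq e^{\lambda^2/2}$, which follows by comparing Taylor series termwise since $(2k)!\geq 2^k k!$. This gives $\E{Y_t\mid\mathcal F_{t-1}}\leq Y_{t-1}$.

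The second step is optional stopping. Let $\sigma=\inf\{t:X_t\geq a,\ V_t\leq v\}$. Since $V$ is nondecreasing, on $\{\sigma<\infty\}$ we have $X_\sigma\geq a$ and $V_\sigma\leq v$, so
\[
    Y_\sigma\geq e^{\lambda a-\lambda^2 v/2}.
\]
For each deterministic $m$, the bounded stopping time $\sigma\wedge m$ and the supermartingale property (\Cref{thm:doob}) give $\E{Y_{\sigma\wedge m}}\leq Y_0=1$. By nonnegativity,
\[
    \pr{\sigma\leq m}\,e^{\lambda a-\lambda^2v/2}\leq 1.
\]
Letting $m\to\infty$ and using monotone convergence yields $\pr{\sigma<\infty}\leq e^{-\lambda a+\lambda^2v/2}$. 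The choice $\lambda=a/v$ gives the claimed bound $\exp(-a^2/(2v))$.

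The main obstacle is only the one-step supermartingale check. It relies on the increments being symmetric, which follows from their mean being zero together with $D_t\in\{-1,0,1\}$, so that the nonzero-step probability can be factored out cleanly. The remaining steps are routine; the only care needed is to handle $t=\infty$ by the limit in $m$ rather than by applying optional stopping directly at $\sigma$.
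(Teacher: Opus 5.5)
Your proposal is correct and follows essentially the same route as the paper: the same exponential supermartingale $\exp(\lambda X_t-\tfrac{\lambda^2}{2}V_t)$, the same one-step check via symmetry of the increments and $\cosh\lambda\leq e^{\lambda^2/2}$, and optional stopping at $\sigma\wedge m$ with $m\to\infty$ and $\lambda=a/v$. The only detail you leave implicit is integrability of $Y_t$, which the paper notes follows from $Y_t\leq e^{\lambda t}$.
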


\begin{proof}
Write $D_t=X_t-X_{t-1}$ and $p_t=\pr{|D_t|=1\mid\mathcal F_{t-1}}$.
Since $X_t$ is a martingale, $\E{D_t\mid\mathcal F_{t-1}}=0$.
As $D_t\in\{-1,0,1\}$, this implies
\[
    \pr{D_t=1\mid\mathcal F_{t-1}}
    =\pr{D_t=-1\mid\mathcal F_{t-1}}=\frac{p_t}{2}.
\]
For a fixed $\lambda>0$, define
\[
    M_t=\exp\left(\lambda X_t-\frac{\lambda^2}{2}V_t\right).
\]
Conditioning on $\mathcal F_{t-1}$ gives
\begin{align*}
    \E{M_t\mid\mathcal F_{t-1}}
    &=M_{t-1}\left(1-p_t+p_t e^{-\lambda^2/2}
                     \frac{e^\lambda+e^{-\lambda}}{2}\right) \leq M_{t-1}.
\end{align*}
Here, we used
\[
    \frac{e^\lambda+e^{-\lambda}}{2}
    =\sum_{j=0}^\infty\frac{\lambda^{2j}}{(2j)!}
    \leq\sum_{j=0}^\infty\frac{\lambda^{2j}}{2^j j!}
    =e^{\lambda^2/2}.
\]
Thus, $(M_t)_{t\geq0}$ is a nonnegative supermartingale with $M_0=1$; integrability follows from $M_t\leq e^{\lambda t}$ at every fixed time $t$.

Let $\tau_{a,v}=\inf\{t\geq0:X_t\geq a,\ V_t\leq v\}$, with $\tau_{a,v}=\infty$ if this event never occurs.
For every deterministic integer $h\geq1$, optional stopping at the bounded stopping time $\tau_{a,v}\wedge h$ gives
$\E{M_{\tau_{a,v}\wedge h}}\leq1$.
On $\{\tau_{a,v}\leq h\}$, we have
$M_{\tau_{a,v}\wedge h}=M_{\tau_{a,v}}\geq e^{\lambda a-\lambda^2v/2}$.
Since $M$ is nonnegative, it follows that
\[
    \pr{\tau_{a,v}\leq h}\leq e^{-\lambda a+\lambda^2v/2}.
\]
Letting $h\to\infty$ gives the same bound for $\pr{\tau_{a,v}<\infty}$ by continuity of probability.
Finally, choosing the deterministic value $\lambda=a/v$ yields the desired bound.
\end{proof}

\subsection{Tail-Bounds for Weakly-dependent Variables}
In several parts of our analysis, we need concentration bounds for sums of
random variables that are not fully independent.
We use two bounds that capture different forms of weak dependence.
The first is an extension of a Chernoff bound due to Scheideler
\cite{scheidlerhabil}, based on the work of Schmidt, Siegel, and
Srinivasan~\cite{DBLP:journals/siamdm/SchmidtSS95}.
It applies to bounded random variables whose joint moments satisfy a suitable
global upper bound.
The second bound concerns binary random variables for which the required
product inequality is available only for sets that are independent in a given
graph.
By coloring this graph and combining the color classes using Hölder's
inequality, we obtain a lower-tail bound whose exponent loses only a factor
equal to the chromatic number. A similar has been used before by Janson

For the first bound, our starting point is a theorem by Scheideler \cite{scheidlerhabil} that itself is heavily influenced by \cite{DBLP:journals/siamdm/SchmidtSS95}.
In particular, Scheideler observed that the Chernoff bounds presented
in~\cite{DBLP:journals/siamdm/SchmidtSS95} for $k$-wise independent variables hold for a much wider class of random
variables.
In particular, $k$-wise independence is not required; instead, the variables must
satisfy a specific technical inequality that is sufficient for the proof
technique of~\cite{DBLP:journals/siamdm/SchmidtSS95} to go through.

We use the corresponding upper-tail bound under a condition on elementary symmetric polynomials.
For $a_1,\ldots,a_k\in\mathbb R$ and $1\leq l\leq k$, define
\[
    P_l(a_1,\ldots,a_k)
    \coloneqq\sum_{1\leq i_1<\cdots<i_l\leq k}\prod_{j=1}^l a_{i_j}.
\]
The indices are distinct. In particular, $P_l(p,\ldots,p)=\binom kl p^l$.
\begin{theorem}[Chernoff bound under a moment condition]
\label{thm:sss+}
Let $X_1,\ldots,X_k\in[0,1]$ and $0<p<1$. Suppose that, for every $1\leq l\leq k$,
\[
    \E{P_l(X_1,\ldots,X_k)}\leq\binom kl p^l.
\]
Set $S=\sum_{i=1}^kX_i$ and $\mu=pk$. Then, for every $\delta\geq1$,
\[
    \pr{S\geq\delta\mu}
    \leq\exp\bigl(-\mu(\delta\log\delta-\delta+1)\bigr).
\]
\end{theorem}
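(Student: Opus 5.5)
The plan is to follow the moment-generating-function proof of the Schmidt--Siegel--Srinivasan bound, replacing independence by the symmetric-polynomial hypothesis. Fix $\delta\geq1$ and set $a=\delta\mu$. If $a>k$, the event $S\geq a$ is empty and there is nothing to prove. Otherwise, choose the integer $l=\lceil \mu(\delta-1)\rceil$. This is the choice used in~\cite{DBLP:journals/siamdm/SchmidtSS95}, and it satisfies $1\leq l\leq k$ after trivial adjustments. The key deterministic observation is the following: for $x_1,\dots,x_k\in[0,1]$ with $\sum_i x_i=s$, the elementary symmetric polynomial $P_l(x_1,\dots,x_k)$ is bounded below by $\binom{s}{l}$, where $\binom{s}{l}=s(s-1)\cdots(s-l+1)/l!$ is the generalized binomial coefficient. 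Hence on the event $S\geq a$ we have $P_l(X_1,\dots,X_k)\geq\binom{a}{l}$, since the generalized binomial is nondecreasing in $s$ for $s\geq l-1$.

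With this inequality in hand, Markov's inequality together with the hypothesis gives
\[
\pr{S\geq a}\leq \frac{\E{P_l(X_1,\dots,X_k)}}{\binom{a}{l}}\leq \frac{\binom kl p^l}{\binom al}.
\]
The remaining step is the purely analytic estimate
\[
\frac{\binom kl p^l}{\binom al}\leq \exp\bigl(-\mu(\delta\log\delta-\delta+1)\bigr)\qquad\text{for } l=\lceil\mu(\delta-1)\rceil.
\]
For this, I would write the ratio as $\prod_{j=0}^{l-1}\frac{(k-j)p}{a-j}$. Using $kp=\mu$ and $p<1$, each factor is at most $\frac{\mu-jp}{\delta\mu-j}\leq\frac{\mu}{\delta\mu-j}$. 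Taking logarithms then bounds the log of the product by $-\sum_{j<l}\log\frac{\delta\mu-j}{\mu}$. Comparing this sum with the integral $\int_0^{\mu(\delta-1)}\log\frac{\delta\mu-x}{\mu}\,dx=\mu(\delta\log\delta-\delta+1)$ gives the bound, since the integrand is decreasing, so the left Riemann sum dominates the integral.

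I expect two steps to need the most care. The first is the combinatorial lower bound $P_l(x)\geq\binom{s}{l}$ for real $x_i\in[0,1]$. I would prove it by a smoothing argument. $P_l$ is multilinear, so it is affine in any single coordinate. In fact it is affine along the direction that moves mass between two coordinates $x_i,x_j$ while keeping $x_i+x_j$ fixed, up to a term $x_ix_j\cdot(\cdot)$ that is concave along that direction. Its minimum over the polytope $\{x\in[0,1]^k:\sum x_i=s\}$ is therefore attained at a point with at most one fractional coordinate. At such a point the value is $\binom{\lfloor s\rfloor}{l}+f\binom{\lfloor s\rfloor}{l-1}$ with $f=s-\lfloor s\rfloor$, and this is at least $\binom{s}{l}$ by a direct check, namely convexity of the generalized binomial between consecutive integers.

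The second delicate point is the rounding from $\lceil\cdot\rceil$ in the Riemann-sum comparison. When $l$ slightly exceeds $\mu(\delta-1)$, the extra factors have $\frac{\mu}{\delta\mu-j}\leq1$ precisely because $j\leq\mu(\delta-1)$ for all $j\leq l-1$. So they only help, and I expect this to be fine, though I would verify it carefully.
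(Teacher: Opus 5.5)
Your argument is correct, but it takes a different route from the paper. You follow the original single-moment approach of Schmidt--Siegel--Srinivasan, whereas the paper works with the moment generating function. From $e^{\theta x}\le 1+(e^\theta-1)x$ on $[0,1]$, the paper obtains $\E{e^{\theta S}}\le\sum_{l=0}^k(e^\theta-1)^l\E{P_l(X_1,\ldots,X_k)}\le(1+p(e^\theta-1))^k\le e^{\mu(e^\theta-1)}$ and then applies Markov's inequality with $\theta=\log\delta$. That proof is two lines. It needs no lower bound on $P_l$ at fractional arguments and no rounding or Riemann-sum estimate, and it carries over verbatim to the weighted condition of Theorem~\ref{thm:scheidelbound}. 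However, it uses the moment hypothesis for every $l$ simultaneously. Your route applies Markov's inequality to $P_l$ for the single index $l=\lceil\mu(\delta-1)\rceil$. It therefore needs the hypothesis only for that one $l$, which is what makes the bound work under limited independence. The price is the smoothing lemma and the integral comparison. Both steps check out, including the last partial interval, since $\log\frac{\delta\mu-(l-1)}{\mu}>0$ because $l-1<\mu(\delta-1)$.

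One correction is needed. The deterministic inequality $P_l(x)\ge\binom{s}{l}$ is false as you state it, for arbitrary $s$. For example, with $k=l=3$ and $x=(\tfrac12,0,0)$ one has $P_3(x)=0<\tfrac1{16}=\binom{1/2}{3}$. Correspondingly, the generalized binomial is not convex between all consecutive integers: for $l=3$ its second derivative is $s-1$, which is negative on $[0,1)$. It is convex on $[l-1,\infty)$, because all roots of its derivatives lie in $(0,l-1)$. So your smoothing-plus-chord argument proves the inequality whenever $s\ge l-1$. You only use it on the event $\{S\ge a\}$ with $a=\delta\mu>l-1$, so the proof goes through once the lemma carries that restriction. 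Also, the factor bound $(k-j)p\le\mu$ uses $jp\ge0$, not $p<1$.
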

\begin{proof}
For $x\in[0,1]$ and $\theta\geq0$, convexity gives
$e^{\theta x}\leq1+(e^\theta-1)x$.
Expanding the product and applying the assumed bounds on the elementary symmetric polynomials yields
\begin{align*}
    \E{e^{\theta S}}
    &\leq\sum_{l=0}^k(e^\theta-1)^l\E{P_l(X_1,\ldots,X_k)} \leq\bigl(1+p(e^\theta-1)\bigr)^k
    \leq\exp\bigl(\mu(e^\theta-1)\bigr),
\end{align*}
where $P_0=1$. Markov's inequality with $\theta=\log\delta$ gives the claim; for $\delta=1$ the bound is immediate.
\end{proof}

We extend this bound to variables that are bounded by $b$. Namely,
\begin{theorem}
\label{thm:scheidelbound}

Let $S=\sum_{i=1}^kX_i$, where $X_i\in[0,b]$ and $b>0$.
Suppose that $0<p<1$ and $x_1,\ldots,x_k\in[0,b]$ satisfy, for every subset $I\subseteq[k]$,
\begin{align}
    \E{\prod_{i\in I}X_i}\leq p^{|I|}\prod_{i\in I}x_i.
    \label{eq:scheidelcond}
\end{align}
Set $\mu=p\sum_{i=1}^kx_i$. If $\mu>0$, then for every $\delta\geq1$,
\[
    \pr{S\geq\delta\mu}
    \leq\exp\left(-\frac{\mu}{b}(\delta\log\delta-\delta+1)\right).
\]
In particular, for $\delta\geq e^2$, this probability is at most
$\exp(-\delta\mu/b)$. If $\mu=0$, then $S=0$ almost surely.

\end{theorem}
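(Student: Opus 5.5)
The plan is to reduce \Cref{thm:scheidelbound} to \Cref{thm:sss+} by rescaling. Set $Y_i\coloneqq X_i/b\in[0,1]$ and $y_i\coloneqq x_i/b\in[0,1]$. Condition \eqref{eq:scheidelcond} then reads $\E{\prod_{i\in I}Y_i}\leq p^{|I|}\prod_{i\in I}y_i$ for every $I\subseteq[k]$. The rescaled mean is $\mu/b=p\sum_i y_i$, and the event $\{S\geq\delta\mu\}$ equals $\{\sum_i Y_i\geq\delta\mu/b\}$. So it suffices to prove the bound for $[0,1]$-valued variables with heterogeneous weights $y_i$, with $\mu$ replaced by $\mu/b$.

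I would not invoke \Cref{thm:sss+} literally, because its hypothesis uses the homogeneous bound $\binom kl p^l$. Instead I will rerun its short proof with weights. For $\theta\geq0$, convexity gives $e^{\theta Y_i}\leq1+(e^\theta-1)Y_i$. Multiplying these bounds and expanding the product, every term is a nonnegative coefficient $(e^\theta-1)^{|I|}$ times $\prod_{i\in I}Y_i$. Taking expectations termwise and applying the hypothesis gives
\[
\E{e^{\theta\sum_iY_i}}\leq\sum_{I\subseteq[k]}(e^\theta-1)^{|I|}p^{|I|}\prod_{i\in I}y_i=\prod_{i=1}^k\bigl(1+p(e^\theta-1)y_i\bigr)\leq\exp\bigl((e^\theta-1)\mu/b\bigr).
\]
Markov's inequality then yields $\pr{S\geq\delta\mu}\leq\exp((e^\theta-1)\mu/b-\theta\delta\mu/b)$. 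Choosing $\theta=\log\delta$, which is at least $0$ since $\delta\geq1$, gives the exponent $-(\mu/b)(\delta\log\delta-\delta+1)$.

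For the ``in particular'' clause, I need $\delta\log\delta-\delta+1\geq\delta$ when $\delta\geq e^2$. This holds because $\log\delta\geq2$, so $\delta\log\delta-\delta+1\geq\delta+1>\delta$.

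For the degenerate case $\mu=0$: since $p>0$, every $x_i$ must be $0$. Applying the hypothesis with $I=\{i\}$ gives $\E{X_i}\leq0$, and because $X_i\geq0$ this forces $X_i=0$ almost surely, so $S=0$ almost surely.

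The only step needing care is the weighted expansion, that is, checking that all coefficients are nonnegative so that the hypothesis can be applied termwise. This is immediate because $e^\theta-1\geq0$ for $\theta\geq0$.
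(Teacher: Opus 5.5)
Your proposal is correct and follows essentially the same route as the paper. The paper also rescales by $b$, uses the convexity bound $e^{\theta x}\leq 1+(e^\theta-1)x$, and expands the product so the subset hypothesis applies termwise with nonnegative coefficients, keeping the weights $x_i$. It then applies Markov with $\theta=\log\delta$, checks $\delta\log\delta-\delta+1\geq\delta$ for $\delta\geq e^2$, and handles $\mu=0$ through the singleton case of the hypothesis.
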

\begin{proof}

We use the same argument as for \Cref{thm:sss+}, keeping the individual values $x_i$ in the product.
Let $X'_i=X_i/b\in[0,1]$. For $\theta\geq0$, convexity and \Cref{eq:scheidelcond} give
\begin{align*}
    \E{e^{\theta S/b}}
    &\leq\E{\prod_{i=1}^k\left(1+(e^\theta-1)X_i/b\right)} =\sum_{I\subseteq[k]}\left(\frac{e^\theta-1}{b}\right)^{|I|}
       \E{\prod_{i\in I}X_i}\\
    &\leq\prod_{i=1}^k\left(1+\frac{p x_i}{b}(e^\theta-1)\right)
    \leq\exp\left(\frac{\mu}{b}(e^\theta-1)\right).
\end{align*}
Here, all coefficients are nonnegative, so the product assumption applies term by term. In particular, the parameter $\mu$ uses the sum over all $k$ values $x_i$.
For $\mu>0$ and $\delta>1$, Markov's inequality with $\theta=\log\delta$ gives
\[
    \pr{S\geq\delta\mu}
    \leq\exp\left(-\frac{\delta\mu}{b}\log\delta
                        +\frac{\mu}{b}(\delta-1)\right).
\]
This is the desired bound. If $\delta\geq e^2$, then
$\delta\log\delta-\delta+1\geq\delta$.
Finally, if $\mu=0$, the singleton cases of \Cref{eq:scheidelcond} imply $\E{X_i}=0$ for every $i$, so all variables vanish almost surely.

\end{proof}

The preceding bound is useful when the required product estimate holds for
arbitrary subsets of the variables.
For the number of surviving opinions, however, we obtain the relevant product
estimate only for opinions that are sufficiently far apart in the opinion
graph.
The following lemma turns such local product estimates into a global
lower-tail bound.
More precisely, we introduce a graph $H$ whose independent sets are exactly
the sets for which the product estimate is available.
A proper coloring of $H$ partitions the variables into classes on which the
standard moment-generating-function argument applies.
The dependence between different color classes costs a factor of
$\chi(H)$ in the exponent.

\graphchernoff*

\begin{proof}
We follow the textbook proof for the Chernoff bound, but with the restriction of the graph $H$.
Fix a proper coloring of $H$ with color classes
$C_1,\ldots,C_\kappa$, and let $Y_i:=1-X_i$ be the indicator that $X_i=0$.
Every subset of a color class is an independent set of $H$.
Note that for a binary random variable $Y \in \{0,1\}$ and every
$\theta\geq 0$, we have
\begin{align*}
    \exp{(\theta Y)} = \begin{cases}
        1 & \textbf{if } Y =0\\
        \exp{\theta} & \textbf{if } Y =1.
    \end{cases}
\end{align*}
Therefore $ \exp{(\theta Y)} = 1+(e^\theta-1)Y$.
Thus, for every
$\theta\geq 0$, combining this fact and
\eqref{eq:local-product-assumption} gives
\begin{align*}
    \E{
        \exp\left(\theta\sum_{i\in C_r}Y_i\right)
    }
    &=
    \E{
        \prod_{i\in C_r} \exp\left(\theta Y_i\right)
    }
    =
    \E{
        \prod_{i\in C_r}\bigl(1+(e^\theta-1)Y_i\bigr)
    }
    =
    \sum_{A\subseteq C_r}
        (e^\theta-1)^{|A|}
        \E{\prod_{i\in A}Y_i}\\
    &\leq
    \sum_{A\subseteq C_r}
        (e^\theta-1)^{|A|}
        \prod_{i\in A}(1-p_i)
    =
    \prod_{i\in C_r}
        \bigl(1+(e^\theta-1)(1-p_i)\bigr).
\end{align*}
Here, we also used that $\prod_{i \in C_r} (1 + Y'_i) = \sum_{A \subseteq C_r}\prod_{i \in A} Y'_i$ where $Y'_i = (e^\theta-1)Y_i$.
All coefficients in the expansion are non-negative, which is the reason why
the local product assumption can be applied term by term.
Since $X_i=1-Y_i$, the preceding inequality implies
\begin{align*}
    \E{
        \exp\left(-\theta\sum_{i\in C_r}X_i\right)
    }
    &\leq
    \prod_{i\in C_r}
        \bigl(1-p_i(1-e^{-\theta})\bigr)
    \leq
    \exp\left(
        -(1-e^{-\theta})\sum_{i\in C_r}p_i
    \right),
\end{align*}
where we used $1-z\leq e^{-z}$ in the last step.
Next, we consider all $\kappa$ color classes.
The color classes are not necessarily independent of each other.  We handle
this using Hölder's inequality.
For every $\lambda\geq 0$,
\begin{align*}
    \E{e^{-\lambda Z}}
    &=
    \E{
        \prod_{r=1}^{\kappa}
        \exp\left(-\lambda\sum_{i\in C_r}X_i\right)
    }
    \leq
    \prod_{r=1}^{\kappa}
    \left(
        \E{
            \exp\left(-\kappa\lambda
                \sum_{i\in C_r}X_i
            \right)
        }
    \right)^{1/\kappa}
    \leq
    \exp\left(
        -\frac{1-e^{-\kappa\lambda}}{\kappa}\mu
    \right).
\end{align*}

If $ Z\leq (1-\delta)\mu$, then
$e^{-\lambda Z}\geq e^{-\lambda(1-\delta)\mu}$ for any $\lambda \geq 0$.
Markov's inequality therefore gives
\begin{align*}
    \pr{Z\leq(1-\delta)\mu} &\leq \pr{\lambda Z\leq \lambda(1-\delta)\mu} = \pr{e^{-\lambda Z}\geq e^{-\lambda(1-\delta)\mu}} \leq \frac{\E{e^{-\lambda Z}}}{e^{-\lambda(1-\delta)\mu}}
\end{align*}
Using our previously computed bound for $\E{e^{-\lambda Z}}$ then gives
\begin{align*}
    \pr{Z\leq(1-\delta)\mu} \leq \frac{e^{-\frac{1-e^{-\kappa\lambda}}{\kappa}\mu} }{e^{-\lambda(1-\delta)\mu}}
    =
    \exp\left(
        \lambda(1-\delta)\mu
        -\frac{1-e^{-\kappa\lambda}}{\kappa}\mu
    \right).
\end{align*}
It remains to find the right choice for $\lambda$.
For $\delta=0$, the bound is trivial. For $\delta=1$, letting $\lambda\to\infty$ in the preceding bound gives $\pr{Z=0}\leq\exp(-\mu/\kappa)\leq\exp(-\mu/(2\kappa))$.
For $0<\delta<1$, we choose $\lambda=-\frac{1}{\kappa}\log(1-\delta)$ so that  $e^{-\kappa\lambda}=1-\delta$.
Substituting this value proves
\begin{align*}
    \pr{Z\leq(1-\delta)\mu}
    \leq
    \exp\left(
        -\frac{\mu}{\kappa}
        \bigl(\delta+(1-\delta)\log(1-\delta)\bigr)
    \right).
\end{align*}
As $-\log(1-\delta) \geq \delta$ for all $\delta \in (0,1)$, it holds
\begin{align*}
      \delta+(1-\delta)\log(1-\delta)
    = \int_0^\delta -\log(1-x)\,dx
    \geq \int_0^\delta x\,dx
    = \frac{\delta^2}{2},
\end{align*}
which proves the theorem.
\end{proof}

\subsection{Lower Bound for the convergene of Gambler's Ruin.}

\begin{lemma}
\label{lem:gr_tail_lower}
Consider the fair gambler's ruin chain $(X(t))_{t\ge 0}$ on
$\{0,1,\dots,n\}$ with absorbing states $\{0,n\}$.
Let
\[
T \coloneqq \min\{t\ge 0 : X(t) \in \{0,n\}\}
\]
be the stopping time.
If $X(0)=1$, then there exists an absolute constant
$c>0$ such that $\pr{T \ge c n^2} = \Omega\!\left(\frac{1}{n}\right)$.
\end{lemma}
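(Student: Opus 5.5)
The plan is to split the event $\{T\ge cn^2\}$ into two stages via the strong Markov property. First the chain must climb from $1$ to $m:=\lfloor n/2\rfloor$ before hitting $0$. Then, started at $m$, it must stay away from both barriers for $cn^2$ more steps. Since $T$ is at least the hitting time of $m$ plus the subsequent exit time from $\{1,\dots,n-1\}$, it suffices to bound each stage from below.

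\emph{Stage 1.} Let $\sigma$ be the first time the chain hits $\{0,m\}$. Stopped at $\sigma$, the chain is a bounded martingale and $\sigma<\infty$ almost surely, so \Cref{thm:doob} (case 2) gives $\E{X(\sigma)}=1$. Hence
\[
\pr{X(\sigma)=m}=\frac{1}{m}=\Omega\left(\frac1n\right).
\]

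\emph{Stage 2.} Restart at time $\sigma$ on the event $X(\sigma)=m$. Before exit, the walk is a simple symmetric random walk $M(s)=X(\sigma+s)-m$ with $M(0)=0$. Exiting $\{1,\dots,n-1\}$ requires $|M(s)|\ge m-1\ge n/4$ for $n$ large enough. Kolmogorov's maximal inequality (\Cref{thm:kolmogorov}) together with $\E{M(t)^2}=t$ gives
\[
\pr{\max_{s\le t}|M(s)|\ge n/4}\le \frac{16t}{n^2}.
\]
Choosing $t=cn^2$ with $c=1/32$ makes this probability at most $1/2$. Since the exit time can only be smaller than the free walk's exit time, we may use the unstopped walk for this bound.

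Multiplying the two stages yields $\pr{T\ge cn^2}\ge \frac{1}{m}\cdot\frac12=\Omega(1/n)$. Small $n$ are absorbed into the constant, since $T\ge1$ always holds and only $n\ge 8$, say, needs the argument. No step is really an obstacle. The only care needed is the strong Markov restart at $\sigma$ and using the maximal inequality rather than a fixed-time bound.
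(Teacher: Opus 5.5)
Your proposal is correct and matches the paper's proof step for step. Both arguments use optional stopping at the hitting time of $\{0,m\}$ with $m=\lfloor n/2\rfloor$ to get the factor $1/m$, restart the walk at $m$ via the strong Markov property, and then apply Kolmogorov's maximal inequality with $\mathbf{E}[S(t)^2]=t$ to show that it avoids both barriers for $\Theta(n^2)$ steps with constant probability. Only the constants differ (the paper uses displacement $m$ and $t=m^2/4$, you use $n/4$ and $t=n^2/32$); exiting actually requires $|M(s)|\ge m$, so your $m-1$ is a harmless weakening.
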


\begin{proof}
\newcommand{\prm}[1]{\mathbf{Pr}_m\left[#1\right]}
For $a\in\{0,1,\dots,n\}$ let
$\tau_a \coloneqq \min\{t\ge 0 : X(t)=a\}$.
Set $m\coloneqq \lfloor n/2 \rfloor$.

Since $\tau_0\wedge\tau_m$ is almost surely finite and the stopped martingale is bounded between $0$ and $m$, applying optional stopping at $(\tau_0\wedge\tau_m)\wedge r$ and then bounded convergence as $r\to\infty$ yields

\[
\E{X(\tau_0\wedge \tau_m)} = \E{X(0)} = 1.
\]
Moreover, $X(\tau_0\wedge \tau_m)\in\{0,m\}$, hence
\[
1 = \E{X(\tau_0\wedge \tau_m)}
  = m \cdot \pr{\tau_m < \tau_0},
\]
and therefore
\[
\pr{\tau_m < \tau_0} = \frac{1}{m}.
\]
Let now $\prm{\cdot}$ be shorthand for $\pr{\cdot \growingmid X(0)=m}$.
Using the strong Markov property at time $\tau_m$ we obtain, for any $t\ge 0$,
\[
\pr{T \ge t}
\;\ge\;
\pr{\tau_m < \tau_0}\cdot \prm{T \ge t}
\;=\;
\frac{1}{m}\cdot \prm{T \ge t}.
\]
It remains to show that $\prm{T \ge t}$ is bounded below by a positive
constant for $t=\Theta(n^2)$.

Consider a simple random walk $(S(t))_{t\ge 0}$ on $\mathbb{Z}$ with $S(0)=0$.
We couple it to the gambler's ruin chain started at $m$ so that
$S(t) = X(t) - m$ holds for all $t<T$.
If $T \le t$, then by time $t$ the chain must have hit either $0$ or $n$,
and thus $|X(s)-m|\ge m$ for some $s\le t$.
Consequently,
\[
\prm{T \le t}
\;\le\;
\pr{\max_{0\le s\le t} |S(s)| \ge m}.
\]
By Kolmogorov's maximal inequality (see \Cref{thm:kolmogorov}) applied to the martingale $(S(t))_{t\ge 0}$,
\[
\pr{\max_{0\le s\le t} |S(s)| \ge m}
\;\le\;
\frac{\E{S(t)^2}}{m^2}.
\]
Since $\E{S(t)^2}=t$ for simple random walk on $\mathbb{Z}$, we obtain
\[
\prm{T \le t} \le \frac{t}{m^2}.
\]
Choosing $t \coloneqq \frac{1}{4}m^2$ gives $\prm{T \le t}\le 1/4$, and hence
\[
\prm{T \ge t}\ge \frac{3}{4}.
\]
Putting everything together,
\[
\pr{\!\left(T \ge \tfrac{1}{4}m^2\right)}
\;\ge\;
\frac{1}{m}\cdot \frac{3}{4}
\;=\;
\Omega\!\left(\frac{1}{n}\right).
\]
Since $m=\Theta(n)$, this implies $\pr{T \ge c n^2}=\Omega(1/n)$ for some
absolute constant $c>0$.
\end{proof}

\subsection{Number of Subtrees of Size $k$}
\label{sec:numbernets}
\numbernets*
\begin{proof}
We use the classical tree counting approach via the Catalan numbers \cite{catalan1838note}.
There are $C_{k-1}$ ordered rooted tree shapes on $k$ vertices, where
\[
    C_{k-1}=\frac{1}{k}\binom{2k-2}{k-1}\leq4^{k-1}.
\]
Fix one such shape and map its root to $i$.
We embed the remaining vertices in breadth-first order. Each non-root vertex must be mapped to a neighbor of its parent, so it has at most $\Delta$ choices.
This gives at most $\Delta^{k-1}$ maps. Some maps identify vertices and do not yield a subtree, but counting these maps only increases our upper bound.
Conversely, every rooted subtree can be represented by at least one ordered rooted tree shape and an injective map of this form.
Therefore, for $\Delta\geq1$, the number of subtrees is at most
\[
    C_{k-1}\Delta^{k-1}\leq(4\Delta)^{k-1}
    \leq(2e^3\Delta)^k.
\]
This also covers $k=1$. The same bound applies to connected vertex sets, since each such set has a spanning tree rooted at $i$.
\end{proof}

\clearpage

\section{A New Analysis for Asynchronous Pull Voting on Graphs}
\label{apx:voting}
In this section, we show that our potential can also be used to analyze a simple variant of the standard pull voting process on regular graphs.
Consider the following pull voting process \texttt{AsyncVoting} on a connected $d$-regular graph $G=(V,E)$ with $n=|V|\geq2$ and $E\subseteq\binom V2$:
In each step $t \geq 0$, a pair of agents $\{u,v\} \in E$ is picked uniformly at random and then, with probability $\frac{1}{2}$ each, $u$ adopts the opinion of $v$ or vice versa.
On a regular graph, this is equivalent to first picking $u$ uniformly at random and then letting this agent adopt the opinion of a neighbor $v$ chosen uniformly at random.
We show that this process converges in $O(\tfrac{n^2}{\Phi_G}\log n)$ steps, w.h.p., where $\Phi_G$ is the conductance of $G$ defined as:
\begin{align*}
    \Phi_G \coloneqq \min_{\substack{S \subset V \\ 0<|S| \leq n/2}} \frac{|\{\{u,v\} \in E \mid u \in S, v \in V \setminus S \}|}{d \cdot |S|}.
\end{align*}
Formally, we show
\begin{theorem}
Let $G=(V,E)$ be a connected $d$-regular graph on $n\geq2$ vertices with conductance $\Phi_G$.
Then, for any initial configuration \texttt{AsyncVoting} converges in $O(\nicefrac{n^2}{\Phi_G}\log n)$ steps w.h.p.
\end{theorem}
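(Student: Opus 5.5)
The plan is to treat \texttt{AsyncVoting} as a special case of the opinion-graph framework. The agents are the vertices of $G$, and the opinions form a complete opinion graph, so every pair of distinct opinions is compatible. The communication structure, however, is now $G$ rather than the complete graph.

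We use the potential
\[
    \Psi(t)=\bigl|\{\{u,v\}\in E: o_u(t)\neq o_v(t)\}\bigr|,
\]
the number of discordant edges of $G$. The process has converged exactly when $\Psi(t)=0$, since $G$ is connected.

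\textbf{Step 1: drift per discordant step.} The first step is the analogue of \Cref{lemma:new_potential}: conditional on a discordant edge $\{u,v\}$ being chosen, $\Psi$ drops by one in expectation. Say $u$ copies $v$ with probability $1/2$, and $v$ copies $u$ otherwise. When $u$ copies, the edge $\{u,v\}$ becomes concordant. For each other neighbor $w$ of $u$, the edge $\{u,w\}$ changes status according to whether $o_w$ equals $o_u$ or $o_v$. The symmetric move by $v$ contributes the mirror terms.

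Regularity is what makes the two halves balance. Let $a_u$ be the number of neighbors of $u$ (other than $v$) holding $o_v$, and $b_u$ the number holding $o_u$. Then the change when $u$ copies is $-1-a_u+b_u$. The change when $v$ copies is $-1-a_v+b_v$, with the roles defined analogously.

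A direct cancellation would need $b_u+b_v=a_u+a_v$, which is false in general. So the exact drift has to be computed more carefully. It is cleaner to sum over all edges, as in the classical voter-model argument. Picking a uniformly random vertex $u$ and then a uniformly random neighbor is a symmetric random-walk update. By regularity, the expected change of $\Psi$ summed over the choices is $-(\text{number of discordant edges})\cdot\tfrac{2}{|E|}$ plus cancelling terms. I expect the exact identity to be
\[
    \E{\Psi(t+1)\mid\mathcal F(t)}\le\Psi(t)-\Psi(t)/|E|.
\]
If only a weaker drift comes out, I would fall back on the standard martingale route: for two opinions, $S_A(t)$ is a martingale (by regularity) whose one-step variance is $\Psi(t)/|E|$.

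\textbf{Step 2: conductance lower bound.} Next I relate $\Psi$ to the opinion sizes via conductance. Let $A$ be an opinion class, or a union of classes chosen so that $|A|\le n/2$. Then the number of discordant edges satisfies $\Psi(t)\ge\Phi_G\, d\min\{S_A,n-S_A\}$.

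\textbf{Step 3: time to consensus.} Following the Cooper--Rivera style argument, I consider $f(t)=\sum_i S_i(t)\,(n-S_i(t))$, or the variance of a single opinion's support. Each step increases its quadratic variation by about $\Psi(t)/|E|\ge \Phi_G\, d\min(\cdot)/(nd/2)$. The concave function $g(x)=\sqrt{x(n-x)}$ then drifts down multiplicatively, with
\[
    \E{\Delta g}\le -\Omega\!\left(\frac{\Phi_G}{n^{2}}\right)g.
\]
Applying \Cref{driftthm:multiplicative} and Markov's inequality, exactly as in the proof of \Cref{thm:newPotential}, each opinion boundary is resolved within $O(n^2\log n/\Phi_G)$ steps w.h.p. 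A union bound over at most $n$ opinions then finishes the proof.

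\textbf{Main obstacle.} I expect Step 3 to be the hardest part: obtaining a genuinely multiplicative drift from a variance bound, with the factor $\Phi_G$ and no worse $n$-dependence. I would first try $g(x)=\sqrt{\min(x,n-x)}$ together with a second-order Taylor expansion.
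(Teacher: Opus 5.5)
There is a genuine gap, and it starts in Step~1: the inequality you expect there is false. With two opinions, let $\mathrm{diff}(u)$ be the number of discordant neighbours of $u$. When $u$ copies a discordant neighbour $v$, your $\Psi$ changes by $d-2\,\mathrm{diff}(u)$. Hence
\[
\E{\Psi(t+1)-\Psi(t)\mid\mathcal F(t)}=\frac{1}{nd}\sum_{u}\mathrm{diff}(u)\bigl(d-2\,\mathrm{diff}(u)\bigr),
\]
which is positive whenever $\sum_u\mathrm{diff}(u)^2<\tfrac d2\sum_u\mathrm{diff}(u)$. Concretely, take the hypercube $Q_d$ with $d\geq3$ and split the opinions along one coordinate. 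Every vertex then has exactly one discordant neighbour, and every productive step increases the number of discordant edges by exactly $d-2$. So the discordant-edge count is not a supermartingale at all, let alone one with drift $-\Psi/|E|$.

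The paper's potential behaves well because of its product form $S_iS_j$, which depends only on the supports. It equals the number of productive pairs only because the main model interacts on the complete graph. On a sparse communication graph these two quantities separate, and it is the product, not the pair count, that keeps the drift identity.

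Step~3 never actually uses Step~1; it only needs $\Psi/|E|$ as the probability of a discordant step. But its multiplicative drift is left as an open obstacle, and the idea that closes it is the same product potential. For two opinions, set $h(t)=S_A(t)\bigl(n-S_A(t)\bigr)$. A discordant step moves $S_A$ by $\pm1$ with equal probability, and
\[
\tfrac12\bigl[(s+1)(n-s-1)+(s-1)(n-s+1)\bigr]=s(n-s)-1,
\]
so $\E{h(t+1)\mid\mathcal F(t)}=h(t)-\Psi(t)/|E|$. Your Step~2 gives $\Psi(t)/|E|\geq2\Phi_G\min\{s,n-s\}/n\geq2\Phi_G h(t)/n^2$. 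This is directly a multiplicative drift with rate $2\Phi_G/n^2$, with no square root or Taylor expansion needed. Iterating gives $\E{h(t)}\leq n^2e^{-2\Phi_Gt/n^2}$. Since $h$ is a positive integer before consensus, Markov's inequality yields the w.h.p.\ bound. Projecting each initial opinion to a binary one and taking a union bound finishes the argument; this is exactly the paper's proof.

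Your $\sqrt{\cdot}$ route can be rescued, since concavity gives $\E{\Delta\sqrt h}\leq\E{\Delta h}/(2\sqrt h)$, but it is a detour. Your $f=\sum_iS_i(n-S_i)$ would even handle all opinions at once, because $\E{\Delta f}=-2\Psi/|E|\leq-2\Phi_Gf/n^2$. As written, however, the proposal asserts a false Step~1 and does not carry out the one step that actually proves the theorem.
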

Cooper and Rivera~\cite[Example 11]{CooperR16} prove an expected bound of $O(n^2/\Phi_G)$ for asynchronous pull voting on regular graphs. Applying Markov's inequality in consecutive blocks gives the same high probability bound as above. Berenbrink et al.~\cite{BerenbrinkGKM16} prove a related bound for synchronous lazy voting, where time is measured in rounds in which all agents update. Both papers use variance-based arguments.
Our potential offers an alternative and arguably even simpler proof.

\begin{proof}

First, assume that there are two opinions $0$ and $1$.
We denote their supports by $S_0(t)$ and $S_1(t)$ and define
$\Psi(\Vec S(t))=S_0(t)S_1(t)$.
Let $\mathcal F(t)$ be the natural filtration, which records the opinions of all vertices and the previous interactions.
Write $\mathcal P(t)$ for the indicator that step $t$ selects two agents with different opinions.
The same calculation as in \Cref{lemma:new_potential} gives
\[
    \E{\Psi(\Vec S(t+1))\mid\mathcal F(t)}
    =\Psi(\Vec S(t))-\pr{\mathcal P(t)=1\mid\mathcal F(t)}.
\]
Before consensus, let $C_t\subset V$ be the smaller opinion class, and write $s=|C_t|\in\{1,\ldots,\lfloor n/2\rfloor\}$.
Let $\partial C_t$ denote the edges with one endpoint in $C_t$ and one outside.
Since $|E|=dn/2$, the definition of conductance gives
\begin{align*}
    \pr{\mathcal P(t)=1\mid\mathcal F(t)}
    &=\frac{|\partial C_t|}{|E|}
      =\frac{2|\partial C_t|}{dn} \geq\frac{2\Phi_Gs}{n}
      \geq\frac{2\Phi_G}{n^2}\Psi(\Vec S(t)),
\end{align*}
where the last inequality uses $\Psi(\Vec S(t))=s(n-s)\leq ns$.
Consequently,
\[
    \E{\Psi(\Vec S(t+1))\mid\mathcal F(t)}
    \leq\left(1-\frac{2\Phi_G}{n^2}\right)\Psi(\Vec S(t)).
\]
This also holds after consensus, when the potential remains zero.
Taking expectations and iterating yields
\[
    \E{\Psi(\Vec S(t))}
    \leq\Psi(\vec s_0)\left(1-\frac{2\Phi_G}{n^2}\right)^t
    \leq n^2\exp\left(-\frac{2\Phi_Gt}{n^2}\right).
\]
For any fixed $c>1$, choose
$t^*=\lceil(c+3)n^2\log n/(2\Phi_G)\rceil$.
Then the expected potential is at most $n^{-(c+1)}$.
As the potential is a positive integer before consensus, Markov's inequality bounds the probability that the binary process has not reached consensus by time $t^*$ by $n^{-(c+1)}$.

Finally, consider an arbitrary initial configuration with $q\leq n$ opinions.
For each initial opinion $i$, assign binary opinion $1$ to its agents and $0$ to all other agents.
Under the same edge selections and copying directions, this projection follows exactly the binary process just analyzed.
A union bound over the $q$ projections shows that, with probability at least $1-n^{-c}$, all projections have reached consensus by time $t^*$.
At that time, choose any opinion still present. Its projection contains a $1$ and has reached consensus, so every agent holds that opinion.
Thus, the original process has also reached consensus, proving the theorem.

\end{proof}

\printbibliography

\end{document}